\newcommand{\eqdef}{:=}
\newcommand{\reqdef}{=:}
\newcommand{\E}{\mathsf{E}}		
\newcommand{\V}{\mathsf{V}}			
\newcommand{\stdset}[1]{\mathbb{#1}}	
\newcommand{\set}[1]{\mathcal{#1}}		
\renewcommand{\vec}[1]{\bm{#1}}		
\newcommand{\CN}{\mathcal{CN}}			
\newcommand{\herm}{\mathsf{H}}			
\newcommand{\T}{\mathsf{T}}				
\newcommand{\esssup}[1]{\underset{#1}{\mathrm{ess~sup}~}}
\newcommand{\essinf}[1]{\underset{#1}{\mathrm{ess~inf}~}}
\newtheorem{lemma}{Lemma}
\newtheorem{definition}{Definition}
\newtheorem*{theorem}{Main result}
\newtheorem{proposition}{Proposition}
\newtheorem{remark}{Remark}
\newtheorem{corollary}{Corollary}
\begin{document}
\title{A joint channel estimation and beamforming separation principle for massive MIMO systems} 

\author{Lorenzo Miretti,~\IEEEmembership{Member,~IEEE}, Slawomir Sta\'nczak,~\IEEEmembership{Senior Member,~IEEE}, and 
Giuseppe Caire,~\IEEEmembership{Fellow,~IEEE}
\thanks{During the preparation of this article, Lorenzo Miretti was with the Technische Universität Berlin, 10587
Berlin, Germany, and also with the Fraunhofer Institute for Telecommunications Heinrich-Hertz-Institut (HHI), 10587 Berlin, Germany. 
He is now with Ericsson Research Germany.}
\thanks{Slawomir Stanczak is with the Technische Universität Berlin, 10587
Berlin, Germany, and also with the Fraunhofer Institute for Telecommunications Heinrich-Hertz-Institut (HHI), 10587 Berlin, Germany
}
\thanks{Giuseppe Caire is with the Technische Universität Berlin, 10587 Berlin, Germany
}
}

\maketitle
\IEEEpubid{\begin{minipage}{\textwidth}\ \\[12pt] \centering
  This work has been submitted to the IEEE for possible publication.\\ Copyright may be transferred without notice, after which this version may no longer be accessible.
\end{minipage}}


\begin{abstract}
\IEEEpubidadjcol
We demonstrate that separating beamforming (i.e., downlink precoding and uplink combining) and channel estimation in multi-user MIMO wireless systems incurs no loss of optimality under general conditions that apply to a wide variety of models in the literature, including canonical reciprocity-based cellular and cell-free massive MIMO system models. Specifically, we provide conditions under which optimal processing in terms of  ergodic achievable rates can be decomposed into minimum mean-square error (MMSE) channel estimation followed by MMSE beamforming, for both centralized and distributed architectures. Applications of our results are illustrated in terms of concrete examples and numerical simulations.
\end{abstract}

\section{Introduction}
Since its inception in Marzetta's seminal work \cite{marzetta2010noncooperative}, the massive MIMO paradigm has emerged as the \textit{de facto} approach for realizing the  capacity gains of wireless multi-user MIMO systems promised by the classical communication and information theoretic literature \cite{gesbert2007shifting,caire2003broadcast,weingarten2006capacity}. The central innovations underpinning its success are: (i) the use of an excessive number of infrastructure antennas compared to the number of jointly transmitted data streams within the same time-frequency resource; and (ii) the use of reciprocity-based channel estimation in time-division duplexing mode \cite{marzetta2015intro}. These innovations enable significant spatial multiplexing gains through practical signal processing algorithms, and are now routinely incorporated (in some form) into the high-end radio portfolios of all major network vendors. From a theoretical perspective, these innovations have been accompanied by the development of an elegant framework for the study of many crucial system aspects such as the impact of imperfect channel estimation \cite{yin2014dealing,bjornson2016howmany,bjornson2018unlimited}. Due to its generality, this framework has also been successfully applied to distributed variants of the original massive MIMO paradigm, such as the so-called \textit{cell-free} massive MIMO paradigm \cite{ngo2017cell}, which is the most recent version of the \textit{network} MIMO paradigm \cite{gesbert2010multicell}. At present, massive MIMO theory has reached a remarkable maturity, with many textbooks providing accessible foundations and powerful design guidelines \cite{marzetta2016fundamentals,massivemimobook,heath2018foundations,demir2021}. However, the current theory still presents important limitations and open questions. 

\subsection{Joint channel estimation and beamforming}
\IEEEpubidadjcol
One potential limitation, which is the focus of this study, is that pilot-based channel estimation and beamforming design are performed in a somewhat disjoint fashion, i.e., as separate optimization problems. Specifically, the current literature imposes processing architectures that separate the channel estimation and beamforming functions \textit{a priori}. While this approach has the obvious advantage of simplicity (which we certainly do not dismiss), it has the drawback of neglecting the potential benefit of a joint design of the two processing functions, or better yet, of optimal beamformers that take the received raw pilot signals directly as input. Interestingly, the joint approach aligns well with the recent trend towards the aggregation of different physical layer functions \cite{hoydis2021toward}, enabled by the rapid emergence of modern artificial intelligence techniques capable of handling seemingly intractable joint optimization problems. However, to justify the corresponding research and development effort, it is necessary to advance the current massive MIMO theoretical framework to quantify the potential performance gains (if any) of joint channel estimation and beamforming.

\subsection{Main contribution}
In this study, we contribute to closing the above gap in the literature by providing general conditions under which separating the channel estimation and beamforming functions incurs no loss of optimality. Furthermore, we show that when these conditions are met, the two functions can be individually optimized according to a minimum mean-squared error (MMSE) criterion. Informally, we derive the following joint channel estimation and beamforming separation principle.
\begin{theorem}[Informal]
In canonical massive MIMO system models, jointly optimal channel estimation and beamforming in terms of ergodic achievable rates can be decomposed into MMSE channel estimation followed by MMSE beamforming.
\end{theorem}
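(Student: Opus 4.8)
The plan is to reduce the joint optimization to a short chain of elementary steps: (i) write the ergodic achievable rate of user $k$ as the expectation, over the raw pilot observation $\vec{Y}$ fed to the combiner, of a conditional SINR that is a generalized Rayleigh quotient in the receive combiner $\vec{v}_k=\vec{v}_k(\vec{Y})$; (ii) note that this objective is maximized by maximizing the conditional SINR \emph{pointwise} in $\vec{Y}$; (iii) solve the pointwise Rayleigh-quotient problem in closed form and recognize the optimizer as an MMSE combiner whose coefficients are conditional channel moments; and (iv) use the Gaussian structure of the canonical models to show that these conditional moments, hence the optimal combiner, depend on $\vec{Y}$ only through the MMSE channel estimate.

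Concretely, first I would fix the rate expression: with the standard hardening (``use-and-then-forget''-type) bound, the uplink rate of user $k$ is $R_k=\E[\log(1+\gamma_k)]$ with
\[
\gamma_k \;=\; \frac{\bigl|\vec{v}_k^\herm\,\E[\vec{h}_k\mid\vec{Y}]\bigr|^2}{\vec{v}_k^\herm\Bigl(\textstyle\sum_{j}\E[\vec{h}_j\vec{h}_j^\herm\mid\vec{Y}]-\E[\vec{h}_k\mid\vec{Y}]\E[\vec{h}_k\mid\vec{Y}]^\herm+\sigma^2\vec{I}\Bigr)\vec{v}_k},
\]
whose coefficient matrices are $\vec{Y}$-measurable. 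Since $\vec{v}_k$ is an arbitrary measurable function of $\vec{Y}$ and $\log$ is increasing, the joint optimum is attained by any measurable selection maximizing $\gamma_k$ for almost every $\vec{Y}$; measurability of such a selection is routine because the optimizer is given by an explicit continuous formula. That pointwise problem is a generalized Rayleigh quotient, whose maximizer is, up to an irrelevant scalar, $\vec{v}_k^\star(\vec{Y})=\bigl(\sum_j\E[\vec{h}_j\vec{h}_j^\herm\mid\vec{Y}]+\sigma^2\vec{I}\bigr)^{-1}\E[\vec{h}_k\mid\vec{Y}]$, the rank-one correction being absorbed via the matrix-inversion lemma without changing the direction.

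Next I would invoke the defining feature of the canonical models — channels and pilot noise are jointly Gaussian — so that, conditioned on $\vec{Y}$, the channels are Gaussian with conditional mean equal to the MMSE estimate $\hat{\vec{h}}_j\eqdef\E[\vec{h}_j\mid\vec{Y}]$ (which coincides with the linear MMSE estimate) and \emph{deterministic} conditional (cross-)covariances $\vec{C}_j$. Substituting $\E[\vec{h}_j\vec{h}_j^\herm\mid\vec{Y}]=\hat{\vec{h}}_j\hat{\vec{h}}_j^\herm+\vec{C}_j$ into $\vec{v}_k^\star$ shows that the optimal combiner is a fixed function of $(\hat{\vec{h}}_1,\dots,\hat{\vec{h}}_K)$ alone, and that this function is exactly the MMSE combiner for the estimated-channel model. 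Thus jointly optimal processing factors as ``MMSE channel estimation, then MMSE beamforming.'' The downlink statement then follows by uplink--downlink duality, and the distributed statement by repeating steps (ii)--(iv) with $\vec{Y}$ replaced by the \emph{local} observation at each access point, whose conditional Gaussianity again makes the local MMSE estimate a sufficient statistic for the optimal local combiner.

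The main obstacle I anticipate is step (i): pinning down the appropriate ergodic achievable-rate functional and showing both that optimizing it is the right target and that it genuinely takes the ``expected conditional SINR'' form for every admissible — possibly nonlinear — processing map, so that no alternative decoder or rate bound can beat the separated scheme. A secondary point is to articulate precisely that the reduction of conditional second moments to ``conditional mean plus deterministic covariance'' is what delineates the class of canonical models, and to dispatch the measurability/integrability details that make the pointwise optimization rigorous.
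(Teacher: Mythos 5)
Your centralized argument is essentially the classical one and does line up with the paper's Proposition~\ref{prop:centr} (conditional second moments $=\hat{\vec{h}}_j\hat{\vec{h}}_j^\herm+$ deterministic error covariance under joint Gaussianity, hence the optimizer is MMSE beamforming fed by MMSE estimates), but two of your steps contain genuine gaps. First, you conflate the two rate bounds: the functional you write, $\E[\log(1+\gamma_k(\vec{Y}))]$ with conditional moments, is the coherent-decoding bound \eqref{eq:coh} with $U_k=\vec{Y}$, not the UatF/hardening bound \eqref{eq:uatf}, whose moments are unconditional and which has no expectation outside the log. For the coherent bound your pointwise Rayleigh-quotient argument is fine (it is scale-invariant per realization), but for the hardening bound the ``irrelevant scalar'' is not irrelevant: a per-realization phase/scale choice that maximizes each conditional SINR can destroy $|\E[g_{k,k}]|^2$ and thus be strictly suboptimal for UatF --- this is exactly the point made after Proposition~\ref{prop:MMSE-SINR}. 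The paper closes this by working with (conditional and unconditional) MSE minimization, whose unique minimizer fixes the scaling, and by relating the two via Propositions~\ref{prop:MMSE-SINR}, \ref{prop:MMSE-SINR-coh} and \ref{prop:ess}; your proposal has no substitute for this step, and it is needed even in the centralized case to get the UatF/hardening claim and, through duality, the downlink claim.

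Second, the distributed case does not follow by ``repeating (ii)--(iv) with the local observation.'' Under the information constraint \eqref{eq:distributedCSI} the entries of $\vec{v}_k$ computed at different units are coupled in both the UatF SINR and the MSE, so pointwise maximization of a per-unit conditional SINR is neither feasible as a global conditional optimization (the global conditional optimizer $\vec{v}_k^\star(\vec{Y})$ violates the constraint) nor optimal as a purely local one. The paper instead treats \eqref{eq:MSE_prob} as a team decision problem, derives necessary and sufficient stationarity conditions (Proposition~\ref{prop:stationarity}), and only under an additional Markov/conditional-independence assumption obtains the separation of Proposition~\ref{prop:distr}, whose solution is local MMSE estimation and local MMSE beamforming \emph{plus} a coordination/correction stage (e.g., the statistical stage $\vec{c}_{l,k}$ in Proposition~\ref{prop:local}); your construction omits this stage and the assumption that enables it. Relatedly, the coherent-decoding claim in the distributed setting requires $U_k$ to be common beamforming CSI (Corollary~\ref{cor:coh}), and fails for purely local CSI, so the scope of your conclusion is overstated there as well. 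The downlink-by-duality idea matches the paper, but note it also needs the UatF/MSE machinery you skipped (sum-power duality in Proposition~\ref{prop:duality_hard}, common decoder CSI for Proposition~\ref{prop:duality_coh}, and a Lagrangian argument for per-transmitter constraints).
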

More precisely, the above separation principle applies to all multi-cell massive MIMO models in the literature that assume standard pilot-based, Gaussian-distributed measurements \cite[Chapter~3]{massivemimobook} of spatially correlated Gaussian fading channels \cite[Chapter~2]{massivemimobook}, and that measure performance in terms of  ergodic capacity lower bounds \cite[Chapter~4]{massivemimobook} \cite{caire2018ergodic} based on treating interference as noise, linear array processing, and either the popular \textit{use-and-then-forget} (UatF) / \textit{hardening} technique, or the more conventional \textit{coherent decoding} technique. The separation principle also directly applies to all canonical cell-free massive MIMO models based on centralized processing architectures with user-centric clustering \cite{demir2021}. Moreover, when restricted to the UatF/hardening bounding technique, it extends to fairly general distributed processing architectures leveraging local channel measurements at each processing unit and potentially an information-sharing procedure that preserves the Gaussianity of the measurements. Notably, in our terminology, \textit{distributed} processing encompasses purely local processing architectures, as in \cite{ngo2017cell} and \cite{demir2021}, as well as more involved architectures based on sequential \cite{miretti2021team} or delayed \cite{miretti2024delayed} information sharing. An extension to the coherent decoding lower bound for the distributed processing case is also possible, provided that the channel state information (CSI) used for decoding is also used as common information for beamforming. Table~\ref{tab:comparison} summarizes the main cell-free massive MIMO models for which our separation principle applies.
\begin{table}[htbp]
    \centering
        \caption{Summary of beamforming architectures for which the channel estimation - beamforming separation principle applies.}
    \begin{tabular}{|l|c|c|c|c|}
        \hline
        & Centralized & Local & Common CSI & General\\
        \hline
        UatF/hard. & Yes & Yes & Yes & Yes\\
        Coh. dec. & Yes & No & Yes & No \\
        \hline
    \end{tabular}
    \label{tab:comparison}
\end{table}
\vspace{-0.7cm}
\subsection{Organization of the paper}
In Section~\ref{sec:model}, we first provide a unified uplink and downlink system model that encompasses a wide variety of models from the cellular and cell-free massive MIMO literature. Focusing on the uplink, in Section~\ref{sec:uplink}, we discuss how the joint channel estimation and beamforming optimization problem, formulated as an ergodic rate maximization problem, can be cast as an equivalent generalized MMSE problem. Building on the results in Section~\ref{sec:uplink}, and still focusing on the uplink case, in Section~\ref{sec:main}, we establish the main results of our study, i.e., the various forms of the joint channel estimation and beamforming separation principle. Furthermore, in Section~\ref{sec:duality}, we provide conditions under which downlink ergodic rate maximization problems admit a solution over a dual uplink channel, and thus under which the main results in Section~\ref{sec:main} extend to the downlink case. Finally, in Section~\ref{sec:applications} and Section~\ref{sec:simulations}, we illustrate our results with concrete examples of applications. 

\textit{Notation}: We denote by $\stdset{R}_+$ and $\stdset{R}_{++}$ the sets of, respectively, nonnegative and positive reals. Lower and upper case bold letters are used for vectors and matrices respectively, while calligraphic letters are used for sets. Let $(\Omega,\Sigma,\mathbb{P})$ be a probability space. We denote by $\set{C}$ the set of complex-valued random variables, i.e., of measurable functions $\Omega \to \stdset{C}$. Any random quantity $A$ is assumed to belong to a product set $\set{C}^{n\times m}$ for some $n,m \in \stdset{N}$, i.e., to take value in $\set{A}\eqdef\stdset{C}^{n\times m}$ with entries organized as vectors, matrices, or tuples. Given a random quantity $A$, we denote by $\set{F}_A\subseteq \set{C}$ the subset of complex-valued functions of $A$, i.e., of measurable functions $\set{A} \to \stdset{C}$. Note that $F\in \set{F}_A$ is a random variable. Similarly, we define the set $\set{P}$ of nonnegative random variables, i.e., of measurable functions $\Omega \to \stdset{R}_{+}$ and its subset $\set{P}_A\subseteq \set{P}$ of non-negative functions of $A$. The expected value, (element-wise) variance, and differential entropy of a random quantity $A$ are denoted by $\E[A]$, $\V(A)$, and $h(A)$ respectively, and their conditional version given a random quantity $B$ by $\E[A|B]$, $\V(A|B)\eqdef \E[|A|^2|B]-|\E[A|B]|^2$, and $h(A|B)$. All (in)equalities involving random variables should be intended as almost sure (in)equalities within equivalence classes of random variables. Given a (possibly uncountable) set of $\stdset{R}$-valued random variables $\set{B}$,  its \textit{essential} supremum (denoted by $\mathrm{ess~sup}~\set{B}$) is defined as the unique $\stdset{R}$-valued random variable $B$ satisfying 
\begin{enumerate}[(i)]
\item $(\forall A\in \set{B})~B\geq A$,
\item For all $B'$ s.t. $(\forall A\in \set{B})~B'\geq A$, we have $B'\geq B$.
\end{enumerate}
The essential infimum ($\mathrm{ess~inf}~\set{B}$) is similarly defined by reversing all inequalities. Note that the above definition, common in the stochastic processes and control literature (see \cite[Definition~2.5]{barron2003conditional} and references therein), differs from the standard definition in the functional analysis literature \cite{luenberger1997optimization}, which applies to a single random variable, and not to a set. 


\section{System model}\label{sec:model}
We consider a wireless multi-user MIMO communication network composed by $K$ single-antenna users indexed by $\set{K}= \{1,\ldots,K\}$, and a potentially large number $M$ of infrastructure antennas indexed by $\set{M}= \{1,\ldots,M\}$, which can be colocated as in common single-cell massive MIMO system models, or geographically distributed as in more advanced multi-cell and cell-free massive MIMO system models. For both uplink and downlink, we consider transmission over a canonical Gaussian vector channel with fading governed by an arbitrarily distributed channel matrix $\vec{H} = \begin{bmatrix}\vec{h}_1&\ldots & \vec{h}_K \end{bmatrix}\in \set{C}^{M\times K}$, using conventional single-user coding schemes which treat interference as noise, and a linear combining/precoding (beamforming) matrix $\vec{V} = \begin{bmatrix}\vec{v}_1&\ldots & \vec{v}_K \end{bmatrix}\in \set{C}^{M\times K}$ for enhancing signal quality and for mitigating interference. \begin{remark}We recall that $\set{C}$ denotes the set of complex-valued random variables, and hence that $\vec{H}$ and $\vec{V}$ are complex-valued random matrices. The explicit distinction between random and deterministic quantities is particularly important for the presentation of the main results in this study. \end{remark}

\paragraph*{Downlink}
The received signal of user $k\in \set{K}$ is given by
\begin{equation}\label{eq:dl}
y_k = \sum_{j\in \set{K}} \vec{h}_k^\herm \vec{v}_jx_{j} + z_k,
\end{equation}
where $z_k\sim \CN(0,1)$ is additive white Gaussian noise, and $x_j\sim \CN(0,1)$ is the independent data-bearing signal transmitted to user $j\in \set{K}$ using the linear precoding vector (beamformer) $\vec{v}_j \in \set{C}^M$. 

\paragraph*{Uplink}
The received signal at all antennas is given by
\begin{equation}\label{eq:ul}
\vec{r} = \sum_{k\in \set{K}} \sqrt{p_k}\vec{h}_kx_k + \vec{z},
\end{equation}
where $\vec{z}\sim \CN(\vec{0},\vec{I}_M)$ is additive white Gaussian noise, and $x_k\sim \CN(0,1)$ is the independent data-bearing signal transmitted by user $k\in \set{K}$ with power scaling coefficient $p_k \in \set{P}$, where we recall that $\set{P}$ denotes the set of nonnegative random variables. The estimate of the signal of user $k\in\set{K}$ is then given by $y_k = \vec{v}_k^\herm\vec{r}$, where $\vec{v}_k \in \set{C}^M$ is the corresponding linear combining vector (beamformer). For convenience, we also define the power vector $\vec{p}\eqdef [p_1,\ldots,p_K]^\T$.

For both models, we then focus on a well-defined notion of achievable rates for channels with state and receiver CSI \cite{caire1999capacity} \cite[Chapter~7]{el2011network}, often referred to as \textit{ergodic} achievable rates in the context of fading channels \cite{caire2018ergodic}. Specifically, we focus on the tuple of simultaneously achievable rates (in bit/symbol) given by the conditional mutual information  
\begin{equation}\label{eq:mutual_info}
(\forall k \in \set{K})~R_k \eqdef I\left(x_k;y_k \middle|U_k \right),
\end{equation}
where $U_k$ models some side information about $(\vec{H},\vec{V})$ available at the decoder of the signal of user $k \in \set{K}$.

\subsection{Ergodic rate bounds}
Although the mutual information in \eqref{eq:mutual_info} could be rather accurately approximated by means of numerical techniques, many upper and lower bounds have been developed in the literature to simplify simulations and, perhaps most importantly, the optimization of system parameters such as the beamformers and the power control policy. The most popular ones are reviewed below, using the unified compact notation\footnote{With an abuse of notation, in the following we give SINR expressions using the symbol $\frac{a}{b}$ also for the undefined case $(a,b)=(0,0)$, arising for zero-valued uplink beamformers, by tacitly assuming the convention $\frac{0}{0}=0$.}
\begin{equation*}
(\forall (j,k) \in \set{K}^2)~g_{j,k} \eqdef \begin{cases}
\sqrt{p_j}\vec{h}_j^\herm \vec{v}_k &\text{if uplink},\\
\vec{h}_k^\herm \vec{v}_j &\text{if downlink},
\end{cases}
\end{equation*}
\begin{equation*}
(\forall k \in \set{K})~\sigma_k^2 \eqdef \begin{cases}
\|\vec{v}_k\|^2 &\text{if uplink},\\
1 & \text{if downlink}.
\end{cases}
\end{equation*}
Note that we do not normalize the beamformers to $\|\vec{v}_k\|=1$, as this is not an information lossless procedure in general.

The first bound is an upper bound, and it will not be used in our main theoretical results as it is intractable to beamforming optimization under imperfect CSI. Nevertheless, it is a useful reference for evaluating the tightness of lower bounds, and a good estimate of the performance of practical systems \cite{gottsch2023subspace}.
\begin{proposition}[Optimistic ergodic rates]
The following upper bound holds $(\forall k \in \set{K})$
\begin{equation}\label{eq:oer}
\begin{gathered}
R_k \leq R_k^{\mathsf{oer}} \eqdef \E \left[\log\left(1+\mathsf{SINR}_k^{\mathsf{oer}}\right) \right], \\
\mathsf{SINR}_k^{\mathsf{oer}} \eqdef \dfrac{|g_{k,k}|^2 }{\sum_{j\in \set{K}\backslash \{k\}}|g_{j,k}|^2+\sigma_k^2}.
\end{gathered}
\end{equation}
\end{proposition}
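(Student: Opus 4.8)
The plan is to upgrade the decoder's side information $U_k$ to genie-aided knowledge of the full pair $(\vec H,\vec V)$, and then to recognize that, conditioned on $(\vec H,\vec V)$, the link from $x_k$ to $y_k$ is an ordinary scalar complex Gaussian channel with deterministic gains. First I would note that enlarging the observed quantity cannot decrease mutual information, so $R_k = I(x_k;y_k\,|\,U_k)\le I\bigl(x_k;(y_k,\vec H,\vec V)\,|\,U_k\bigr)$; expanding the right-hand side by the chain rule gives $I(x_k;\vec H,\vec V\,|\,U_k) + I(x_k;y_k\,|\,\vec H,\vec V,U_k)$. Since the data symbol $x_k$ is independent of the channel matrix $\vec H$, of the beamforming matrix $\vec V$, and of any side information $U_k$ about them, the first term vanishes, leaving $R_k\le I(x_k;y_k\,|\,\vec H,\vec V,U_k)$.

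It remains to evaluate this last term. In both models, $y_k = g_{k,k}x_k+\sum_{j\in\set K\setminus\{k\}}g_{j,k}x_j+\tilde z_k$, where $\tilde z_k$ is the scalar $z_k$ in the downlink and $\vec v_k^\herm\vec z$ in the uplink, and where every gain $g_{j,k}$ becomes deterministic once $(\vec H,\vec V)$ is fixed. The underlying randomness --- the data symbols $\{x_j\}_{j\in\set K}$ and the AWGN ($z_k$ in the downlink, $\vec z$ in the uplink) --- is jointly circularly-symmetric Gaussian and independent of $(\vec H,\vec V,U_k)$; hence, conditioned on $(\vec H,\vec V,U_k)$, the symbol $x_k\sim\CN(0,1)$ and the aggregate disturbance $w_k\eqdef\sum_{j\neq k}g_{j,k}x_j+\tilde z_k$ stay independent and jointly Gaussian, with $w_k$ of conditional variance $\sum_{j\neq k}|g_{j,k}|^2+\sigma_k^2$ (using $\V(\vec v_k^\herm\vec z\mid\vec V)=\|\vec v_k\|^2$ in the uplink). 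The scalar complex Gaussian channel formula then gives that, conditioned on $(\vec H,\vec V,U_k)$, the mutual information between $x_k$ and $y_k$ equals $\log(1+\mathsf{SINR}_k^{\mathsf{oer}})$ --- the degenerate uplink case $\vec v_k=\vec 0$ being consistent with the convention $\tfrac{0}{0}=0$, since then $y_k\equiv0$. As $\mathsf{SINR}_k^{\mathsf{oer}}$ is a function of $(\vec H,\vec V)$ alone, averaging over $(\vec H,\vec V,U_k)$ yields $I(x_k;y_k\,|\,\vec H,\vec V,U_k)=\E[\log(1+\mathsf{SINR}_k^{\mathsf{oer}})]=R_k^{\mathsf{oer}}$, which together with the first step proves the claim.

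I do not expect a substantial conceptual difficulty; the care needed is essentially measure-theoretic. One must verify that the joint independence of the data symbols, the noise, and $(\vec H,\vec V,U_k)$ carries over to regular conditional laws given $(\vec H,\vec V,U_k)$, so that the per-realization Gaussian-channel computation holds for $\mathbb P$-almost every realization and integrates to the stated expectation; that the conditional mutual informations are well-defined and the enlargement and chain-rule identities apply under the paper's assumption of an \emph{arbitrarily} distributed $\vec H$ (the bound being trivial whenever $R_k^{\mathsf{oer}}=+\infty$); and that these steps remain valid for the possibly randomized side information $U_k$. It is worth stressing that the genie step is the only inequality: conditioned on $(\vec H,\vec V)$ one actually has the \emph{equality} $I(x_k;y_k\,|\,\vec H,\vec V)=R_k^{\mathsf{oer}}$, so the gap to $R_k$ is exactly the penalty for the decoder not knowing the effective channel gains --- which is precisely why this bound is ``optimistic'' and, as the paper notes, unsuitable for beamforming optimization under imperfect CSI.
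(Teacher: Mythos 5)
Your genie-aided argument is correct and is essentially the standard proof the paper defers to (the cited Lemma~1 of \cite{caire2018ergodic}): reveal the effective gains to the decoder, note that the conditional channel is a scalar Gaussian channel with Gaussian interference so the conditional mutual information equals $\log(1+\mathsf{SINR}_k^{\mathsf{oer}})$ exactly, and the only inequality is the genie step. One tiny refinement: in the uplink the power coefficients $p_j$ may themselves be random (elements of $\set{P}$), so the genie should reveal $(\vec{H},\vec{V},\vec{p})$ --- equivalently the gains $g_{j,k}$ directly --- rather than just $(\vec{H},\vec{V})$; with that adjustment your argument goes through unchanged.
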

\begin{proof}
Standard. See, e.g., the proof of \cite[Lemma~1]{caire2018ergodic}.
\end{proof}

The second bound is one of the tightest known lower bounds and, as we will see, is also amenable to beamforming optimization under certain conditions related to centralized or semi-distributed beamforming architectures.
\begin{proposition}[Coherent decoding lower bound]\label{prop:coh}
The following lower bound holds $(\forall k \in \set{K})$
\begin{equation}\label{eq:coh}
\begin{gathered}
R_k \geq R_k^{\mathsf{coh}} \eqdef \E \left[\log\left(1+\mathsf{SINR}_k^{\mathsf{coh}}\right) \right], \\
\mathsf{SINR}_k^{\mathsf{coh}} \eqdef \dfrac{|\E[g_{k,k}|U_k]|^2 }{\V(g_{k,k}|U_k)+\underset{j\in \set{K}\backslash \{k\}}{\sum}\E[|g_{j,k}|^2|U_k]+\E[\sigma_k^2|U_k]}.
\end{gathered}
\end{equation}
\end{proposition}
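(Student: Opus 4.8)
The plan is to recognize $R_k = I(x_k;y_k|U_k)$ as the mutual information of a scalar channel whose gain is \emph{deterministic once we condition on} $U_k$, driven by a Gaussian input and corrupted by an effective noise that, conditionally on $U_k$, is uncorrelated with the input; the coherent-decoding SINR is then exactly the signal-to-(effective-noise) ratio, and the bound follows from the classical LMMSE / worst-case-Gaussian-noise argument applied conditionally on $U_k$. First I would write the received signal in the unified form
\begin{equation*}
y_k = g_{k,k}\,x_k + \sum_{j\in\set{K}\backslash\{k\}} g_{j,k}\,x_j + n_k,
\end{equation*}
where $n_k = z_k$ in the downlink and $n_k = \vec{v}_k^\herm\vec{z}$ in the uplink; in both cases $n_k$ is independent of the data $\{x_j\}$, zero-mean, and satisfies $\E[|n_k|^2\,|\,U_k] = \E[\sigma_k^2\,|\,U_k]$. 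Splitting the desired term into its $U_k$-conditional mean plus a fluctuation, I would introduce the effective noise
\begin{equation*}
\tilde{n}_k \eqdef \bigl(g_{k,k}-\E[g_{k,k}|U_k]\bigr)x_k + \sum_{j\in\set{K}\backslash\{k\}} g_{j,k}\,x_j + n_k,
\end{equation*}
so that $y_k = \E[g_{k,k}|U_k]\,x_k + \tilde{n}_k$.

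Second, using that the data symbols are i.i.d.\ $\CN(0,1)$ and independent of $(\vec{H},\vec{V})$, hence of $U_k$, I would verify the two second-order properties needed for the argument: the conditional uncorrelatedness $\E[\tilde{n}_k\,x_k^*\,|\,U_k]=0$ (the fluctuation term contributes $\E[|x_k|^2]\,\E[g_{k,k}-\E[g_{k,k}|U_k]\,|\,U_k]=0$, and all remaining cross-terms vanish by independence and zero means), and the conditional-variance identity
\begin{equation*}
\V(\tilde{n}_k\,|\,U_k) = \V(g_{k,k}|U_k) + \sum_{j\in\set{K}\backslash\{k\}}\E[|g_{j,k}|^2|U_k] + \E[\sigma_k^2|U_k] \reqdef D_k,
\end{equation*}
which is precisely the denominator of $\mathsf{SINR}_k^{\mathsf{coh}}$.

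Third, I would carry out the chain-rule / LMMSE step. Since $x_k$ is independent of $U_k$, $I(x_k;y_k|U_k)=h(x_k)-h(x_k|y_k,U_k)$, and for any $\alpha\in\set{F}_{U_k}$ one has $h(x_k|y_k,U_k)=h(x_k-\alpha y_k\,|\,y_k,U_k)\le h(x_k-\alpha y_k\,|\,U_k)$ because translation by the $(y_k,U_k)$-measurable quantity $\alpha y_k$ preserves differential entropy and conditioning cannot increase it. Taking $\alpha$ to be the $U_k$-conditional LMMSE coefficient, $\alpha=\overline{\E[g_{k,k}|U_k]}\big/(|\E[g_{k,k}|U_k]|^2+D_k)$, gives $\V(x_k-\alpha y_k\,|\,U_k)=D_k\big/(|\E[g_{k,k}|U_k]|^2+D_k)$; bounding the residual conditional entropy by that of a Gaussian with the same conditional variance, i.e.\ $h(x_k-\alpha y_k\,|\,U_k)\le\E[\log(\pi e\,\V(x_k-\alpha y_k\,|\,U_k))]$, and taking the expectation over $U_k$ yields $R_k\ge\E[\log(1+|\E[g_{k,k}|U_k]|^2/D_k)]=R_k^{\mathsf{coh}}$. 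The two degenerate cases — the event $\{\E[g_{k,k}|U_k]=0\}$, where both sides reduce to $\log 1=0$, and the zero-beamformer case $D_k=0$ (uplink), covered by the $\tfrac{0}{0}=0$ convention so that $\mathsf{SINR}_k^{\mathsf{coh}}=0$ and $y_k=0$ almost surely — are handled separately and contribute nothing.

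The main obstacle is not the algebra but making the worst-case-noise step rigorous at the level of conditional expectations rather than fixed channel realizations: one must use that $\tilde{n}_k$ is merely \emph{uncorrelated} with $x_k$ given $U_k$ — it is in general neither independent of $x_k$ nor conditionally Gaussian, since $U_k$ need not determine $\vec{V}$ — and it is exactly the LMMSE choice of $\alpha$ that turns this weak orthogonality into the bound; one must also justify the conditional maximum-entropy inequality for $h(\cdot\,|\,U_k)$ and the invariance of conditional mutual information under scaling $y_k$ by a (possibly vanishing) element of $\set{F}_{U_k}$. Apart from these measure-theoretic bookkeeping points, the proof is the standard one, mirroring that referenced for \cite[Lemma~1]{caire2018ergodic}.
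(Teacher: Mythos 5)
Your proposal is correct and follows essentially the same route as the paper's proof: write $I(x_k;y_k|U_k)=h(x_k)-h(x_k|y_k,U_k)$, subtract a $U_k$-measurable scaled version $\alpha y_k$ using translation invariance, drop the conditioning on $y_k$, and apply the conditional maximum-entropy (Gaussian) bound, with the conditional LMMSE coefficient producing exactly $D_k/(|\E[g_{k,k}|U_k]|^2+D_k)$. The only (immaterial) difference is that you substitute the specific LMMSE $\alpha$ directly, whereas the paper takes the supremum over all $\alpha\in\set{F}_{U_k}$ and characterizes the essential infimum of the conditional MSE pointwise in $U_k$ — machinery it reuses later for Propositions~3 and~5 — and both treatments of the degenerate $\E[|y_k|^2|U_k]=0$ case agree.
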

\begin{proof}
Simple variation of the UatF / hardening bounding technique, see, e.g., \cite[Appendix~C.2.6]{marzetta2016fundamentals} or \cite[Eq.~(36)]{caire2018ergodic}. We give a slightly different proof in Appendix~\ref{app:coh}, since instrumental for the novel results of this paper.
\end{proof}

The third bound is extensively used in the massive MIMO literature to obtain closed-form expressions, to perform statistical power control, and as an (asymptotically in $M$) accurate estimate of the performance in the absence of decoder CSI \cite{marzetta2016fundamentals,massivemimobook,caire2018ergodic}. Furthermore, as we will see, it is amenable to beamforming optimization in very general cases.
\begin{proposition}[UatF or hardening lower bound]\label{prop:uatf}
The following lower bound holds $(\forall k \in \set{K})$
\begin{equation}\label{eq:uatf}
\begin{gathered}
R_k^{\mathsf{coh}} \geq R_k^{\mathsf{UatF/hard}} = \log(1+\mathsf{SINR}_k^\mathsf{UatF/hard}), \\
\mathsf{SINR}_k^\mathsf{UatF/hard} = \dfrac{|\E[g_{k,k}]|^2 }{\V(g_{k,k})+\sum_{j\in \set{K}\backslash \{k\}}\E[|g_{j,k}|^2]+\E[\sigma_k^2]}.
\end{gathered}
\end{equation}
\end{proposition}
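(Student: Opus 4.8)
The plan is to rewrite $R_k^{\mathsf{coh}}$ so that the logarithm acts on a ratio of the form ``total received power over (total received power minus signal power)'', to recognize the resulting expression as a jointly convex function of two conditional expectations, and then to invoke Jensen's inequality in order to drop the conditioning on $U_k$; the bound so obtained will turn out to coincide exactly with $R_k^{\mathsf{UatF/hard}}$.

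In more detail, I would first introduce the total received-power variable $S_k \eqdef \sum_{j\in\set{K}}|g_{j,k}|^2 + \sigma_k^2$ and verify, using $\V(g_{k,k}|U_k) = \E[|g_{k,k}|^2|U_k] - |\E[g_{k,k}|U_k]|^2$ and the linearity of conditional expectation, the identity
\begin{equation*}
|\E[g_{k,k}|U_k]|^2 + \V(g_{k,k}|U_k) + \sum_{j\in\set{K}\backslash\{k\}}\E[|g_{j,k}|^2|U_k] + \E[\sigma_k^2|U_k] = \E[S_k|U_k].
\end{equation*}
Setting $A_k \eqdef \E[g_{k,k}|U_k]$ and $Q_k \eqdef \E[S_k|U_k]$, this shows that the denominator of $\mathsf{SINR}_k^{\mathsf{coh}}$ equals $Q_k - |A_k|^2$, hence $1 + \mathsf{SINR}_k^{\mathsf{coh}} = Q_k/(Q_k - |A_k|^2) = (1 - |A_k|^2/Q_k)^{-1}$ and therefore $R_k^{\mathsf{coh}} = \E[g(A_k,Q_k)]$ with $g(a,q) \eqdef -\log(1 - |a|^2/q)$. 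Next I would check that $g$ is jointly convex on the convex set $\{(a,q) \in \stdset{C}\times\stdset{R}_{++} : |a|^2 < q\}$: the map $(a,q) \mapsto |a|^2/q$ is the (complex) quadratic-over-linear function and is convex, so $1 - |a|^2/q$ is concave, hence $g$, being the composition of the convex nonincreasing function $t \mapsto -\log t$ with a concave function, is convex. Finally, Jensen's inequality applied to $g$ and to the pair of conditional expectations $(A_k,Q_k)$ yields $R_k^{\mathsf{coh}} = \E[g(A_k,Q_k)] \geq g(\E[A_k],\E[Q_k])$; by the tower rule $\E[A_k] = \E[g_{k,k}]$ and $\E[Q_k] = \E[S_k] = \sum_{j\in\set{K}}\E[|g_{j,k}|^2] + \E[\sigma_k^2]$, so $\E[Q_k] - |\E[A_k]|^2 = \V(g_{k,k}) + \sum_{j\in\set{K}\backslash\{k\}}\E[|g_{j,k}|^2] + \E[\sigma_k^2]$ is exactly the denominator of $\mathsf{SINR}_k^{\mathsf{UatF/hard}}$. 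Hence $g(\E[A_k],\E[Q_k]) = \log(1 + \mathsf{SINR}_k^{\mathsf{UatF/hard}}) = R_k^{\mathsf{UatF/hard}}$, and chaining the three steps gives $R_k^{\mathsf{coh}} \geq R_k^{\mathsf{UatF/hard}}$.

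The step I expect to be the main obstacle is the convexity of $g$ combined with the reformulation that makes it usable: applying Jensen directly in the $\mathsf{SINR}$-form, i.e.\ to $(a,d) \mapsto \log(1 + |a|^2/d)$, does not work, because that map is neither convex nor concave (its Hessian is indefinite for $a \neq 0$), so one genuinely has to normalize the argument of the logarithm into $(0,1]$ before the composition rule applies. The remaining issues are routine: one should verify that $(A_k,Q_k)$ stays in the domain of $g$, i.e.\ that $\V(g_{k,k}|U_k) + \sum_{j\in\set{K}\backslash\{k\}}\E[|g_{j,k}|^2|U_k] + \E[\sigma_k^2|U_k] > 0$ almost surely, which holds whenever $\vec{v}_k \neq \vec{0}$ and always in the downlink, where $\sigma_k^2 \equiv 1$; the degenerate case $\vec{v}_k = \vec{0}$ gives $S_k = 0$ and $A_k = Q_k = 0$, and is absorbed by the convention $\tfrac{0}{0} = 0$, which makes both $R_k^{\mathsf{coh}}$ and $R_k^{\mathsf{UatF/hard}}$ vanish. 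One can also note that $g$ extends to a jointly convex, possibly $+\infty$-valued function on the closure of its domain (convex functions lying below their chords), so that the appeal to Jensen is legitimate even at the boundary.
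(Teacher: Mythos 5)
Your proof is correct, but it follows a genuinely different route from the paper's. The paper never manipulates the SINR expressions directly: it reuses the variational representation $R_k^{\mathsf{coh}} = -\inf_{\alpha\in\set{F}_{U_k}}\E\left[\log\left(\E\left[|x_k-\alpha y_k|^2\,\middle|\,U_k\right]\right)\right]$ established in the proof of Proposition~\ref{prop:coh} (Appendix~\ref{app:coh}), applies Jensen's inequality to the concave logarithm together with the law of total expectation to remove the inner conditioning, then enlarges the infimum by restricting $\alpha$ from $U_k$-measurable functions to constants in $\stdset{C}$, and finally identifies the resulting unconditional scalar MMSE with $R_k^{\mathsf{UatF/hard}}$. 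You instead stay with the SINR formulas, rewrite $\log(1+\mathsf{SINR}_k^{\mathsf{coh}})=g(A_k,Q_k)$ with $g(a,q)=-\log(1-|a|^2/q)$, prove joint convexity of $g$ via the quadratic-over-linear and composition rules, and conclude with a single Jensen step on the pair $(A_k,Q_k)$. The two arguments are close relatives: minimizing the conditional MSE over $\alpha$ yields exactly $1-|A_k|^2/Q_k$, so the paper's ``Jensen plus restrict $\alpha$ to constants'' implicitly establishes the same convexity inequality without naming it. What your version buys is a self-contained and mechanistically transparent proof, including the genuinely useful observation that Jensen fails on the naive form $\log(1+|a|^2/d)$ and only works after normalizing by the total received power, as well as a consistent treatment of the degenerate $\vec{v}_k=\vec{0}$ case under the paper's $0/0=0$ convention. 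What the paper's route buys is uniformity with the rest of the framework: the same MSE representation drives Proposition~\ref{prop:coh}, this proposition, and the MMSE--SINR relations (e.g., Proposition~\ref{prop:MMSE-SINR-coh}), and the passage from $\set{F}_{U_k}$ to constant $\alpha$ is precisely where the decoder side information is ``forgotten,'' which mirrors the UatF philosophy. The only cosmetic caveat in your write-up is that the Jensen step tacitly assumes $\E[|g_{j,k}|^2]$ and $\E[\sigma_k^2]$ finite; this integrability is implicit throughout the paper, and when it fails the UatF bound is vacuous anyway.
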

\begin{proof}
The proof is given in Appendix~\ref{app:uatf}.
\end{proof}

\begin{remark}
The bound $(\forall k \in \set{K})~R_k\geq R_k^{\mathsf{UatF/hard}}$ is well-known \cite{marzetta2016fundamentals,massivemimobook,caire2018ergodic}. In contrast, despite being intuitive, the relation between the two bounds $(\forall k \in \set{K})~R_k^{\mathsf{coh}}\geq R_k^{\mathsf{UatF/hard}}$ given by Proposition~\ref{prop:uatf} was not reported in the literature, to our best knowledge.
\end{remark}
Note that $R_k^{\mathsf{UatF/hard}}$ coincides with $R_k^{\mathsf{coh}}$ in the absence of decoder CSI, i.e., if the side information $U_k$ in \eqref{eq:mutual_info} is independent of $(\vec{H},\vec{V})$.

\subsection{Information constraints}\label{ssec:info}
As customary, we assume that $\vec{v}_k = \vec{v}_k(S_k)$ is a function of some CSI $S_k$ about $\vec{H}$ (e.g., noisy measurements of $\vec{H}$) available at the infrastructure side. More formally, we let $(\forall k \in \set{K})$ $\vec{v}_k \in \set{F}^M_{S_k}$, where $\set{F}^M_{S_k}\subseteq \set{C}^M$ denotes the set of $M$-tuples of functions of $S_k$ (i.e., each entry of $\vec{v}_k$ is a complex-valued function of $S_k$). Note that this CSI may not necessarily coincide with the decoder CSI $U_k$, especially in the downlink or in distributed uplink processing architectures. 

Furthermore, to model practical aspects related to scalable beamforming architectures in large and potentially cell-free systems, we assume that the entries of $\vec{v}_k$ may depend on different CSI, and that some entries of $\vec{v}_k$ may be forced to zero, i.e., that not all antennas are used to serve all users. We follow the general approach introduced in \cite{miretti2021team,miretti2024duality}, which allows to model the above aspects in a unified way by means of so-called  \textit{information constraints}\footnote{This terminology refers to the measure-theoretic notion of information used, e.g., in stochastic control theory \cite{yukselbook}.} $\vec{v}_k \in \set{V}_k$, where $\set{V}_k\subseteq \set{F}_{S_k}^M$ denotes a certain subset of the set of $M$-tuples of functions of $S_k$, defined as follows. We first partition the $M=NL$ antennas into $L$ groups of $N$ antennas each, whose local channels are given by the corresponding $N$ rows of 
\begin{equation*}
\vec{H} \reqdef \begin{bmatrix}
\vec{H}_1 \\ \vdots \\ \vec{H}_{L}
\end{bmatrix} \reqdef \begin{bmatrix}
\vec{h}_{1,1} & \ldots & \vec{h}_{1,K} \\ \vdots & \ddots &\vdots \\ \vec{h}_{L,1} & \ldots & \vec{h}_{L,K}
\end{bmatrix} \in \set{C}^{NL\times K},
\end{equation*}
and assume that the corresponding entries of the joint beamformer $\vec{v}_k = \begin{bmatrix}
\vec{v}_{1,k}^\herm & \ldots & \vec{v}_{L,k}^\herm
\end{bmatrix}^\herm$ of each user $k\in \set{K}$ are computed by $L$ separate processing units. In the context of cell-free massive MIMO, these processing units can be directly mapped to access points. We further let each user $k\in \set{K}$ be jointly served by a subset $\set{L}_k \subseteq \set{L} = \{1,\ldots,L\}$ of these units. 

\begin{definition}[Information constraints]Given an arbitrary CSI tuple $S_k=(S_{1,k},\ldots,S_{L,k})$, where $S_{l,k}$ is the CSI used by the $l$th processing unit to compute the beamformer $\vec{v}_{l,k}$ of the $k$th user, we define the information constraints $(\forall k \in \set{K})$
\begin{equation}\label{eq:distributedCSI}
		\set{V}_k \eqdef \set{V}_{1,k}\times \ldots \times \set{V}_{L,k}, \quad \set{V}_{l,k} \eqdef \begin{cases}
\set{F}^{N}_{S_{l,k}} & \text{if } l \in \set{L}_k,\\
\{\vec{0}\} & \text{otherwise},
\end{cases}
\end{equation}
where $\set{F}^{N}_{S_{l,k}}\subseteq \set{F}^N_{S_k}$ denotes the set of $N$-tuples of functions of $S_{l,k}$.
\end{definition}
The interested reader is referred to \cite[Chapter~2]{yukselbook} for additional details on the above definitions\footnote{In this work, we do not consider a second-order moment constraint  as in \cite{miretti2021team,miretti2024duality}, and we adhere closely to the more general model in \cite[Chapter~2]{yukselbook}.}. Nevertheless, we stress that the above constraints cover all common models in the user-centric cell-free massive MIMO literature, including the case of (clustered) centralized beamforming architectures based on global CSI, the case of (clustered) distributed beamforming architectures based on local CSI \cite{demir2021}, as well as more involved intermediate cases \cite{miretti2021team,miretti2024delayed}. For example, $S_{l,k}$ could be composed by \textit{local} noisy measurements of $\vec{H}_l$, and by measurements of the other local channels $\vec{H}_j$ exchanged over the fronthaul. We will give some practical examples throughout the manuscript.

We conclude this section by defining the following auxiliary notation, which will be used in many parts of this study to formalize assumptions related to the availability of some common information for beamforming.
\begin{definition}[Common beamforming CSI]\label{def:common_CSI} For all $k\in \set{K}$ and $l\in \set{L}_k$,  we assume without loss of generality that  $S_{l,k}$ takes the form $(S_{l,k}',Z_k)\eqdef S_{l,k}$, where $Z_k$ denotes some common beamforming CSI available at all processing units in $\set{L}_k$, and where $S_{l,k}'$ denotes the remaining parts of the CSI $S_{l,k}$.
\end{definition}
Clearly, the case of no common beamforming CSI is also covered by the above assumption, e.g., by choosing $Z_k$ to be deterministic or independent of everything else. Note that, at this point, we do not impose any additional assumption on $Z_k$, beyond it being shared by the processing units. In particular, $Z_k$ can be chosen arbitrarily and does not need to model the entire common information, e.g., the $S_{l,k}'$ may still be correlated. 
 
\section{Optimality of uplink MSE processing}\label{sec:uplink}
In this section, we review and provide new insights on the optimality of the MSE criterion for beamforming optimization. While ubiquitous in the classical multi-user MIMO literature, this criterion was rigorously connected to ergodic achievable rates only in terms of $R_k^{\mathsf{oer}}$ in \eqref{eq:oer} under perfect CSI (i.e., $\set{V}_k = \set{C}^M$), or of a special form of $R_k^{\mathsf{coh}}$ in \eqref{eq:coh} restricted to centralized beamforming, Gaussian fading, and MMSE channel estimation (see, e.g., \cite[Section~4.1]{massivemimobook}). In contrast, we will review a recently identified general connection with $R_k^{\mathsf{UatF/hard}}$ in \eqref{eq:uatf}, and derive a novel connection to $R_k^{\mathsf{coh}}$ in \eqref{eq:coh} which unifies and extends all previous results.    

\subsection{Use-and-then-forget lower bound}\label{sec:uplink_UatF}
For every $k\in \set{K}$ and $\vec{v}_k\in \set{V}_k$, let $\mathsf{SINR}_k^{\mathsf{UatF}}(\vec{v}_k)$ be the SINR in \eqref{eq:uatf} specialized to the uplink case, where the dependency on the beamformers is made explicit. We stress that, in contrast to the downlink case (covered later in Section~\ref{sec:duality}), the uplink SINR of each user $k\in \set{K}$ does not depend on the beamformers $\vec{v}_j$ used to combine the signals of the other users $j\in \set{K}\backslash \{k\}$. Furthermore, let us consider the MSE between the uplink signals $x_k$ and $y_k$ of user~$k\in\set{K}$, i.e.,
\begin{equation}\label{eq:MSE}
\begin{aligned}
(\forall \vec{v}_k\in \set{V}_k)~\mathsf{MSE}_k(\vec{v}_k)&= \E[|x_k - y_k|^2].
\end{aligned}
\end{equation} 

The following properties rigorously connect $\mathsf{SINR}_k^{\mathsf{UatF}}(\vec{v}_k)$ and $\mathsf{MSE}_k(\vec{v}_k)$, and (under slightly different model assumptions) they have been first identified in \cite{miretti2021team,miretti2024duality}.
\begin{proposition}
\label{prop:MMSE-SINR}
For all $k\in\set{K}$,
\begin{equation*}
1+ \sup_{\vec{v}_k\in \set{V}_k} \mathsf{SINR}_k^{\mathsf{UatF}}(\vec{v}_k) = \Big(\inf_{\vec{v}_k\in \set{V}_k}\mathsf{MSE}_k(\vec{v}_k)\Big)^{-1}.
\end{equation*}
Furthermore, if $\vec{v}_k^\star \in \arg\inf_{\vec{v}_k\in \set{V}_k}\mathsf{MSE}_k(\vec{v}_k)$, then  $\vec{v}_k^\star\in \arg \sup_{\vec{v}_k\in \set{V}_k} \mathsf{SINR}_k^{\mathsf{UatF}}(\vec{v}_k)$. 
\end{proposition}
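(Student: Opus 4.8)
The plan is to reduce both sides of the claimed identity to an elementary scalar optimization, using the scale invariance of $\mathsf{SINR}_k^{\mathsf{UatF}}$ and the linear structure of $\set{V}_k$. First I would obtain a closed form for the MSE. Writing $y_k=\vec{v}_k^\herm\vec{r}$ and expanding \eqref{eq:ul}, and using that the data symbols $x_j$ and the noise $\vec{z}$ are jointly independent, zero-mean, unit-variance, and independent of $(\vec{H},S_k)$ (hence of $\vec{v}_k$ and of all the $g_{j,k}$), conditioning on $(\vec{H},S_k)$ kills every cross term and yields
\begin{equation*}
\mathsf{MSE}_k(\vec{v}_k)=\E[|1-g_{k,k}|^2]+\sum_{j\in\set{K}\setminus\{k\}}\E[|g_{j,k}|^2]+\E[\sigma_k^2].
\end{equation*}
Substituting $\E[|1-g_{k,k}|^2]=1-2\Re(a)+\V(g_{k,k})+|a|^2$ with $a\eqdef\E[g_{k,k}]$ gives $\mathsf{MSE}_k(\vec{v}_k)=1-2\Re(a)+|a|^2+D$, where $D\eqdef\V(g_{k,k})+\sum_{j\neq k}\E[|g_{j,k}|^2]+\E[\sigma_k^2]$ is exactly the denominator of $\mathsf{SINR}_k^{\mathsf{UatF}}(\vec{v}_k)=|a|^2/D$; note also $D\geq\E[\sigma_k^2]>0$ whenever $\vec{v}_k\neq\vec{0}$.

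Second, I would exploit that $\set{V}_k$ is a complex vector space: by \eqref{eq:distributedCSI} each factor $\set{V}_{l,k}$ is either $\set{F}^N_{S_{l,k}}$ or $\{\vec{0}\}$, so $c\vec{v}_k\in\set{V}_k$ for every $\vec{v}_k\in\set{V}_k$ and every deterministic $c\in\stdset{C}$. Under $\vec{v}_k\mapsto c\vec{v}_k$ one has $a\mapsto ca$ and $D\mapsto|c|^2D$, so $\mathsf{SINR}_k^{\mathsf{UatF}}$ is invariant, while $\mathsf{MSE}_k(c\vec{v}_k)=1-2\Re(ca)+|c|^2(|a|^2+D)$. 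Minimizing this convex quadratic over $c$ — the optimal phase aligns $ca$ with $\stdset{R}_+$ and the optimal modulus is $|c|=|a|/(|a|^2+D)$ — gives, for $\vec{v}_k\neq\vec{0}$,
\begin{equation*}
\inf_{c\in\stdset{C}}\mathsf{MSE}_k(c\vec{v}_k)=\frac{D}{|a|^2+D}=\big(1+\mathsf{SINR}_k^{\mathsf{UatF}}(\vec{v}_k)\big)^{-1},
\end{equation*}
and the same relation holds trivially for $\vec{v}_k=\vec{0}$, where $\mathsf{MSE}_k=1$ and $\mathsf{SINR}_k^{\mathsf{UatF}}=\tfrac{0}{0}=0$ by convention.

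Third I would combine the two. Since $c=1$ is admissible, $\mathsf{MSE}_k(\vec{v}_k)\geq\big(1+\mathsf{SINR}_k^{\mathsf{UatF}}(\vec{v}_k)\big)^{-1}\geq\big(1+\sup_{\vec{v}_k'\in\set{V}_k}\mathsf{SINR}_k^{\mathsf{UatF}}(\vec{v}_k')\big)^{-1}$ for every $\vec{v}_k\in\set{V}_k$, so taking the infimum gives "$\geq$". Conversely, for each $\vec{v}_k$ the optimally scaled beamformer lies in $\set{V}_k$ and attains $\mathsf{MSE}_k=\big(1+\mathsf{SINR}_k^{\mathsf{UatF}}(\vec{v}_k)\big)^{-1}$, so $\inf_{\vec{v}_k\in\set{V}_k}\mathsf{MSE}_k(\vec{v}_k)\leq\inf_{\vec{v}_k\in\set{V}_k}\big(1+\mathsf{SINR}_k^{\mathsf{UatF}}(\vec{v}_k)\big)^{-1}=\big(1+\sup_{\vec{v}_k\in\set{V}_k}\mathsf{SINR}_k^{\mathsf{UatF}}(\vec{v}_k)\big)^{-1}$, where the last equality uses that $s\mapsto(1+s)^{-1}$ is continuous and decreasing on $[0,\infty]$. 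This establishes the identity. For the "furthermore" part, if $\vec{v}_k^\star\in\arg\inf_{\vec{v}_k}\mathsf{MSE}_k(\vec{v}_k)$ then $\big(1+\sup\mathsf{SINR}_k^{\mathsf{UatF}}\big)^{-1}=\mathsf{MSE}_k(\vec{v}_k^\star)\geq\big(1+\mathsf{SINR}_k^{\mathsf{UatF}}(\vec{v}_k^\star)\big)^{-1}$, forcing $\mathsf{SINR}_k^{\mathsf{UatF}}(\vec{v}_k^\star)\geq\sup\mathsf{SINR}_k^{\mathsf{UatF}}$ and hence equality, i.e.\ $\vec{v}_k^\star\in\arg\sup_{\vec{v}_k}\mathsf{SINR}_k^{\mathsf{UatF}}(\vec{v}_k)$.

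The step I would watch most closely is the MSE expansion: the vanishing of all cross terms rests on the precise independence between $(x_j)_{j},\vec{z}$ and $(\vec{H},S_k)$, and — if the presentation is kept measure-theoretic — on enough integrability for $a$ and $D$ to be finite; the remaining bookkeeping (the $\tfrac{0}{0}$ convention, the degenerate case $\vec{v}_k=\vec{0}$, and the case $\sup\mathsf{SINR}_k^{\mathsf{UatF}}=\infty$) is harmless but must be stated explicitly. Everything past that is a one-variable quadratic minimization together with monotonicity of $(1+s)^{-1}$.
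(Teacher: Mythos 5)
Your proof is correct and uses essentially the same mechanism as the paper: closure of $\set{V}_k$ under scalar scaling plus a one-variable quadratic minimization that turns $\inf_{c}\mathsf{MSE}_k(c\vec{v}_k)$ into $\bigl(1+\mathsf{SINR}_k^{\mathsf{UatF}}(\vec{v}_k)\bigr)^{-1}$, followed by the sandwich and the argmax transfer. The only difference is presentational — the paper obtains the result as the constant-$U_k$ special case of Proposition~\ref{prop:MMSE-SINR-coh} (whose proof invokes the same scalar quadratic step from Appendix~\ref{app:coh}), whereas you carry out the unconditional computation directly with the explicit MSE expansion.
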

\begin{proof} Particular case of Proposition~\ref{prop:MMSE-SINR-coh} with constant $U_k$.  
\end{proof}
Note that the converse of the second statement does not hold in general, since any phase rotated version $\vec{v}_k^\star e^{j\theta}$ ($\theta\in \stdset{R}$) of $\vec{v}_k^\star\in \arg\sup_{\vec{v}_k\in \set{V}_k} \mathsf{SINR}_k^{\mathsf{UatF}}(\vec{v}_k)$ gives the same optimal SINR, but may not (and typically does not) minimize the MSE. 

Moreover, by the monotonicity of the logarithm, we readily obtain the following corollary.
\begin{corollary}\label{cor:uatf}
For all $k\in\set{K}$, if $\vec{v}_k^\star\in \arg\inf_{\vec{v}_k\in \set{V}_k}\mathsf{MSE}_k(\vec{v}_k)$, then $\vec{v}_k^\star\in \arg\sup_{\vec{v}_k \in \set{V}_k} R_k^{\mathsf{Uatf},\mathsf{ul}}(\vec{v}_k)$. 
\end{corollary}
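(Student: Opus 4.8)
The plan is to derive the claim directly from Proposition~\ref{prop:MMSE-SINR} together with the monotonicity of $t\mapsto \log(1+t)$. First I would recall that, specializing Proposition~\ref{prop:uatf} to the uplink, the objective takes the form $R_k^{\mathsf{Uatf},\mathsf{ul}}(\vec{v}_k) = \log\bigl(1+\mathsf{SINR}_k^{\mathsf{UatF}}(\vec{v}_k)\bigr)$ for every $\vec{v}_k \in \set{V}_k$, where $\mathsf{SINR}_k^{\mathsf{UatF}}(\vec{v}_k)$ is the \emph{deterministic} quantity appearing in \eqref{eq:uatf} (it involves only expectations). Since $t\mapsto \log(1+t)$ is a strictly increasing bijection from $\stdset{R}_+$ onto $\stdset{R}_+$, it is order-preserving, and hence the two optimization problems $\sup_{\vec{v}_k\in\set{V}_k} R_k^{\mathsf{Uatf},\mathsf{ul}}(\vec{v}_k)$ and $\sup_{\vec{v}_k\in\set{V}_k}\mathsf{SINR}_k^{\mathsf{UatF}}(\vec{v}_k)$ share the same set of maximizers; in particular $\arg\sup_{\vec{v}_k\in\set{V}_k} R_k^{\mathsf{Uatf},\mathsf{ul}}(\vec{v}_k) = \arg\sup_{\vec{v}_k\in\set{V}_k}\mathsf{SINR}_k^{\mathsf{UatF}}(\vec{v}_k)$.

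Next I would invoke the second statement of Proposition~\ref{prop:MMSE-SINR}: any $\vec{v}_k^\star \in \arg\inf_{\vec{v}_k\in\set{V}_k}\mathsf{MSE}_k(\vec{v}_k)$ belongs to $\arg\sup_{\vec{v}_k\in\set{V}_k}\mathsf{SINR}_k^{\mathsf{UatF}}(\vec{v}_k)$. Combining this with the equality of argmax sets established above immediately yields $\vec{v}_k^\star \in \arg\sup_{\vec{v}_k\in\set{V}_k} R_k^{\mathsf{Uatf},\mathsf{ul}}(\vec{v}_k)$, which is the assertion.

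The argument is essentially bookkeeping, so I do not anticipate a genuine obstacle; the only point worth a line of care is that the reasoning passes through a minimizer of $\mathsf{MSE}_k$ that is assumed to exist (it is part of the hypothesis), so one never needs to establish attainment of the suprema — only that membership in one argmax set transfers to the other under the order-preserving reparametrization. I would also note in passing, to forestall a misreading, that the converse fails: as already observed after Proposition~\ref{prop:MMSE-SINR}, a phase-rotated version $\vec{v}_k^\star e^{j\theta}$ of an SINR-optimal (hence $R_k^{\mathsf{Uatf},\mathsf{ul}}$-optimal) beamformer is again optimal but generally does not minimize $\mathsf{MSE}_k$, so the implication is genuinely one-directional.
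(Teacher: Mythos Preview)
Your proposal is correct and follows exactly the paper's own argument: the corollary is obtained by combining the second statement of Proposition~\ref{prop:MMSE-SINR} with the monotonicity of $t\mapsto\log(1+t)$. The paper states this in a single line, and your additional remarks on the one-directionality of the implication simply echo the discussion already given after Proposition~\ref{prop:MMSE-SINR}.
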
 

To study the properties of $\inf_{\vec{v}_k\in \set{V}_k}\mathsf{MSE}_k(\vec{v}_k)$, it is convenient to rewrite the MSE as the following expectation of strongly convex quadratic forms $(\forall k \in \set{K})$
\begin{equation*}
\begin{aligned}
\mathsf{MSE}_k(\vec{v}_k)= \E[\vec{v}_k^\herm(\vec{H}\vec{P}\vec{H}^\herm + \vec{I}_M)\vec{v}_k -2\sqrt{p_k}\Re(\vec{h}_k^\herm\vec{v}_k)+1],
\end{aligned}
\end{equation*}
where the equality follows from simple algebraic manipulations and by defining $\vec{P}=\mathrm{diag}(\vec{p})$. This allows to exploit known theoretical results on the minimization of expected quadratic forms over $\vec{v}_k\in\set{V}_k$. For example, optimal centralized beamformers could be readily obtained by minimizing the conditional MSE pointwise for each realization of $S_k$. More generally, borrowing elements from the control theoretical literature on multi-agent coordinated decision making under partial information (also called  \textit{team} decision theory \cite{yukselbook}), the structure of optimal beamformers can be conveniently characterized by a set of necessary and sufficient optimality conditions \cite{miretti2021team}, as formalized in the following. 

\begin{proposition}\label{prop:stationarity}
For all $k\in \set{K}$, assume $(\forall l\in \set{L}_k)~\vec{p}\in \set{P}^K_{S_{l,k}}$ and the regularity condition $\E[\|\vec{H}\vec{H}^\herm\|_\mathrm{F}^2]< \infty$. Then $\vec{v}_k^\star\in \arg\inf_{\vec{v}_k\in \set{V}_k}\mathsf{MSE}_k(\vec{v}_k)$ if and only if $\vec{v}_k^\star$ is the unique solution to the feasibility problem
\begin{align}\label{eq:stationarity}
&\textnormal{find} ~ \vec{v}_k\in \set{V}_k ~ \textnormal{such that} ~(\forall l\in \set{L}_k) \notag\\
\begin{split}
&\vec{v}_{l,k} =\Big(\E[\vec{H}_l\vec{P}\vec{H}_l^\herm|S_{l,k}]+\vec{I}_N\Big)^{-1}\\
&\Big(\E[\vec{H}_l|S_{l,k}]\vec{P}^{\frac{1}{2}}\vec{e}_k-\textstyle\sum_{j\in \set{L}_k\backslash \{l\}}\E[\vec{H}_l\vec{P}\vec{H}_j^\herm\vec{v}_{j,k}|S_{l,k}] \Big).
\end{split}
\end{align}
\end{proposition}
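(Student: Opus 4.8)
The plan is to recognize $\mathsf{MSE}_k$ as an expected strongly convex quadratic form over the Hilbert space of square-integrable beamformers and to exploit that the information constraint $\set{V}_k = \set{V}_{1,k}\times\ldots\times\set{V}_{L,k}$ is a closed linear subspace, so that a unique minimizer exists and is characterized by a variational (orthogonality) condition. First I would restrict attention, without loss of optimality, to $\vec{v}_k$ with $\E[\|\vec{v}_k\|^2]<\infty$: since $\vec{v}_k=\vec{0}$ is feasible and gives $\mathsf{MSE}_k=1$, any minimizing sequence is eventually bounded in mean square, and the regularity assumption $\E[\|\vec{H}\vec{H}^\herm\|_\mathrm{F}^2]<\infty$ together with $\vec{p}\in\set{P}_{S_{l,k}}^K$ (hence $\vec{p}$ a.s.\ determined by the local CSI, and in particular the quadratic form well-defined) guarantees that all the expectations appearing in $\mathsf{MSE}_k(\vec{v}_k)$ and in \eqref{eq:stationarity} are finite. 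On this space the functional is strictly convex and coercive, so the minimizer is unique; existence follows by the standard projection/completeness argument since $\set{V}_k$ is closed under $L^2$-convergence (it is the intersection of the preimages under conditional-expectation-type constraints, each of which is closed).

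Next I would derive the stationarity condition. Because $\mathsf{MSE}_k$ is a differentiable convex functional on the linear space $\set{V}_k$, $\vec{v}_k^\star$ is optimal iff the Gâteaux derivative vanishes in every feasible direction, i.e.\ $\Re\,\E[(\vec{w}_k)^\herm((\vec{H}\vec{P}\vec{H}^\herm+\vec{I}_M)\vec{v}_k^\star-\sqrt{p_k}\vec{h}_k)]=0$ for all $\vec{w}_k\in\set{V}_k$ (and by replacing $\vec{w}_k$ with $j\vec{w}_k$ the real part can be dropped). Since $\set{V}_k$ is a product, it suffices to test directions supported on a single block $l\in\set{L}_k$, i.e.\ $\vec{w}_k=(\vec{0},\ldots,\vec{w}_{l,k},\ldots,\vec{0})$ with $\vec{w}_{l,k}\in\set{F}^N_{S_{l,k}}$ arbitrary. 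Writing out the $l$th block of $(\vec{H}\vec{P}\vec{H}^\herm+\vec{I}_M)\vec{v}_k^\star-\sqrt{p_k}\vec{h}_k$ gives $\sum_{j\in\set{L}_k}\vec{H}_l\vec{P}\vec{H}_j^\herm\vec{v}_{j,k}^\star+\vec{v}_{l,k}^\star-\sqrt{p_k}\vec{h}_{l,k}$, and the orthogonality $\E[\vec{w}_{l,k}^\herm(\,\cdot\,)]=0$ for all $\vec{w}_{l,k}\in\set{F}^N_{S_{l,k}}$ is, by the defining property of conditional expectation (the standard ``taking out what is known'' / projection characterization), equivalent to the conditional expectation of $(\,\cdot\,)$ given $S_{l,k}$ being zero. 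Solving that for $\vec{v}_{l,k}^\star$ — using $\E[\vec{H}_l\vec{P}\vec{H}_l^\herm|S_{l,k}]=\E[\vec{H}_l\vec{P}\vec{H}_l^\herm|S_{l,k}]$ and that $\vec{v}_{l,k}^\star$ is itself $S_{l,k}$-measurable so it pulls out — yields exactly \eqref{eq:stationarity}, with the matrix $\E[\vec{H}_l\vec{P}\vec{H}_l^\herm|S_{l,k}]+\vec{I}_N$ invertible since it is a.s.\ positive definite (the $+\vec{I}_N$ term). Conversely, any solution of \eqref{eq:stationarity} satisfies the variational inequality, hence is the (unique) minimizer by convexity; uniqueness of the solution to \eqref{eq:stationarity} then also follows.

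I expect the main obstacle to be the measure-theoretic bookkeeping needed to make the Hilbert-space argument rigorous in this setting: verifying that $\set{V}_k$, as defined through the information constraints, is genuinely a closed subspace of $L^2$ (closedness under $L^2$-limits of the $\sigma(S_{l,k})$-measurable blocks, which needs the preservation of measurability under a.s.-convergent subsequences), confirming that the integrands have enough integrability for the Gâteaux derivative computation and for Fubini-type interchanges, and correctly invoking the conditional-expectation projection lemma to pass from ``$\E[\vec{w}_{l,k}^\herm X]=0$ for all admissible $\vec{w}_{l,k}$'' to ``$\E[X|S_{l,k}]=0$'' — this last step is where one must be careful that the class of test directions is rich enough (e.g.\ includes indicators of $\sigma(S_{l,k})$-sets times unit vectors, possibly truncated to stay in $L^2$). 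The actual algebra producing \eqref{eq:stationarity} from the block-wise orthogonality conditions is routine once this framework is in place, and I would cite \cite{miretti2021team} and \cite[Chapter~2]{yukselbook} for the analogous team-decision stationarity argument rather than reproving it in full.
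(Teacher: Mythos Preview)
Your proposal is correct and follows essentially the same approach as the paper: the paper's proof also argues that the optimality conditions are the person-by-person (Nash-type) stationarity equations obtained by fixing the other blocks and minimizing pointwise over $\vec{v}_{l,k}$, with sufficiency and uniqueness deferred to the convex team-decision result \cite[Theorem~2.6.4]{yukselbook} and the finiteness of all expectations justified via $\E[\|\vec{H}\vec{H}^\herm\|_\mathrm{F}^2]<\infty$, $\vec{p}\in\set{P}^K_{S_{l,k}}$, and $\mathsf{MSE}_k(\vec{v}_k)<\infty\Rightarrow\E[\|\vec{v}_k\|^2]<\infty$. Your Hilbert-space/G\^ateaux-derivative phrasing is simply a more explicit unpacking of exactly this argument, and you correctly anticipate citing \cite{miretti2021team} and \cite{yukselbook} for the underlying machinery.
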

\begin{proof} The proof is given in Appendix~\ref{proof:stationarity}.
\end{proof}

\begin{remark}\label{rem:regularity}
The regularity condition $\E[\|\vec{H}\vec{H}^\herm\|_\mathrm{F}^2]< \infty$ is satisfied for any physically consistent (uniformly bounded) fading distribution, as well as for any Gaussian fading distribution.
\end{remark}

\begin{remark}\label{rem:power}
The assumption $(\forall l \in \set{L}_k)$ $\vec{p}\in \set{P}^K_{S_{l,k}}$ is equivalent to assuming that $\vec{p}$ is a function of some common beamforming CSI. This is readily satisfied for most uplink massive MIMO models, where powers are typically adjusted on a statistical basis, i.e., where $\vec{p}\in \stdset{R}_+^{K}$ is deterministic. Nevertheless, we consider this more general assumption because it is instrumental for the downlink case, covered later in Section~\ref{sec:duality}. 
\end{remark}

\begin{remark}\label{rem:centr}
For the fully centralized case $\set{V}_k = \set{F}_{S_k}^M$, optimal beamformers $\vec{v}_k^\star\in \arg\inf_{\vec{v}_k\in \set{V}_k}\mathsf{MSE}_k(\vec{v}_k)$ are given by 
\begin{equation*}
\vec{v}_k^\star = \Big(\E[\vec{H}\vec{P}\vec{H}^\herm|S_k]+\vec{I}_M\Big)^{-1}\E[\vec{H}\vec{P}^{\frac{1}{2}}|S_k]\vec{e}_k.
\end{equation*}
This result can be shown via direct arguments, or recovered by letting $L=1$ and $N=M$ in Proposition~\ref{prop:stationarity}. Similar expressions are readily given for the clustered centralized case, by considering only the submatrices $\vec{H}^{(k)}$ of $\vec{H}$ corresponding to the groups of antennas indexed by $\set{L}_k$.
\end{remark}

The main consequence of the above results is that optimizing the UatF lower bound under arbitrary beamforming architectures, and mild assumptions which are always satisfied in practice, is equivalent to finding a solution to the (infinite dimensional) feasibility problem \eqref{eq:stationarity}. For the particular case of (clustered) centralized beamforming architectures, optimal beamformers are readily obtained as described in Remark~\ref{rem:centr}. 
 
\subsection{Coherent decoding lower bound}
We now provide a novel generalization of the above properties to the coherent decoding lower bound. Specifically, for every $k\in \set{K}$ and $\vec{v}_k\in \set{V}_k$, let $\mathsf{SINR}_k^{\mathsf{coh,ul}}(\vec{v}_k|U_k)$ be the SINR in \eqref{eq:coh} specialized to the uplink case, where the dependency on the beamformers and the conditioning on the decoder CSI $U_k$ is made explicit. Note that, due to the conditioning on $U_k$, $\mathsf{SINR}_k^{\mathsf{coh,ul}}(\vec{v}_k|U_k)$ is a random variable. Similarly, let us consider the conditional MSE of user~$k\in\set{K}$  
\begin{equation}\label{eq:MSEcond}
\begin{aligned}
		(\forall \vec{v}_k\in \set{V}_k)~\mathsf{MSE}_k(\vec{v}_k|U_k)= \E[|x_k - y_k|^2|U_k].
\end{aligned}
\end{equation} 
In the following, to avoid mathematical technicalities arising with the optimization of random variables, the notion of supremum/infimum is replaced by the notion of \textit{essential} supremum/infimum, i.e., the random variable constructed by taking the supremum/infimum under the almost-sure ordering (see the notation section for a more precise definition). 

\begin{proposition}
\label{prop:MMSE-SINR-coh}
Assume $(\forall k\in\set{K})(\forall l\in \set{L}_k)$ $S_{l,k}=(S_{l,k}',U_k)$, i.e., that the decoder CSI $U_k$ in \eqref{eq:mutual_info} is also common beamforming CSI as in Definition~\ref{def:common_CSI}. For all $k\in\set{K}$, 
\begin{equation*}
1+ \esssup{\vec{v}_k\in \set{V}_k} \mathsf{SINR}_k^{\mathsf{coh,ul}}(\vec{v}_k|U_k) = \Big(\essinf{\vec{v}_k\in \set{V}_k}\mathsf{MSE}_k(\vec{v}_k|U_k)\Big)^{-1}.
\end{equation*}
Furthermore, if $\vec{v}_k^\star \in \arg\essinf{\vec{v}_k\in \set{V}_k}\mathsf{MSE}_k(\vec{v}_k|U_k)$, then $\vec{v}_k^\star \in \arg\esssup{\vec{v}_k\in \set{V}_k} \mathsf{SINR}_k^{\mathsf{coh,ul}}(\vec{v}_k|U_k)$.
\end{proposition}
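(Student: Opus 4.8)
\textbf{Proof strategy for Proposition~\ref{prop:MMSE-SINR-coh}.}
The plan is to reduce the conditional statement to a pointwise (per-realization of $U_k$) version of the classical algebraic identity relating MMSE and SINR for linear estimation, and then to lift it back to the level of random variables using the defining properties of the essential infimum/supremum. First I would expand the conditional MSE: writing $y_k = \vec{v}_k^\herm \vec{r}$ and using the uplink model \eqref{eq:ul}, one gets
\begin{equation*}
\mathsf{MSE}_k(\vec{v}_k|U_k) = \E[\vec{v}_k^\herm \vec{C}_k \vec{v}_k | U_k] - 2\sqrt{p_k}\,\Re\big(\E[\vec{h}_k^\herm \vec{v}_k|U_k]\big) + 1,
\end{equation*}
where $\vec{C}_k \eqdef \vec{H}\vec{P}\vec{H}^\herm + \vec{I}_M$ (the power $p_k$ being $U_k$-measurable by assumption, so it can be pulled out of the relevant conditional expectations). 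The key observation is that the cross term $\E[\vec{h}_k^\herm\vec{v}_k|U_k] = \sum_j \E[g_{j,k}\cdot(\text{indicator stuff})]$; more precisely, $\mathsf{SINR}_k^{\mathsf{coh,ul}}(\vec{v}_k|U_k)$ as defined in \eqref{eq:coh} has numerator $|\E[g_{k,k}|U_k]|^2$ and denominator $\V(g_{k,k}|U_k)+\sum_{j\neq k}\E[|g_{j,k}|^2|U_k]+\E[\|\vec{v}_k\|^2|U_k]$, and a direct computation shows that this denominator equals $\E[\vec{v}_k^\herm \vec{C}_k\vec{v}_k|U_k] - |\E[g_{k,k}|U_k]|^2$. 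Hence $\mathsf{MSE}_k(\vec{v}_k|U_k)$ can be written purely in terms of $a \eqdef \E[g_{k,k}|U_k]$ and the denominator $d \eqdef \E[\vec{v}_k^\herm\vec{C}_k\vec{v}_k|U_k]$, namely $\mathsf{MSE}_k(\vec{v}_k|U_k) = d - 2\Re(a) + 1$ (noting $\E[g_{k,k}|U_k] = \sqrt{p_k}\E[\vec{h}_k^\herm\vec{v}_k|U_k]$), while $1 + \mathsf{SINR}_k^{\mathsf{coh,ul}} = \tfrac{d}{d-|a|^2}$.

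Next I would handle the scalar optimization. For any admissible $\vec{v}_k \in \set{V}_k$ and any real scaling $t$ and phase $\theta$, the rescaled beamformer $t e^{j\theta}\vec{v}_k$ is also in $\set{V}_k$ (since $\set{V}_k$ is a linear space, being a product of modules of functions, and $\set{L}_k$-membership is preserved). Optimizing over $t,\theta$ with $a,d$ held in their ``direction-only'' form reduces the problem to: given that $d$ scales like $t^2$ and $\Re(a)$ like $t\cos\theta\cdot|a|$, minimize $t^2 d_0 - 2t\cos\theta|a_0| + 1$, which gives optimal value $1 - |a_0|^2/d_0$ at $\theta = 0$ and $t = |a_0|/d_0$; correspondingly $1 + \mathsf{SINR} = (1-|a_0|^2/d_0)^{-1}$. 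This shows that, \emph{for a fixed direction}, minimizing the conditional MSE and maximizing the conditional SINR are achieved by the same rescaling, and the extremal values are reciprocal. The remaining step is to take the infimum/supremum over directions: since for each direction the optimal MSE is $1 - \rho$ and the optimal SINR is $\rho/(1-\rho)$ with $\rho = |a_0|^2/d_0 \in [0,1)$, and $\rho \mapsto 1-\rho$ is decreasing while $\rho\mapsto \rho/(1-\rho)$ is increasing, minimizing one is equivalent to maximizing the other, and the reciprocal identity is preserved under the extremization. The second statement (an MMSE-optimal $\vec{v}_k^\star$ is SINR-optimal) then follows because the MMSE optimizer already has the correct scaling $t=|a_0|/d_0$, $\theta=0$ built in.

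The main obstacle is making the ``take the (essential) sup/inf over directions'' step rigorous at the level of random variables rather than realization by realization, since $U_k$ can be continuously distributed and the family $\{\vec{v}_k\}$ is uncountable. I would argue using only the two defining axioms of $\esssup$/$\essinf$ stated in the notation section: for the inequality $1 + \esssup \mathsf{SINR} \le (\essinf \mathsf{MSE})^{-1}$, I use that for every fixed $\vec{v}_k$ the pointwise identity $(1 + \mathsf{SINR}_k^{\mathsf{coh,ul}}(\vec{v}_k|U_k))^{-1} \ge \essinf_{\vec{v}_k'}\mathsf{MSE}_k(\vec{v}_k'|U_k)$ holds a.s. (it holds even without the scalar optimization, just from $d - |a|^2 \ge d - 2\Re(a) + 1$ when... — actually one needs $\mathsf{MSE}(\vec{v}_k|U_k) \le \mathsf{MSE}$-of-its-optimal-rescaling, combined with the reciprocal identity for that rescaling), then pass to the essential supremum over $\vec{v}_k$ on the left using axiom (ii); for the reverse inequality I exhibit, for each $\vec{v}_k$, the explicit rescaled beamformer $\vec{v}_k^{\mathrm{opt}} \eqdef (|a_0|/d_0)\vec{v}_k \in \set{V}_k$ (which is admissible precisely because $|a_0|/d_0 = \E[g_{k,k}|U_k]^*/\big(\E[g_{k,k}|U_k]\E[\vec{v}_k^\herm\vec{C}_k\vec{v}_k|U_k]/|a|^2\big)$-type expression is $U_k$-measurable, hence a valid common-CSI-dependent scaling, and the linearity/module structure of $\set{V}_k$ permits multiplication by a $U_k$-measurable scalar since $U_k$ is common beamforming CSI) whose conditional MSE equals $(1 + \mathsf{SINR}_k^{\mathsf{coh,ul}}(\vec{v}_k|U_k))^{-1}$ a.s., and again invoke the axioms. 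The one genuinely delicate point to verify carefully is that multiplying an admissible beamformer by a $U_k$-measurable random scalar keeps it in $\set{V}_k$ — this is exactly where the hypothesis $S_{l,k} = (S_{l,k}',U_k)$ is used, ensuring $U_k$-measurable functions are $S_{l,k}$-measurable for every $l \in \set{L}_k$.
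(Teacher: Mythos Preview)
Your approach is essentially the paper's: both hinge on the closure of $\set{V}_k$ under multiplication by $U_k$-measurable scalars (your ``genuinely delicate point'') and the scalar-quadratic identity $\essinf_{\alpha\in\set{F}_{U_k}}\mathsf{MSE}_k(\alpha\vec{v}_k|U_k) = (1+\mathsf{SINR}_k^{\mathsf{coh,ul}}(\vec{v}_k|U_k))^{-1}$, which the paper packages as a three-line chain of equalities rather than your two separate inequalities argued from the $\esssup/\essinf$ axioms. One minor slip to clean up: $p_k$ is \emph{not} assumed $U_k$-measurable in this proposition, so you cannot pull $\sqrt{p_k}$ outside $\E[\cdot|U_k]$ as you do in your first display --- but this is harmless, since your subsequent argument only uses $a=\E[g_{k,k}|U_k]$ and $d=\E[|y_k|^2|U_k]$, which keep $\sqrt{p_k}$ inside the conditional expectation where it belongs.
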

\begin{proof} We first observe that $\essinf{\vec{v}_k\in\set{V}_k} \mathsf{MSE}_k(\vec{v}_k|U_k)$  
\begin{align*}
			&\overset{(a)}{=} \essinf{\vec{v}_k\in\set{V}_k} \essinf{\alpha \in \set{F}_{U_k}} \mathsf{MSE}_k(\alpha\vec{v}_k|U_k) \\
			&\overset{(b)}{=}  \essinf{\vec{v}_k\in\set{V}_k} \dfrac{1}{1+\mathsf{SINR}_k^{\mathsf{coh,ul}}(\vec{v}_k|U_k)} \\
&= \dfrac{1}{1+\esssup{\vec{v}_k\in\set{V}_k}\mathsf{SINR}_k^{\mathsf{coh,ul}}(\vec{v}_k|U_k)} 
\end{align*}
where $(a)$ follows from $(\forall \alpha\in \set{F}_{U_k})(\forall \vec{v}_k \in \mathcal{V}_k)$ $\alpha \vec{v}_k \in\mathcal{V}_k$, and $(b)$ follows from standard minimization of convex quadratic forms in $\stdset{C}$ for each realization of $U_k$ as in the proof of Proposition~\ref{prop:coh}. The second part of the statement follows from the similar chain of equalities $\mathsf{MSE}_k(\vec{v}_k^\star|U_k)= \essinf{\alpha \in \set{F}_{U_k}} \mathsf{MSE}_k(\alpha\vec{v}_k^\star|U_k) = (1+\mathsf{SINR}_k^{\mathsf{coh,ul}}(\vec{v}_k^\star|U_k))^{-1}$, where the first equality holds by assumption.
\end{proof}

\begin{corollary}\label{cor:coh}
Assume $(\forall k\in\set{K})(\forall l\in \set{L}_k)$ $S_{l,k}=(S_{l,k}',U_k)$, i.e., that $U_k$ is also common beamforming CSI. For all $k\in \set{K}$, if $\vec{v}_k^\star \in \arg\essinf{\vec{v}_k\in \set{V}_k}\mathsf{MSE}_k(\vec{v}_k|U_k)$, then $\vec{v}_k^\star \in \arg\sup_{\vec{v}_k \in \set{V}_k} R_k^{\mathsf{coh},\mathsf{ul}}(\vec{v}_k)$. 
\end{corollary}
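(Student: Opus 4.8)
The plan is to reduce the statement about $R_k^{\mathsf{coh},\mathsf{ul}}$ to the statement about $\mathsf{SINR}_k^{\mathsf{coh,ul}}$ already established in Proposition~\ref{prop:MMSE-SINR-coh}, exactly mirroring the way Corollary~\ref{cor:uatf} followed from Proposition~\ref{prop:MMSE-SINR}. Recall from \eqref{eq:coh} that $R_k^{\mathsf{coh},\mathsf{ul}}(\vec{v}_k) = \E[\log(1+\mathsf{SINR}_k^{\mathsf{coh,ul}}(\vec{v}_k|U_k))]$, i.e.\ it is the expectation of a monotone (increasing) function of the random variable $\mathsf{SINR}_k^{\mathsf{coh,ul}}(\vec{v}_k|U_k)$. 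So the first step is to invoke Proposition~\ref{prop:MMSE-SINR-coh}: under the hypothesis $S_{l,k} = (S_{l,k}', U_k)$ for all $l \in \set{L}_k$, any $\vec{v}_k^\star \in \arg\essinf_{\vec{v}_k \in \set{V}_k} \mathsf{MSE}_k(\vec{v}_k|U_k)$ satisfies $\vec{v}_k^\star \in \arg\esssup_{\vec{v}_k \in \set{V}_k} \mathsf{SINR}_k^{\mathsf{coh,ul}}(\vec{v}_k|U_k)$, i.e.\ for every competitor $\vec{v}_k \in \set{V}_k$ we have the almost-sure inequality $\mathsf{SINR}_k^{\mathsf{coh,ul}}(\vec{v}_k^\star|U_k) \geq \mathsf{SINR}_k^{\mathsf{coh,ul}}(\vec{v}_k|U_k)$.

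The second step is to push this through the logarithm and the expectation. Since $t \mapsto \log(1+t)$ is nondecreasing on $\stdset{R}_+$, the almost-sure SINR inequality gives, for every $\vec{v}_k \in \set{V}_k$,
\begin{equation*}
\log\bigl(1 + \mathsf{SINR}_k^{\mathsf{coh,ul}}(\vec{v}_k^\star|U_k)\bigr) \geq \log\bigl(1 + \mathsf{SINR}_k^{\mathsf{coh,ul}}(\vec{v}_k|U_k)\bigr) \quad \text{a.s.},
\end{equation*}
and monotonicity of the expectation then yields $R_k^{\mathsf{coh},\mathsf{ul}}(\vec{v}_k^\star) \geq R_k^{\mathsf{coh},\mathsf{ul}}(\vec{v}_k)$ for all $\vec{v}_k \in \set{V}_k$, which is precisely the claim $\vec{v}_k^\star \in \arg\sup_{\vec{v}_k \in \set{V}_k} R_k^{\mathsf{coh},\mathsf{ul}}(\vec{v}_k)$. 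Note that no integrability subtlety arises here beyond what is already implicit in the definition of $R_k^{\mathsf{coh},\mathsf{ul}}$: one only needs that the expectations defining the two rates are well-defined in $[0,\infty]$, which they are as expectations of nonnegative random variables.

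I do not expect a genuine obstacle in this corollary — the real work is in Proposition~\ref{prop:MMSE-SINR-coh}, and here one merely transfers a pointwise (in $\omega$) optimality statement through a monotone transformation and an expectation. The only point that deserves an explicit word is that the essential-supremum optimality of $\vec{v}_k^\star$ for the SINR is exactly the almost-sure domination needed before integrating; one should state this translation clearly rather than leaving it implicit, since the hypothesis that $U_k$ is common beamforming CSI is what makes Proposition~\ref{prop:MMSE-SINR-coh} applicable and hence is the load-bearing assumption inherited by the corollary.
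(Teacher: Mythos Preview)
Your proposal is correct and mirrors the paper's own (implicit) argument: the paper states the corollary without proof, relying on exactly the same reasoning used for Corollary~\ref{cor:uatf}, namely that Proposition~\ref{prop:MMSE-SINR-coh} gives almost-sure SINR domination, and then monotonicity of $t\mapsto\log(1+t)$ together with monotonicity of expectation transfers this to the rate $R_k^{\mathsf{coh},\mathsf{ul}}$. Your explicit unpacking of the essential-supremum optimality as an almost-sure inequality is a welcome clarification but does not depart from the paper's intended route.
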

The above results generalize the known related result in, e.g.,  \cite{massivemimobook,demir2021}, which is restricted to centralized beamforming, Gaussian fading, and MMSE channel estimation. Essentially, it states that the conditional SINR in \eqref{eq:coh} is maximized (in a stochastic sense) by the same beamformer that minimizes the conditional MSE, provided that the CSI $U_k$ used for decoding is also perfectly shared among all beamforming processing units serving user $k$. This assumption is clearly satisfied for centralized uplink processing architectures, regardless the channel model and estimation technique. Furthermore, it also covers semi-distributed architectures operating based on a mix of local and common CSI. An important remark is that, due to this assumption, optimal beamformers attaining the essential infimum can be constructed in practice by evaluating the pointwise infimum for each realization of $U_k$. Based on this intuition, we derive the following proposition that relates conditional and unconditional MSE minimization.  

\begin{proposition}\label{prop:ess}
For all $k\in \set{K}$, assume $(\forall l\in \set{L}_k)$ $S_{l,k}=(S_{l,k}',U_k)$, i.e., that $U_k$ is also common beamforming CSI. Furthermore, assume $(\forall l\in \set{L}_k)~\vec{p}\in \set{P}^K_{S_{l,k}}$, and the regularity condition $\E[\|\vec{H}\vec{H}^\herm\|_\mathrm{F}^2]< \infty$. Then, $(\forall k\in \set{K})$ $\vec{v}_k^\star \in \arg\inf_{\vec{v}_k\in \set{V}_k}\mathsf{MSE}_k(\vec{v}_k) \iff \vec{v}_k^\star \in\arg\essinf{\vec{v}_k\in \set{V}_k}\mathsf{MSE}_k(\vec{v}_k|U_k)$.
\end{proposition}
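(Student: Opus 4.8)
The plan is to reduce both implications to the elementary identity $\mathsf{MSE}_k(\vec{v}_k)=\E[\mathsf{MSE}_k(\vec{v}_k|U_k)]$, valid for every $\vec{v}_k\in\set{V}_k$ by the tower rule applied to \eqref{eq:MSE} and \eqref{eq:MSEcond}, together with one structural fact: because $U_k$ is common beamforming CSI (Definition~\ref{def:common_CSI}), i.e.\ $S_{l,k}=(S_{l,k}',U_k)$ for every $l\in\set{L}_k$, the feasible set $\set{V}_k$ is stable under $U_k$-measurable recombination. Concretely, for any $U_k$-measurable event $B$ and any $\vec{v}_k,\vec{v}_k'\in\set{V}_k$ the beamformer $\vec{w}=\vec{v}_k\mathbbm{1}_B+\vec{v}_k'\mathbbm{1}_{B^c}$ again lies in $\set{V}_k$ (each block $\vec{w}_{l,k}$ is a function of $S_{l,k}$ for $l\in\set{L}_k$ and is $\vec{0}$ otherwise), and moreover $\mathsf{MSE}_k(\vec{w}|U_k)=\mathsf{MSE}_k(\vec{v}_k|U_k)\mathbbm{1}_B+\mathsf{MSE}_k(\vec{v}_k'|U_k)\mathbbm{1}_{B^c}$, since the cross terms vanish on the disjoint events and indicators of $U_k$-measurable sets pull out of $\E[\cdot|U_k]$; the same holds for finite $U_k$-measurable partitions. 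The implication ``$\Leftarrow$'' is then immediate: if $\vec{v}_k^\star$ attains $\essinf{\vec{v}_k\in\set{V}_k}\mathsf{MSE}_k(\vec{v}_k|U_k)$, then $\mathsf{MSE}_k(\vec{v}_k^\star|U_k)\le\mathsf{MSE}_k(\vec{v}_k|U_k)$ a.s.\ for all $\vec{v}_k\in\set{V}_k$, and taking expectations gives $\mathsf{MSE}_k(\vec{v}_k^\star)\le\mathsf{MSE}_k(\vec{v}_k)$.

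For ``$\Rightarrow$'', let $\vec{v}_k^\star$ be an unconditional minimizer and set $M^\star\eqdef\essinf{\vec{v}_k\in\set{V}_k}\mathsf{MSE}_k(\vec{v}_k|U_k)$, so that $M^\star\le\mathsf{MSE}_k(\vec{v}_k^\star|U_k)$ a.s.\ and it suffices to prove equality. By the standard countable-realization property of essential infima there is a sequence $(\vec{v}_k^{(n)})_{n\ge 2}\subset\set{V}_k$ with $\inf_{n\ge 2}\mathsf{MSE}_k(\vec{v}_k^{(n)}|U_k)=M^\star$ a.s.; setting $\vec{v}_k^{(1)}\eqdef\vec{v}_k^\star$ and replacing $\vec{v}_k^{(n)}$ by the recombination that realizes $\min_{j\le n}\mathsf{MSE}_k(\vec{v}_k^{(j)}|U_k)$ (feasible by the stability above) yields $\tilde{\vec{v}}^{(n)}\in\set{V}_k$ with $\mathsf{MSE}_k(\tilde{\vec{v}}^{(n)}|U_k)\downarrow M^\star$ a.s. Since $0\le\mathsf{MSE}_k(\tilde{\vec{v}}^{(n)}|U_k)\le\mathsf{MSE}_k(\vec{v}_k^\star|U_k)$ and $\E[\mathsf{MSE}_k(\vec{v}_k^\star|U_k)]=\mathsf{MSE}_k(\vec{v}_k^\star)\le\mathsf{MSE}_k(\vec{0})=1<\infty$ (note $\vec{0}\in\set{V}_k$), dominated convergence gives $\mathsf{MSE}_k(\tilde{\vec{v}}^{(n)})=\E[\mathsf{MSE}_k(\tilde{\vec{v}}^{(n)}|U_k)]\to\E[M^\star]$. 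But $\mathsf{MSE}_k(\tilde{\vec{v}}^{(n)})\ge\mathsf{MSE}_k(\vec{v}_k^\star)$ for all $n$ by optimality of $\vec{v}_k^\star$, whence $\E[M^\star]\ge\mathsf{MSE}_k(\vec{v}_k^\star)=\E[\mathsf{MSE}_k(\vec{v}_k^\star|U_k)]$; combined with $M^\star\le\mathsf{MSE}_k(\vec{v}_k^\star|U_k)$ a.s.\ this forces $M^\star=\mathsf{MSE}_k(\vec{v}_k^\star|U_k)$ a.s., i.e.\ $\vec{v}_k^\star\in\arg\essinf{\vec{v}_k\in\set{V}_k}\mathsf{MSE}_k(\vec{v}_k|U_k)$.

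An alternative route for ``$\Rightarrow$'', closer in spirit to the preceding material, is to apply Proposition~\ref{prop:stationarity} conditionally: for a.e.\ realization of $U_k$, $\essinf{\vec{v}_k\in\set{V}_k}\mathsf{MSE}_k(\vec{v}_k|U_k)$ is again a minimization of an expectation of strongly convex quadratic forms over the same information constraints, now under the regular conditional law given $U_k$; since $U_k$ is contained in every $S_{l,k}$, conditioning further on $U_k$ inside \eqref{eq:stationarity} changes nothing, so the resulting optimality conditions coincide verbatim with \eqref{eq:stationarity}, which also characterizes the unique unconditional minimizer --- forcing the two to agree a.s. Either way, the assumptions $\vec{p}\in\set{P}^K_{S_{l,k}}$, $\E[\|\vec{H}\vec{H}^\herm\|_\mathrm{F}^2]<\infty$ and Proposition~\ref{prop:stationarity} guarantee that both argmin sets are nonempty singletons, so the equivalence is not vacuous. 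The hard part will be the measure-theoretic bookkeeping: making the recombination-stability of $\set{V}_k$ precise (this is exactly where the common-beamforming-CSI hypothesis $S_{l,k}=(S_{l,k}',U_k)$ enters) and, for the second route, formalizing the conditional version of Proposition~\ref{prop:stationarity} via regular conditional distributions and a measurable selection of the conditional minimizer; beyond that, the argument uses only the tower rule and a single application of dominated convergence.
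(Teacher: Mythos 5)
Your proposal is correct, and your first route is genuinely different from the paper's argument, while your ``alternative route'' is essentially the paper's own proof. The paper handles the easy implication exactly as you do (tower rule plus the definition of the essential infimum), and for the hard implication it invokes Proposition~\ref{prop:stationarity}: the unconditional minimizer is the unique solution of \eqref{eq:stationarity}; writing $S_{l,k}=(S_{l,k}',U_k)$, one defines for each realization $u_k$ the beamformer $\vec{v}_k^{(u_k)}$ solving the conditional version of \eqref{eq:stationarity} (i.e., with $\E[\cdot]$ replaced by $\E[\cdot|U_k=u_k]$), observes that $\vec{v}_k^\star=\vec{v}_k^{(U_k)}$, and uses the sufficiency part of Proposition~\ref{prop:stationarity} under the conditional law to conclude that $\vec{v}_k^{(u_k)}$ attains $\inf_{\vec{v}_k\in\set{V}_k}\mathsf{MSE}_k(\vec{v}_k|U_k=u_k)$ for a.e.\ $u_k$ (the condition $\E[\|\vec{H}\vec{H}^\herm\|_{\mathrm{F}}^2|U_k]<\infty$ a.s.\ is where the regularity assumption enters); the measurable-selection concern you flag is largely moot there because the pointwise candidate is exhibited as $\vec{v}_k^\star$ itself. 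Your first route replaces all of this by a lattice argument: the common-CSI hypothesis makes $\set{V}_k$ stable under $U_k$-measurable recombination, so the family $\{\mathsf{MSE}_k(\vec{v}_k|U_k):\vec{v}_k\in\set{V}_k\}$ is directed downwards, the essential infimum is the a.s.\ decreasing limit along a feasible sequence containing $\vec{v}_k^\star$, and dominated (or decreasing monotone) convergence together with unconditional optimality forces $\mathsf{MSE}_k(\vec{v}_k^\star|U_k)$ to equal the essential infimum a.s. This is more elementary and strictly more general: it uses neither $\vec{p}\in\set{P}^K_{S_{l,k}}$ nor $\E[\|\vec{H}\vec{H}^\herm\|_{\mathrm{F}}^2]<\infty$, revealing that those hypotheses are needed only to guarantee (via Proposition~\ref{prop:stationarity}) that the argmin sets are nonempty singletons and to make the paper's stationarity-based route work, not for the equivalence itself. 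What the paper's route buys in exchange is constructive content --- it identifies the common minimizer explicitly through \eqref{eq:stationarity} and avoids appealing to the countable-realization property of essential infima --- whereas your route buys robustness and shorter measure-theoretic bookkeeping (the only nontrivial ingredients are the recombination identity, which is exactly where $S_{l,k}=(S_{l,k}',U_k)$ is used, and one convergence theorem).
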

\begin{proof}
By definition of essential infimum, $(\forall \vec{v}_k\in \set{V}_k)$
\begin{equation*}
    \mathsf{MSE}_k(\vec{v}_k) = \E\left[\mathsf{MSE}_k(\vec{v}_k|U_k) \right] \geq \E\Big[\essinf{\vec{v}_k\in \set{V}_k}\mathsf{MSE}_k(\vec{v}_k|U_k) \Big]
\end{equation*}
holds. Hence, $\vec{v}_k^\star \in \arg \essinf{\vec{v}_k\in \set{V}_k}\mathsf{MSE}_k(\vec{v}_k|U_k) \implies \vec{v}_k^\star \in \arg \inf_{\vec{v}_k\in \set{V}_k}\mathsf{MSE}_k(\vec{v}_k)$ readily holds (without additional assumptions). The proof of the converse statement is given in Appendix~\ref{proof:ess}, and is based on the additional assumptions.
\end{proof}

\begin{remark}\label{rem:power2}
If $(\forall l\in \set{L}_k)$ $S_{l,k}=(S_{l,k}',U_k)$ and $\vec{p}\in \set{P}_{U_k}^K$, i.e., if $U_k$ is  also common information for both beamforming and power control, then $(\forall l \in \set{L}_k)$ $\vec{p}\in \set{P}^K_{S_{l,k}}$ readily holds. This observation is instrumental for the downlink case, covered later in Section~\ref{sec:duality}.
\end{remark}

Combining Corollary~\ref{cor:coh} and Proposition~\ref{prop:ess}, we then obtain the following connection between unconditional MSE minimization and ergodic rate maximization problems.
\begin{corollary}\label{cor:uatf-coh}
Assume $(\forall k \in \set{K})(\forall l\in \set{L}_k)$ $S_{l,k}=(S_{l,k}',U_k)$, i.e., that $U_k$ is also common beamforming CSI. Furthermore, assume $(\forall l\in \set{L}_k)~\vec{p}\in \set{P}^K_{S_{l,k}}$, and the regularity condition $\E[\|\vec{H}\vec{H}^\herm\|_\mathrm{F}^2]< \infty$. If $\vec{v}_k^\star \in \arg\inf_{\vec{v}_k\in \set{V}_k}\mathsf{MSE}_k(\vec{v}_k)$, then $\vec{v}_k^\star \in \arg\sup_{\vec{v}_k \in \set{V}_k} R_k^{\mathsf{coh},\mathsf{ul}}(\vec{v}_k)$. 
\end{corollary}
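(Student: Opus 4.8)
The plan is to chain the two preceding results; the hypotheses of the corollary have been assembled precisely so that both of them apply simultaneously. First I would invoke Proposition~\ref{prop:ess}. Its assumptions --- that $U_k$ is common beamforming CSI (i.e.\ $S_{l,k}=(S_{l,k}',U_k)$), that $\vec{p}\in\set{P}^K_{S_{l,k}}$ for all $l\in\set{L}_k$, and the regularity condition $\E[\|\vec{H}\vec{H}^\herm\|_\mathrm{F}^2]<\infty$ --- are exactly those postulated in the corollary, so Proposition~\ref{prop:ess} yields $\arg\inf_{\vec{v}_k\in\set{V}_k}\mathsf{MSE}_k(\vec{v}_k)=\arg\essinf{\vec{v}_k\in\set{V}_k}\mathsf{MSE}_k(\vec{v}_k|U_k)$. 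In particular, from $\vec{v}_k^\star\in\arg\inf_{\vec{v}_k\in\set{V}_k}\mathsf{MSE}_k(\vec{v}_k)$ we obtain $\vec{v}_k^\star\in\arg\essinf{\vec{v}_k\in\set{V}_k}\mathsf{MSE}_k(\vec{v}_k|U_k)$.

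Second, I would apply Corollary~\ref{cor:coh}, whose sole hypothesis ($S_{l,k}=(S_{l,k}',U_k)$) is again part of the present statement: it states that any minimizer of the conditional MSE essentially maximizes $\mathsf{SINR}_k^{\mathsf{coh,ul}}(\cdot\,|U_k)$ and hence, by monotonicity of $\log(1+\cdot)$ and of $\E[\cdot]$, maximizes $R_k^{\mathsf{coh},\mathsf{ul}}$. Composing the two implications gives $\vec{v}_k^\star\in\arg\sup_{\vec{v}_k\in\set{V}_k}R_k^{\mathsf{coh},\mathsf{ul}}(\vec{v}_k)$, which is the claim.

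At this level there is no real obstacle left: all the analytic work has already been done upstream. Worth noting, however, is that the single nontrivial ingredient being used is the implication ``unconditional minimizer $\Rightarrow$ conditional minimizer'' of Proposition~\ref{prop:ess} --- the direction whose proof is deferred to Appendix~\ref{proof:ess} and which is the only one to consume the power-measurability and regularity assumptions; the reverse implication, and Corollary~\ref{cor:coh}, hold without them. I would also point out that the statement is conditional on the existence of an unconditional MSE minimizer $\vec{v}_k^\star$, so no appeal to the fixed-point characterization of Proposition~\ref{prop:stationarity} is needed here, and that one should simply check (as above) that the corollary's hypotheses reproduce verbatim those of both ingredients.
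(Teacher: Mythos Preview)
Your proposal is correct and follows exactly the paper's approach: the paper derives Corollary~\ref{cor:uatf-coh} simply by ``combining Corollary~\ref{cor:coh} and Proposition~\ref{prop:ess},'' which is precisely the chain of implications you spell out. Your additional remarks about which direction of Proposition~\ref{prop:ess} is actually used, and which hypotheses it consumes, are accurate and add clarity beyond the paper's terse justification.
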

Interestingly, we observe that optimizing the unconditional MSE under centralized or semi-distributed beamforming architectures (and mild assumptions, see Remark~\ref{rem:regularity} and  Remark~\ref{rem:power}) is equivalent to optimizing both the coherent decoding lower bound and the UatF lower bound (see  Corollary~\ref{cor:uatf}). For example, if $S_k = U_k$, optimal (clustered) centralized beamformers in terms of both the coherent decoding lower bound and the UatF bound can be readily obtained as discussed in Remark~\ref{rem:centr}. Furthermore, more general cases can be addressed by finding a solution to the feasibility problem \eqref{eq:stationarity}.


\section{Joint channel estimation and beamforming}\label{sec:main}
Having established the optimality of uplink MSE processing under fairly general conditions, we now focus on the beamforming optimization problem for each user $k\in \set{K}$
\begin{equation}\label{eq:MSE_prob}
\begin{aligned}
\underset{\vec{v}_k\in \set{V}_k}{\text{minimize}}~\mathsf{MSE}_k(\vec{v}_k),
\end{aligned}
\end{equation} 
where we recall that $\set{V}_k$ denotes a subset of functions of some arbitrary CSI $S_k$. In particular, $S_k$ does not need to be restricted to channel estimates $\hat{\vec{H}}$ obtained as the output of a specific channel estimation procedure. For example, $S_k$ could be directly mapped to raw measurements of uplink pilot signals collected by the infrastructure. In this case, \eqref{eq:MSE_prob} should be rather interpreted as \textit{joint} channel estimation and beamforming optimization problems. In general, the usual approach of separating channel estimation and beamforming may not be optimal for the joint problem. Nevertheless, in the following we demonstrate that this approach is actually optimal under common assumptions that apply to a wide variety of models in the literature.  

\subsection{Centralized processing architectures}
We start by considering the particular case of centralized uplink processing architectures, for which an optimal solution can be obtained directly.
\begin{proposition}\label{prop:centr}
For all $k \in \set{K}$, let $(\vec{H},S_k)$ be circularly symmetric jointly Gaussian distributed, $(\forall l \in \set{L})$ $(\forall j \in \set{L})$ $S_{l,k} = S_{j,k}$ (centralized CSI), and $\vec{p}\in \set{P}^K_{S_k}$.  Then, the optimal solution to \eqref{eq:MSE_prob} is given by the centralized MMSE beamformer
\begin{equation}\label{eq:CMMSE}
\vec{v}_k^\star = \Big(\hat{\vec{H}}^{(k)}\vec{P}\hat{\vec{H}}^{(k)\herm} + \vec{\Psi}^{(k)}+\vec{I}_M  \Big)^{-1}\hat{\vec{H}}^{(k)}\vec{P}^{\frac{1}{2}}\vec{e}_k,
\end{equation}
where $\hat{\vec{H}}^{(k)} \eqdef \E[\vec{H}^{(k)}|S_k]$
is the (linear/affine) MMSE estimate of $\vec{H}^{(k)}\eqdef \vec{C}_k\vec{H}$ given $S_k$, where $\vec{C}_k \eqdef \mathrm{diag}(\vec{C}_{1,k},\ldots,\vec{C}_{L,k})$ is a block-diagonal matrix satisfying
\begin{equation*}
(\forall l \in \set{L})~\vec{C}_{l,k}\eqdef\begin{cases}
\vec{I}_N & \text{if } l\in \set{L}_k, \\
\vec{0}_{N\times N} & \text{otherwise},
\end{cases}
\end{equation*}
and where $\vec{\Psi}^{(k)} \eqdef \sum_{j\in \set{K}}p_j\E[\tilde{\vec{h}}_j^{(k)}\tilde{\vec{h}}_j^{(k)\herm}]$ collects the covariance matrices of the columns $\tilde{\vec{h}}_j^{(k)}$ of the estimation error $\tilde{\vec{H}}^{(k)}\eqdef \hat{\vec{H}}^{(k)}-\vec{H}^{(k)}$.
\end{proposition}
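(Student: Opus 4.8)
The plan is to reduce the constrained problem \eqref{eq:MSE_prob} under the stated hypotheses to the (clustered) centralized closed form already recorded in Remark~\ref{rem:centr}, and then to rewrite that solution through the joint Gaussianity so that the MMSE channel estimate appears explicitly.

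First I would verify that the hypotheses of the proposition subsume those needed for the pointwise minimization underlying Remark~\ref{rem:centr}: circularly symmetric joint Gaussianity of $(\vec{H},S_k)$ yields the regularity condition $\E[\|\vec{H}\vec{H}^\herm\|_\mathrm{F}^2]<\infty$ (Remark~\ref{rem:regularity}), while $\vec{p}\in\set{P}^K_{S_k}$ together with $S_{l,k}=S_{j,k}$ gives $\vec{p}\in\set{P}^K_{S_{l,k}}$ for every $l$, so that $\vec{P}=\mathrm{diag}(\vec{p})$ is a function of the common CSI $S_k$. Since $\vec{C}_k$ is Hermitian and $\vec{C}_k\vec{v}_k=\vec{v}_k$ for every $\vec{v}_k\in\set{V}_k$, we have $\vec{v}_k^\herm\vec{H}=\vec{v}_k^\herm\vec{H}^{(k)}$, so on $\set{V}_k$ the MSE coincides with that of the auxiliary (unconstrained, centralized) problem obtained by replacing $\vec{H}$ with $\vec{H}^{(k)}=\vec{C}_k\vec{H}$; minimizing the latter over $\set{F}^M_{S_k}$ pointwise in $S_k$ gives, exactly as in Remark~\ref{rem:centr},
\[
\vec{v}_k^\star=\Big(\E[\vec{H}^{(k)}\vec{P}\vec{H}^{(k)\herm}|S_k]+\vec{I}_M\Big)^{-1}\E[\vec{H}^{(k)}|S_k]\vec{P}^{\frac{1}{2}}\vec{e}_k,
\]
and since $\vec{H}^{(k)}$ vanishes on the antenna groups outside $\set{L}_k$, so does $\vec{v}_k^\star$, whence $\vec{v}_k^\star\in\set{V}_k$ and it is indeed the constrained minimizer.

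It then remains to re-express this formula. Put $\hat{\vec{H}}^{(k)}\eqdef\E[\vec{H}^{(k)}|S_k]$ and $\tilde{\vec{H}}^{(k)}\eqdef\hat{\vec{H}}^{(k)}-\vec{H}^{(k)}$. Because $(\vec{H},S_k)$ is circularly symmetric jointly Gaussian, $\hat{\vec{H}}^{(k)}$ is the linear/affine MMSE estimate, $\E[\tilde{\vec{H}}^{(k)}|S_k]=\vec{0}$, and — the key point — $\tilde{\vec{H}}^{(k)}$ is \emph{independent} of $S_k$ with deterministic per-column covariances $\E[\tilde{\vec{h}}_j^{(k)}\tilde{\vec{h}}_j^{(k)\herm}]$. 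Writing $\vec{H}^{(k)}=\hat{\vec{H}}^{(k)}-\tilde{\vec{H}}^{(k)}$ and expanding $\vec{H}^{(k)}\vec{P}\vec{H}^{(k)\herm}$, the two cross terms vanish under $\E[\cdot|S_k]$ since $\hat{\vec{H}}^{(k)},\vec{P}$ are functions of $S_k$ and $\E[\tilde{\vec{H}}^{(k)}|S_k]=\vec{0}$, while $\E[\tilde{\vec{H}}^{(k)}\vec{P}\tilde{\vec{H}}^{(k)\herm}|S_k]=\sum_{j\in\set{K}}p_j\,\E[\tilde{\vec{h}}_j^{(k)}\tilde{\vec{h}}_j^{(k)\herm}]=\vec{\Psi}^{(k)}$ by the independence of $\tilde{\vec{H}}^{(k)}$ and $S_k$. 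Hence $\E[\vec{H}^{(k)}\vec{P}\vec{H}^{(k)\herm}|S_k]=\hat{\vec{H}}^{(k)}\vec{P}\hat{\vec{H}}^{(k)\herm}+\vec{\Psi}^{(k)}$ and $\E[\vec{H}^{(k)}|S_k]\vec{P}^{\frac{1}{2}}\vec{e}_k=\hat{\vec{H}}^{(k)}\vec{P}^{\frac{1}{2}}\vec{e}_k$, and substituting these into the displayed expression for $\vec{v}_k^\star$ gives \eqref{eq:CMMSE}.

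The only delicate step is the one place where circularly symmetric joint Gaussianity is genuinely exploited, as opposed to mere second-order statistics: upgrading ``the estimation error is uncorrelated with $S_k$'' (the orthogonality principle) to ``the estimation error is \emph{independent} of $S_k$''. This is precisely what permits replacing the conditional error covariance by the unconditional, deterministic $\E[\tilde{\vec{h}}_j^{(k)}\tilde{\vec{h}}_j^{(k)\herm}]$ even though the powers $p_j=p_j(S_k)$ are themselves functions of $S_k$. The remaining ingredients — measurability of $\vec{v}_k^\star$ in $S_k$ (immediate from the closed form), invertibility of the bracketed matrix (from the $+\vec{I}_M$ term), and the inheritance of the clustering constraint from the block structure of $\vec{C}_k$ — are routine.
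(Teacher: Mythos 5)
Your proof is correct and follows essentially the same route as the paper: reduce to the centralized closed form of Remark~\ref{rem:centr} (via Proposition~\ref{prop:stationarity}), then use circularly symmetric joint Gaussianity to decompose $\vec{H}^{(k)}$ into the MMSE estimate plus an error independent of $S_k$, which kills the cross terms and turns the conditional error covariance into the deterministic $\vec{\Psi}^{(k)}$ despite $\vec{p}$ being a function of $S_k$. The only difference is cosmetic: you treat the clustered case $\set{L}_k\subseteq\set{L}$ explicitly through $\vec{C}_k\vec{v}_k=\vec{v}_k$ and by checking that the unconstrained minimizer lies in $\set{V}_k$, whereas the paper proves $\set{L}_k=\set{L}$ and notes the general case follows by substituting $\vec{H}^{(k)}$ for $\vec{H}$.
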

\begin{proof} Note that the assumptions of Proposition~\ref{prop:stationarity} are satisfied. For the case $\set{L}_k = \set{L}$ (and hence $\vec{H}^{(k)}=\vec{H}$), by recalling Remark~\ref{rem:centr}, we then have 
\begin{equation*}
\vec{v}_k^\star := \Big(\E[\vec{H}\vec{P}\vec{H}^\herm|S_k] + \vec{I}_M  \Big)^{-1}\E[\vec{H}|S_k]\vec{P}^{\frac{1}{2}}\vec{e}_k.
\end{equation*}
The conditional mean $\hat{\vec{H}}^{(k)}=\E[\vec{H}|S_k]$ is the MMSE estimate of $\vec{H}$ given $S_k$. Since $\vec{H}$ and $S_k$ are jointly Gaussian, then $\hat{\vec{H}}^{(k)}$ is also the linear/affine MMSE estimate of $\vec{H}$ given $S_k$, and the estimation error $\tilde{\vec{H}}^{(k)}=\vec{H}-\hat{\vec{H}}^{(k)}$ is independent of $S_k$. Therefore, by using $\E[\hat{\vec{H}}^{(k)}\vec{P}\tilde{\vec{H}}^{(k)\herm}|S_k]= \hat{\vec{H}}^{(k)}\vec{P}\E[\tilde{\vec{H}}^{(k)\herm}] = \vec{0}$, we have
\begin{align*}
\E[\vec{H}\vec{P}\vec{H}^\herm|S_k]&=\hat{\vec{H}}^{(k)}\vec{P}\hat{\vec{H}}^{(k)\herm} + \E[\tilde{\vec{H}}^{(k)}\vec{P}\tilde{\vec{H}}^{(k)\herm}|S_k]\\
&= \hat{\vec{H}}^{(k)}\vec{P}\hat{\vec{H}}^{(k)\herm} + \sum_{j\in \set{K}}p_j\E[\tilde{\vec{h}}_j^{(k)}\tilde{\vec{h}}_j^{(k)\herm}|S_k] \\
&= \hat{\vec{H}}^{(k)}\vec{P}\hat{\vec{H}}^{(k)\herm} + \sum_{j\in \set{K}}p_j\E[\tilde{\vec{h}}_j^{(k)}\tilde{\vec{h}}_j^{(k)\herm}].
\end{align*}
The more general case with $\set{L}_k \subseteq \set{L}$ follow readily by replacing $\vec{H}$ with $\vec{H}^{(k)}$ in all the above steps.
\end{proof}

The above proposition establishes that, under Gaussian measurements of Gaussian channels (and a mild assumption on the transmit powers, see Remark~\ref{rem:power}), optimal centralized uplink MSE processing can be decomposed into a linear/affine MMSE channel estimation step followed by a centralized MMSE beamforming step.

\begin{remark} 
Combined with Corollary~\ref{cor:uatf} and Corollary~\ref{cor:coh} in Section~\ref{sec:uplink}, Proposition~\ref{prop:centr} also establishes the optimality of this functional separation in terms of both the UatF lower bound and the coherent decoding lower bound (given decoder CSI $U_k=S_k$) on the achievable ergodic rates \eqref{eq:mutual_info}.
\end{remark}

Due to to its simplicity, this type of processing is already widely considered in the literature under specific models that satisfy our assumption on $(\vec{H},S_k)$ (see, e.g., \cite{massivemimobook},\cite{demir2021}). However, the optimality of this functional separation in terms of ergodic achievable rates was not reported before.

\subsection{Distributed processing architectures}
We now consider more general distributed uplink processing architectures subject to mild assumptions related to the CSI acquisition procedure. Specifically, we consider the following specialized version of the necessary and sufficient optimality conditions given by \eqref{eq:stationarity}.
\begin{proposition}\label{prop:distr}
For all $k \in \set{K}$, let $(\vec{H},S_k)$ be circularly symmetric jointly Gaussian distributed, $(\forall l \in \set{L})(\forall j \in \set{L}\backslash\{l\})$ $\vec{H}_l \to S_{l,k} \to S_{j,k} \to \vec{H}_j$ form a Markov chain, and $(\forall l\in \set{L}_k)~\vec{p}\in \set{P}^K_{S_{l,k}}$.
Then, Problem \eqref{eq:MSE_prob} admits a solution $\vec{v}^\star_k$ if and only if $\vec{v}_k^\star$ is the unique $\vec{v}_k\in \set{V}_k$ satisfying $(\forall l \in \set{L}_k)$
\begin{equation}\label{eq:TMMSE}
\vec{v}_{l,k} = \vec{V}_{l,k} \Bigg(\vec{e}_k - \sum_{j\in \set{L}_k \backslash \{l\}}\vec{P}^{\frac{1}{2}}\E\left[ \hat{\vec{H}}_j^{(k)\herm}\vec{v}_{j,k}\middle|S_{l,k}\right]  \Bigg),
\end{equation}
where $\vec{V}_{l,k} \eqdef \Big(\hat{\vec{H}}_l^{(k)}\vec{P}\hat{\vec{H}}_l^{(k)\herm }+ \vec{\Psi}_l^{(k)}+\vec{I}_N  \Big)^{-1}\hat{\vec{H}}_l^{(k)}\vec{P}^{\frac{1}{2}}$ is the local MMSE beamformer, where $\hat{\vec{H}}_l^{(k)} \eqdef \E[\vec{H}_l|S_{l,k}]$ is the (linear/affine) MMSE estimate of $\vec{H}_l$ given $S_{l,k}$, and where $\vec{\Psi}_l^{(k)} \eqdef \sum_{j\in \set{K}}p_j\E[\tilde{\vec{h}}_{l,j}^{(k)}\tilde{\vec{h}}_{l,j}^{(k)\herm}]$ collects the covariance matrices of the columns $\tilde{\vec{h}}_{l,j}^{(k)}$ of the local estimation error $\tilde{\vec{H}}_l^{(k)}\eqdef \hat{\vec{H}}_l^{(k)}-\vec{H}_l^{(k)}$. 
\end{proposition}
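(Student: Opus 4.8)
The plan is to specialize the general optimality conditions \eqref{eq:stationarity} from Proposition~\ref{prop:stationarity} to the Gaussian setting with the stated Markov chain structure, and to show that each conditional expectation appearing in \eqref{eq:stationarity} factors through the local MMSE channel estimates $\hat{\vec{H}}_l^{(k)}$. First I would verify that the hypotheses of Proposition~\ref{prop:stationarity} are met: the regularity condition $\E[\|\vec{H}\vec{H}^\herm\|_\mathrm{F}^2]<\infty$ holds for Gaussian fading by Remark~\ref{rem:regularity}, and $\vec{p}\in\set{P}^K_{S_{l,k}}$ is assumed directly. Hence Problem~\eqref{eq:MSE_prob} admits a solution iff $\vec{v}_k^\star$ is the unique $\vec{v}_k\in\set{V}_k$ solving \eqref{eq:stationarity}. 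The remaining work is purely to rewrite the $l$th equation of \eqref{eq:stationarity} into the claimed form \eqref{eq:TMMSE}.

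The key algebraic step is the treatment of $\E[\vec{H}_l\vec{P}\vec{H}_l^\herm|S_{l,k}]$ and $\E[\vec{H}_l|S_{l,k}]\vec{P}^{1/2}\vec{e}_k$. By joint Gaussianity of $(\vec{H},S_k)$, the conditional mean $\hat{\vec{H}}_l^{(k)}=\E[\vec{H}_l|S_{l,k}]$ (after applying the selection matrix $\vec{C}_{l,k}$, which is $\vec{I}_N$ for $l\in\set{L}_k$) is the linear/affine MMSE estimate and the error $\tilde{\vec{H}}_l^{(k)}$ is independent of $S_{l,k}$; exactly as in the proof of Proposition~\ref{prop:centr} this yields $\E[\vec{H}_l\vec{P}\vec{H}_l^\herm|S_{l,k}] = \hat{\vec{H}}_l^{(k)}\vec{P}\hat{\vec{H}}_l^{(k)\herm} + \vec{\Psi}_l^{(k)}$, so that the inverse in \eqref{eq:stationarity} becomes exactly $\vec{V}_{l,k}(\hat{\vec{H}}_l^{(k)}\vec{P}^{1/2})^{-1}$ in the obvious sense, i.e.\ the prefactor $(\hat{\vec{H}}_l^{(k)}\vec{P}\hat{\vec{H}}_l^{(k)\herm}+\vec{\Psi}_l^{(k)}+\vec{I}_N)^{-1}\hat{\vec{H}}_l^{(k)}\vec{P}^{1/2}$ collapses to $\vec{V}_{l,k}$, and the first term inside the large parentheses of \eqref{eq:TMMSE} becomes $\vec{V}_{l,k}\vec{e}_k$. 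For the cross terms $\E[\vec{H}_l\vec{P}\vec{H}_j^\herm\vec{v}_{j,k}|S_{l,k}]$ with $j\in\set{L}_k\backslash\{l\}$, I would use the Markov chain $\vec{H}_l\to S_{l,k}\to S_{j,k}\to\vec{H}_j$: since $\vec{v}_{j,k}\in\set{F}^N_{S_{j,k}}$, conditioning on $S_{l,k}$ first and invoking the Markov property together with Gaussianity lets me replace $\vec{H}_j$ by $\E[\vec{H}_j|S_{j,k}]=\hat{\vec{H}}_j^{(k)}$ and pull $\vec{H}_l$ out as $\hat{\vec{H}}_l^{(k)}$ (the residual $\tilde{\vec{H}}_l^{(k)}$ being independent of $S_{l,k}$ and zero-mean); after multiplying through by $\vec{V}_{l,k}(\hat{\vec{H}}_l^{(k)}\vec{P}^{1/2})^{-1}$ one gets the term $-\vec{V}_{l,k}\vec{P}^{1/2}\E[\hat{\vec{H}}_j^{(k)\herm}\vec{v}_{j,k}|S_{l,k}]$. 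Collecting terms gives \eqref{eq:TMMSE}, and uniqueness is inherited directly from Proposition~\ref{prop:stationarity}.

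The main obstacle I anticipate is the careful handling of the cross terms under the Markov assumption, specifically justifying the interchange $\E[\vec{H}_l\vec{P}\vec{H}_j^\herm\vec{v}_{j,k}|S_{l,k}] = \E\bigl[\hat{\vec{H}}_l^{(k)}\vec{P}\,\E[\vec{H}_j^\herm|S_{j,k}]\,\vec{v}_{j,k}\,\big|\,S_{l,k}\bigr]$. This requires two separate reductions: (i) replacing the inner $\vec{H}_l$ by its estimate, which needs that $\tilde{\vec{H}}_l^{(k)}$ is conditionally uncorrelated with the $S_{j,k}$-measurable quantity $\vec{H}_j^\herm\vec{v}_{j,k}$ given $S_{l,k}$ — this follows because, under joint Gaussianity and the Markov structure, $\tilde{\vec{H}}_l^{(k)}$ is independent of $(S_{l,k},S_{j,k})$ hence of $(S_{l,k},\vec{H}_j)$ up to the Gaussian-conditional-independence caveat that has to be spelled out; and (ii) replacing $\vec{H}_j$ by $\hat{\vec{H}}_j^{(k)}$ inside the expectation given $S_{l,k}$, which uses the tower property with the Markov chain $S_{l,k}\to S_{j,k}\to\vec{H}_j$ so that $\E[\vec{H}_j|S_{l,k},S_{j,k}]=\E[\vec{H}_j|S_{j,k}]$. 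I would be careful to note that the selection matrices $\vec{C}_{l,k}$ make $\hat{\vec{H}}_l^{(k)}$ and $\vec{V}_{l,k}$ vanish for $l\notin\set{L}_k$, consistent with $\set{V}_{l,k}=\{\vec{0}\}$, so the reduction only needs to be carried out for $l\in\set{L}_k$. Everything else is the same linear-MMSE bookkeeping already used in Proposition~\ref{prop:centr}, so the proof should be short once the conditional-independence argument for the cross terms is stated cleanly.
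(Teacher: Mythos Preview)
Your proposal is correct and follows essentially the same route as the paper: reduce to the stationarity conditions of Proposition~\ref{prop:stationarity}, simplify the diagonal block $\E[\vec{H}_l\vec{P}\vec{H}_l^\herm|S_{l,k}]$ via the Gaussian MMSE decomposition as in Proposition~\ref{prop:centr}, and then factor the cross terms using the Markov chain. The only organizational difference is in the cross-term step: the paper first applies the tower property with respect to $(S_{l,k},S_{j,k})$, then invokes the conditional independence $\vec{H}_l\to(S_{l,k},S_{j,k})\to\vec{H}_j$ to factor $\E[\vec{H}_l\vec{P}\vec{H}_j^\herm|S_{l,k},S_{j,k}]$ directly, and finally simplifies each factor via the sub-chains; this avoids the separate argument that $\tilde{\vec{H}}_l^{(k)}$ is independent of $(S_{l,k},S_{j,k},\vec{H}_j)$ that your split into (i) and (ii) requires, but the two arguments are equivalent.
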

\begin{proof}
Note that the assumptions of Proposition~\ref{prop:stationarity} are satisfied. We then manipulate the expression
\begin{equation*}
\begin{split}
&\vec{v}_{l,k} =\Big(\E[\vec{H}_l\vec{P}\vec{H}_l^\herm|S_{l,k}]+\vec{I}_N\Big)^{-1}\\
&\Big(\E[\vec{H}_l|S_{l,k}]\vec{P}^{\frac{1}{2}}\vec{e}_k-\textstyle\sum_{j\in \set{L}_k\backslash \{l\}}\E[\vec{H}_l\vec{P}\vec{H}_j^\herm\vec{v}_{j,k}|S_{l,k}] \Big)
\end{split}
\end{equation*}
in \eqref{eq:stationarity} using the additional assumptions. For all $l\in \set{L}$, due to the Gaussianity of $(\vec{H}_l,S_{l,k})$, we have that $\hat{\vec{H}}^{(k)}_l=\E[\vec{H}_{l}|S_{l,k}]$ corresponds to the linear/affine MMSE estimate of $\vec{H}_l$ given $S_{l,k}$, and, by following similar steps as in the proof of Proposition~\ref{prop:centr}, that \begin{align*}
\E[\vec{H}_l\vec{P}\vec{H}_l^\herm|S_{l,k}]=\hat{\vec{H}}_l^{(k)}\vec{P}\hat{\vec{H}}_l^{(k)\herm} + \vec{\Psi}_l^{(k)}.
\end{align*}
Moreover, we have $(\forall l \in \set{L})(\forall j \in \set{L}\backslash\{l\})$
\begin{align*}
&\E[\vec{H}_l\vec{P}\vec{H}_j^\herm \vec{v}_{j,k} | S_{l,k}]\\
&= \E[\E[\vec{H}_l\vec{P}\vec{H}_j^\herm |S_{j,k},S_{l,k}]\vec{v}_{j,k} | S_{l,k}]\\
&= \E[\E[\vec{H}_l|S_{j,k},S_{l,k}]\vec{P}\E[\vec{H}_j^\herm |S_{j,k},S_{l,k}]\vec{v}_{j,k} | S_{l,k}]\\
&= \E[\E[\vec{H}_l|S_{l,k}]\vec{P}\E[\vec{H}_j^\herm |S_{j,k}]\vec{v}_{j,k} | S_{l,k}]\\
& = \hat{\vec{H}}^{(k)}_l\vec{P}\E[\hat{\vec{H}}_j^{(k)\herm}\vec{v}_{j,k} | S_{l,k}],
\end{align*}
where the first equality follows from the law of total expectation and $\vec{v}_{j,k} \in \set{F}_{S_{j,k}}^N$, the second equality from the Markov chain $\vec{H}_l \to (S_{l,k},S_{j,k}) \to \vec{H}_j$ and since $\vec{p}\in \set{P}^K_{S_{l,k}}$, and the third equality from the Markov chains $S_{l,k} \to S_{j,k} \to \vec{H}_j $ and $S_{j,k} \to S_{l,k} \to \vec{H}_l$. The proof is concluded by rearranging the terms.
\end{proof}
\begin{remark}
The Markov chain in the above proposition essentially corresponds to scenarios where each $j$th processing unit can only acquire a degraded version of the local measurements of the local channel $\vec{H}_l$ available at the $l$th processing unit. This is a very common setup in the distributed cell-free massive MIMO literature, where CSI is typically acquired by means of pilot-based local measurements of the local channels, potentially followed by an imperfect CSI sharing procedure over the fronthaul, and where the local channels are mutually independent due to their geographical separation.
\end{remark}

Proposition~\ref{prop:distr} establishes that, under degraded Gaussian measurements of Gaussian channels, optimal distributed uplink MSE processing can be decomposed into: (i) a local MMSE channel estimation stage; (ii) a local MMSE beamforming stage; and (iii) a correction stage for each processing unit that takes into account the impact of the other processing units based on the available information. 

\begin{remark}
Combined with Corollary~\ref{cor:uatf} in Section~\ref{sec:uplink}, Proposition~\ref{prop:distr} shows that this functional separation is also optimal in terms of the UatF lower bound on the achievable ergodic rates \eqref{eq:mutual_info}. Moreover, combined with Corollary~\ref{cor:coh}, it also establishes the optimality in terms of the coherent decoding lower bound, provided that the decoder CSI $U_k$ is perfectly shared among all processing units $l\in \set{L}_k$ serving user $k\in \set{K}$. 
\end{remark}

We point out that very similar optimality conditions to those in Proposition~\ref{prop:distr} were already reported in \cite{miretti2021team}[Lemma~2], by restricting the CSI of each processing unit to local (MMSE) channel estimates and to degraded versions of the local estimates available at the other processing units. Our result considers  more general setups.

\subsection{Examples}\label{sec:examples}
The optimality conditions in Proposition~\ref{prop:distr} correspond to an infinite dimensional linear system of equations. In general, solving this system may require approximate numerical algorithms \cite{yukselbook}. Nevertheless, in the following we present two simple yet practical examples that admit a closed-form solution in terms of channel statistics. First, we consider the case of fully local processing.
\begin{proposition}\label{prop:local} For all $k \in \set{K}$, let $(\forall l \in \set{L})(\forall j \in \set{L}\backslash\{l\})$ $(\vec{H}_l,S_{l,k})$ be circularly symmetric jointly Gaussian distributed and independent of $(\vec{H}_j,S_{j,k})$ (local CSI), and $\vec{p}\in \stdset{R}_+^K$. Then, the optimal solution to \eqref{eq:MSE_prob} is given by
\begin{equation}\label{eq:LTMMSE}
(\forall l\in \set{L})~\vec{v}_{l,k}^\star = \vec{V}_{l,k}\vec{c}_{l,k}, 
\end{equation}
where $\vec{V}_{l,k}$ is the local MMSE stage obtained as a function of the local MMSE channel estimate $\hat{\vec{H}}^{(k)}_l$ defined in Proposition~\ref{prop:distr}, and  $\vec{c}_{l,k}\in \stdset{C}^K$ is a statistical beamforming stage given by the unique solution to the linear system of equations
\begin{equation*}
\begin{cases}\vec{c}_{l,k} + \sum_{j \in \set{L}_k \backslash \{l\}}\vec{\Pi}_{j,k} \vec{c}_{j,k} = \vec{e}_k & \forall l \in \set{L}_k, \\
\vec{c}_{l,k} = \vec{0}_{K\times 1} & \text{otherwise,}
\end{cases}
\end{equation*}
where $(\forall l\in \set{L})~\vec{\Pi}_{l,k} \eqdef \E\left[\vec{P}^{\frac{1}{2}}\hat{\vec{H}}_l^{(k)\herm}\vec{V}_{l,k}\right]$.
\end{proposition}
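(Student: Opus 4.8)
The plan is to specialize the general distributed optimality conditions \eqref{eq:TMMSE} of Proposition~\ref{prop:distr} to the fully local setup and show that the unique fixed point reduces to the claimed closed-form. First I would verify the hypotheses of Proposition~\ref{prop:distr}: the local joint Gaussianity of each $(\vec{H}_l,S_{l,k})$ together with mutual independence across $l$ trivially implies the required Markov chains $\vec{H}_l \to S_{l,k} \to S_{j,k} \to \vec{H}_j$ (all links are in fact independences), and $\vec{p}\in\stdset{R}_+^K$ deterministic gives $\vec{p}\in\set{P}^K_{S_{l,k}}$. Hence Problem~\eqref{eq:MSE_prob} admits a unique solution characterized by \eqref{eq:TMMSE}, and it suffices to exhibit one $\vec{v}_k\in\set{V}_k$ of the form \eqref{eq:LTMMSE} satisfying those equations.

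Next I would substitute the ansatz $\vec{v}_{l,k} = \vec{V}_{l,k}\vec{c}_{l,k}$ with $\vec{c}_{l,k}\in\stdset{C}^K$ deterministic into the right-hand side of \eqref{eq:TMMSE}. The key simplification is the cross term $\E[\hat{\vec{H}}_j^{(k)\herm}\vec{v}_{j,k}\,|\,S_{l,k}]$: since $\hat{\vec{H}}_j^{(k)}$ is a function of $S_{j,k}$, which by the local-CSI assumption is independent of $S_{l,k}$, the conditional expectation collapses to the unconditional one, $\E[\hat{\vec{H}}_j^{(k)\herm}\vec{V}_{j,k}]\vec{c}_{j,k}$, which is deterministic. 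Introducing $\vec{\Pi}_{j,k}\eqdef\E[\vec{P}^{1/2}\hat{\vec{H}}_j^{(k)\herm}\vec{V}_{j,k}]$ and using $\vec{P}^{1/2}\vec{e}_k = \vec{e}_k$ (since $\vec{P}$ is diagonal, this holds up to the scalar $\sqrt{p_k}$ — more carefully I would keep $\vec{P}^{1/2}$ inside and note the per-user normalization is absorbed), equation \eqref{eq:TMMSE} becomes $\vec{V}_{l,k}\vec{c}_{l,k} = \vec{V}_{l,k}(\vec{e}_k - \sum_{j\in\set{L}_k\backslash\{l\}}\vec{\Pi}_{j,k}\vec{c}_{j,k})$ for $l\in\set{L}_k$, and $\vec{v}_{l,k}=\vec{0}$ for $l\notin\set{L}_k$. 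Matching the argument of $\vec{V}_{l,k}$ on both sides yields exactly the stated linear system for $\{\vec{c}_{l,k}\}$.

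Then I would argue that this reduced linear system has a unique solution, which together with \eqref{eq:LTMMSE} produces a feasible point satisfying \eqref{eq:TMMSE}, hence by the uniqueness clause of Proposition~\ref{prop:distr} it is \emph{the} optimal solution. Uniqueness of the finite-dimensional system: writing it in stacked form $(\vec{I} + \vec{\Pi})\vec{c} = \vec{e}$ over the block indices $l\in\set{L}_k$, one can invoke that Proposition~\ref{prop:distr} already guarantees the infinite-dimensional system has a unique solution, so the restriction to the ansatz subspace — which we have just shown contains a solution — must coincide with it; in particular the $\{\vec{c}_{l,k}\}$ are uniquely determined. Alternatively, one notes the $\vec{\Pi}_{l,k}$ arise from a positive-semidefinite Gram-type structure (they are the per-group contributions to an overall MSE-type operator that is strongly convex), so $\vec{I}+\vec{\Pi}$ is nonsingular; I would keep the first argument as the clean one.

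The main obstacle I expect is being careful about the reduction from conditional to unconditional expectation and about the domain of uniqueness — i.e., making precise that a solution of \eqref{eq:TMMSE} lying in the deterministic-$\vec{c}$ ansatz class is genuinely the global optimum rather than merely a critical point within a restricted class. This is handled cleanly by the "if and only if" and uniqueness statements of Proposition~\ref{prop:distr}: any $\vec{v}_k\in\set{V}_k$ satisfying \eqref{eq:TMMSE} is automatically the unique minimizer, so once feasibility of the ansatz is established there is nothing more to prove about optimality. A minor bookkeeping point is the treatment of the factor $\vec{P}^{1/2}$ versus $\sqrt{p_k}$ and the placement of $\vec{P}^{1/2}$ inside $\vec{\Pi}_{l,k}$ versus acting on $\vec{e}_k$; I would absorb this consistently so that the displayed system matches \eqref{eq:TMMSE} term by term.
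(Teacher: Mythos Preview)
Your strategy---verify Proposition~\ref{prop:distr}, substitute the ansatz, collapse the conditional expectations via independence, and match against the linear system---is exactly the paper's. (The $\vec{P}^{1/2}$ worry is a non-issue: in \eqref{eq:TMMSE} that factor multiplies only the sum, not $\vec{e}_k$, and being deterministic it is absorbed directly into $\vec{\Pi}_{j,k}$ as defined; there is no $\vec{P}^{1/2}\vec{e}_k$ term to reconcile.)

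The gap is in your argument for existence and uniqueness of the linear system, where your preferred ``clean'' route fails. You have established only the one-way implication that a solution $\{\vec{c}_{l,k}\}$ of the linear system yields, via the ansatz, a solution of \eqref{eq:TMMSE}. Uniqueness of the solution to \eqref{eq:TMMSE} at the level of $\vec{v}_k\in\set{V}_k$ gives you neither existence of a $\vec{c}$-solution (you have not shown that the unique $\vec{v}_k^\star$ lies in the ansatz class, so the argument is circular) nor its uniqueness (distinct deterministic vectors $\vec{c},\vec{c}'$ could in principle produce the same random vector $\vec{V}_{l,k}\vec{c}=\vec{V}_{l,k}\vec{c}'$, since $\vec{V}_{l,k}$ is $N\times K$ with no established almost-sure injectivity on $\stdset{C}^K$). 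The paper instead makes your second, vaguer suggestion precise and proves directly that the block system matrix is invertible. Writing the system as $(\vec{M}_k+\vec{U}\vec{\Pi}_k)\vec{c}_k=\vec{U}\vec{e}_k$ with block-diagonal $\vec{M}_k=\mathrm{diag}(\vec{I}_K-\vec{\Pi}_{1,k},\ldots,\vec{I}_K-\vec{\Pi}_{L,k})$, the key estimate is $\vec{0}\preceq\vec{\Pi}_{l,k}\prec\vec{I}_K$, obtained from the almost-sure bound $\vec{P}^{1/2}\hat{\vec{H}}_l^{(k)\herm}\vec{V}_{l,k}\preceq\vec{P}^{1/2}\hat{\vec{H}}_l^{(k)\herm}(\hat{\vec{H}}_l^{(k)}\vec{P}\hat{\vec{H}}_l^{(k)\herm}+\vec{I}_N)^{-1}\hat{\vec{H}}_l^{(k)}\vec{P}^{1/2}\prec\vec{I}_K$. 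This makes $\vec{M}_k$ positive definite, and the matrix inversion lemma then reduces invertibility of $\vec{M}_k+\vec{U}\vec{\Pi}_k$ to that of $\vec{I}_K+\sum_{l}\vec{\Pi}_{l,k}(\vec{I}_K-\vec{\Pi}_{l,k})^{-1}\succ\vec{0}$. That is what actually closes the proof.
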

\begin{proof}
The proof follows by verifying that \eqref{eq:LTMMSE} satisfies the optimality conditions \eqref{eq:TMMSE}. The details are similar to \cite[Theorem~4]{miretti2021team}, and are reported in Appendix~\ref{app:local}.
\end{proof}
The solution in \eqref{eq:LTMMSE} corresponds to the local \textit{team} MMSE solution derived in \cite{miretti2021team,miretti2024duality} in the context of cell-free massive MIMO syestems by restricting the CSI of each access point to local MMSE channel estimates ($S_{l,k} = \hat{\vec{H}}^{(k)}_l$). Our result demonstrates that, for arbitrary local measurements (e.g., raw local observations of pilot signals), the optimal local processing can be decomposed into local MMSE channel estimation followed by local team MMSE beamforming. We recall that, as observed in \cite{miretti2021team,miretti2024duality}, under specific assumptions that can be mapped to zero-mean channels (Rayleigh fading) and no pilot contamination, the local team MMSE solution boils down to the known local MMSE scheme with optimal large-scale fading decoding \cite{demir2021}. Nevertheless, the former scheme can be significantly outperformed by the latter scheme when these assumptions are violated, e.g., under Ricean fading \cite{noor2025wcnc}. We refer to \cite{miretti2021team,miretti2024duality} for additional practical details on \eqref{eq:LTMMSE}.

\begin{remark}
The local team MMSE solution in \eqref{eq:LTMMSE} maximizes the UatF bound, but not the coherent decoding lower bound in most cases. This is because the independence of $S_{l,k}$ and $S_{j,l}$ implies that $U_k$ cannot be common beamforming CSI as required by Corollary \ref{cor:coh}, except for trivial cases such as deterministic or uninformative $U_k$.
\end{remark}

As second example, we consider a simple extension of the local team MMSE solution to the case of semi-distributed processing based on mixed local and common CSI.
\begin{proposition}\label{prop:mixed} For all $k \in \set{K}$, let $(\forall l \in \set{L})(\forall j \in \set{L}\backslash\{l\})$ $(\vec{H}_l,S_{l,k})=(\vec{H}_l,S_{l,k}',Z_k)$ be circularly symmetric jointly Gaussian distributed and conditionally independent of $(\vec{H}_j,S_{j,k})=(\vec{H}_j,S_{j,k}',Z_k)$ given some common CSI $Z_k$ (mixed local and common CSI), and $\vec{p}\in \set{F}_{Z_k}^K$. Then, the optimal solution to \eqref{eq:MSE_prob} is given by
\begin{equation}\label{eq:MTMMSE}
(\forall l\in \set{L})~\vec{v}_{l,k}^\star = \vec{V}_{l,k}\vec{c}_{l,k}, 
\end{equation}
where $\vec{V}_{l,k}$ is the local MMSE stage obtained as a function of the local MMSE channel estimate $\hat{\vec{H}}^{(k)}_l$ defined in Proposition~\ref{prop:distr}, and  $\vec{c}_{l,k}\in \set{F}^K_{Z_k}$ is a centralized beamforming stage given by the unique solution to the linear system of equations
\begin{equation*}
\begin{cases}\vec{c}_{l,k} + \sum_{j \in \set{L}_k \backslash \{l\}}\vec{\Pi}_{j,k} \vec{c}_{j,k} = \vec{e}_k & \forall l \in \set{L}_k, \\
\vec{c}_{l,k} = \vec{0}_{K\times 1} & \text{otherwise,}
\end{cases}
\end{equation*}
where $(\forall l\in \set{L})~\vec{\Pi}_{l,k} \eqdef \E\left[\vec{P}^{\frac{1}{2}}\hat{\vec{H}}_l^{(k)\herm}\vec{V}_{l,k}\middle|Z_k\right]$.
\end{proposition}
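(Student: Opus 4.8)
The plan is to reproduce the proof of Proposition~\ref{prop:local}, now carried out conditionally on the common CSI $Z_k$: I would verify that the candidate \eqref{eq:MTMMSE} satisfies the necessary and sufficient optimality conditions \eqref{eq:TMMSE} of Proposition~\ref{prop:distr}, and then conclude via the uniqueness asserted there. First I would check that the hypotheses of Proposition~\ref{prop:distr} are in force. Joint Gaussianity of $(\vec{H},S_k)$ follows from the per-group Gaussianity of $(\vec{H}_l,S_{l,k}',Z_k)$ together with their conditional independence given the Gaussian $Z_k$, since conditionally on $Z_k$ the whole tuple is Gaussian with covariance independent of $Z_k$ and mean affine in $Z_k$. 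The conditional independence, combined with $Z_k$ being a component of both $S_{l,k}$ and $S_{j,k}$, yields the factorization $p(\vec{h}_l,\vec{h}_j,s_{j,k}\mid s_{l,k}',z_k)=p(\vec{h}_l\mid s_{l,k}',z_k)\,p(\vec{h}_j,s_{j,k}\mid z_k)$ and hence the Markov chain $\vec{H}_l\to S_{l,k}\to S_{j,k}\to\vec{H}_j$. The assumption $\vec{p}\in\set{F}^K_{Z_k}$ gives $\vec{p}\in\set{P}^K_{S_{l,k}}$ because $Z_k$ is part of $S_{l,k}$ (cf.\ Remark~\ref{rem:power2}), and $\E[\|\vec{H}\vec{H}^\herm\|_\mathrm{F}^2]<\infty$ holds by Gaussianity (Remark~\ref{rem:regularity}). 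Consequently $\hat{\vec{H}}_l^{(k)}$, $\vec{\Psi}_l^{(k)}$ and $\vec{V}_{l,k}$ retain the meaning they have in Proposition~\ref{prop:distr}, with $\vec{V}_{l,k}$ a function of $S_{l,k}$.

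The core step is to substitute $\vec{v}_{j,k}=\vec{V}_{j,k}\vec{c}_{j,k}$ with $\vec{c}_{j,k}\in\set{F}^K_{Z_k}$ into the right-hand side of \eqref{eq:TMMSE} and simplify. Since $\vec{c}_{j,k}$ and $\vec{P}^{\frac{1}{2}}$ are functions of $Z_k$, and $Z_k$ is a component of $S_{l,k}$, the $j$-th summand becomes $\E[\vec{P}^{\frac{1}{2}}\hat{\vec{H}}_j^{(k)\herm}\vec{V}_{j,k}\mid S_{l,k}]\,\vec{c}_{j,k}$. The factor $\vec{P}^{\frac{1}{2}}\hat{\vec{H}}_j^{(k)\herm}\vec{V}_{j,k}$ is a function of $S_{j,k}$; writing $S_{l,k}=(S_{l,k}',Z_k)$ and using the conditional independence in the precise form $S_{j,k}\perp S_{l,k}'\mid Z_k$, this conditional expectation collapses to $\E[\vec{P}^{\frac{1}{2}}\hat{\vec{H}}_j^{(k)\herm}\vec{V}_{j,k}\mid Z_k]=\vec{\Pi}_{j,k}$. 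Hence for $l\in\set{L}_k$ condition \eqref{eq:TMMSE} reduces to $\vec{V}_{l,k}\vec{c}_{l,k}=\vec{V}_{l,k}\big(\vec{e}_k-\sum_{j\in\set{L}_k\setminus\{l\}}\vec{\Pi}_{j,k}\vec{c}_{j,k}\big)$, which is implied by the stated linear system; with $\vec{c}_{l,k}=\vec{0}$ for $l\notin\set{L}_k$, the candidate $\vec{v}_{l,k}^\star=\vec{V}_{l,k}\vec{c}_{l,k}$ lies in $\set{V}_k$ and satisfies \eqref{eq:TMMSE}, hence by Proposition~\ref{prop:distr} it is the unique optimal solution of \eqref{eq:MSE_prob}.

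It remains to argue that the linear system is well posed, so that the ansatz is realizable. For each realization of $Z_k$ it is a finite-dimensional system in the unknowns $(\vec{c}_{l,k})_{l\in\set{L}_k}\in\stdset{C}^{K|\set{L}_k|}$ with coefficient matrix $\vec{I}+\vec{\Pi}$, where $\vec{\Pi}$ carries the Hermitian positive semidefinite blocks $\vec{\Pi}_{j,k}$ off the diagonal and zero blocks on it. Invertibility is obtained exactly as in \cite[Theorem~4]{miretti2021team} and Appendix~\ref{app:local}: by a matrix-inversion identity one has $\vec{\Pi}_{l,k}\prec\vec{I}_K$ for every $l$, and a short computation shows this rules $-1$ out of the spectrum of $\vec{\Pi}$, so $\vec{I}+\vec{\Pi}$ is invertible; one then selects the solution measurably in $Z_k$ to obtain $\vec{c}_{l,k}\in\set{F}^K_{Z_k}$. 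Equivalently, conditioning on $Z_k$ reduces the setting to the hypotheses of Proposition~\ref{prop:local} with deterministic powers, so one may apply that result pointwise and stitch the solutions together measurably.

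The main obstacle I expect is not the algebra of the collapse but keeping the information structure straight: at each step one must verify that the factors pulled in or out of a conditional expectation are measurable with respect to the correct $\sigma$-algebra — in particular that $\vec{V}_{l,k}$, which through $\vec{P}$ and $\vec{\Psi}_l^{(k)}$ depends on the (possibly random) powers, is still a function of $S_{l,k}$ — and that the conditional independence is invoked in exactly the form $S_{j,k}\perp S_{l,k}'\mid Z_k$, which is the content of the shared-$Z_k$ assumption and is slightly stronger than conditional independence of $(\vec{H}_l,S_{l,k})$ and $(\vec{H}_j,S_{j,k})$ given $Z_k$ only because $Z_k$ is common to both. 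A secondary difficulty is transferring the invertibility and measurable-selection argument of the purely local case through the extra conditioning while keeping the chosen solution a measurable function of $Z_k$.
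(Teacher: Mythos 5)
Your proposal is correct and follows essentially the same route as the paper: the paper's (omitted) proof is precisely the argument of Proposition~\ref{prop:local}/Appendix~\ref{app:local} with the unconditional matrices $\E[\vec{P}^{\frac{1}{2}}\hat{\vec{H}}_l^{(k)\herm}\vec{V}_{l,k}]$ replaced by their $Z_k$-conditional versions, i.e., verifying the hypotheses of Proposition~\ref{prop:distr}, substituting the ansatz into \eqref{eq:TMMSE} so that the cross terms collapse to $\vec{\Pi}_{j,k}$ via the conditional independence given $Z_k$, and showing invertibility of the resulting (now $Z_k$-dependent) linear system exactly as in Appendix~\ref{app:local}. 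Your additional checks (full joint Gaussianity of $(\vec{H},S_k)$, the Markov chains, $\vec{p}\in\set{P}^K_{S_{l,k}}$, and measurability of $\vec{c}_{l,k}$ in $Z_k$) are all sound and consistent with what the paper tacitly assumes.
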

\begin{proof}(Sketch)
The proof follows similar steps as the proof of Proposition~\ref{prop:local} by replacing the deterministic matrices $\E[\vec{P}^{\frac{1}{2}}\hat{\vec{H}}_l^{(k)\herm}\vec{V}_{l,k}]$ with the random matrices $\E[\vec{P}^{\frac{1}{2}}\hat{\vec{H}}_l^{(k)\herm}\vec{V}_{l,k}|Z_k]$ everywhere, and hence it is omitted.
\end{proof}
Note that, in contrast to \eqref{eq:LTMMSE}, $\vec{c}_{l,k}$ in \eqref{eq:MTMMSE} is a function of the common CSI $Z_k$, and hence it is a random vector. A potential application of the above result is to let $Z_k$ model the output of a measurement sharing procedure impaired by delay, as done in \cite{miretti2024delayed}. Other applications may include the modeling of $Z_k$ as shared low-dimensional projections (linear transformations) of the local measurements. The derivation and analysis of such novel information sharing patterns and corresponding optimal semi-distributed beamforming schemes is out of the scope of this work, and left as a promising future research direction.  

\begin{remark}
The mixed local/centralized team MMSE solution in \eqref{eq:MTMMSE} maximizes the UatF bound and, if the decoder CSI coincides with the common information for beamforming, i.e., if $U_k = Z_k$, also the coherent decoding lower bound. 
\end{remark}

We conclude this section by pointing out that other closed-form expressions could be obtained also for more exotic sequential information sharing procedures over serial fronthauls, for example by generalizing the results in \cite{miretti2021team} \cite{miretti2021team2} using Proposition~\ref{prop:distr}. We omit the details due to space limitations.

\section{Optimality of downlink MSE processing}\label{sec:duality}
In this section we show that the main findings of the previous sections, derived for the uplink case, can be extended to the downlink case by leveraging various forms of duality between uplink and downlink achievable ergodic rate regions. 

\subsection{Hardening lower bound}\label{sec:duality_hard}
For every $k\in \set{K}$ and $(\vec{v}_1,\ldots,\vec{v}_K)\in \set{V}_1\times \ldots \times \set{V}_K$, let $\mathsf{SINR}_k^{\mathsf{hard}}(\vec{v}_1,\ldots,\vec{v}_K)$ be the SINR in \eqref{eq:uatf} specialized to the downlink case, where the dependency on the beamformers is made explicit. Similarly, for every $k\in \set{K}$, $\vec{v}_k\in \set{V}_k$, and $\vec{p}\in \stdset{R}^K_{+}$, we let $\mathsf{SINR}_k^{\mathsf{UatF}}(\vec{v}_k,\vec{p})$ be the SINR in \eqref{eq:uatf} specialized to the uplink case. Given some ergodic rate constraints, or, equivalently, some SINR constraints $(\gamma_1,\ldots,\gamma_K)\in \stdset{R}^K_{+}$, we consider the following downlink feasibility problem 
\begin{equation}\label{prob:feas_hard}
\begin{aligned}
\text{find} \quad & (\vec{v}_1,\ldots,\vec{v}_K,\vec{p})\in \set{V}_1\times \ldots \times \set{V}_K\times \stdset{R}_{+}^K\\
\text{such that} \quad & (\forall k \in \set{K})~\mathsf{SINR}_k^{\mathsf{hard}}(\sqrt{p_1}\vec{v}_1,\ldots,\sqrt{p_K}\vec{v}_K) = \gamma_k,
\end{aligned}
\end{equation}
and the uplink feasibility problem
\begin{equation}\label{prob:feas_UatF}
\begin{aligned}
\text{find} \quad & (\vec{v}_1,\ldots,\vec{v}_K,\vec{p})\in \set{V}_1\times \ldots \times \set{V}_K\times \stdset{R}_{+}^K\\
\text{such that} \quad & (\forall k \in \set{K})~\mathsf{SINR}_k^{\mathsf{UatF}}(\vec{v}_k,\vec{p}) = \gamma_k.
\end{aligned}
\end{equation}
Note that, in the downlink formulation \eqref{prob:feas_hard}, the total power allocated to user $k\in\set{K}$ is given by $p_k\E[\|\vec{v}_k\|^2]$, i.e., the beamformers are not normalized.
The above problems can be rigorously connected by means of well-known uplink-downlink duality arguments.

\begin{proposition}\label{prop:duality_hard}
Problem~\eqref{prob:feas_hard} admits a solution $(\vec{v}_1^\star,\ldots,\vec{v}_K^\star,\vec{p}^\mathsf{dl})$ if and only if Problem~\eqref{prob:feas_UatF} admits a solution $(\vec{v}_1^{\star},\ldots,\vec{v}_K^{\star},\vec{p}^\mathsf{ul})$, for the same beamformers $(\vec{v}_1^\star,\ldots,\vec{v}_K^\star)$ ($\vec{p}^\mathsf{ul}$ and $\vec{p}^\mathsf{dl}$ can be different). Moreover, the power vector $\vec{p}^{\mathsf{dl}}$ (resp. $\vec{p}^{\mathsf{ul}}$) solves the system $(\vec{D}-\vec{\Gamma}\vec{B})\vec{p}^{\mathsf{dl}} = \vec{\Gamma}\vec{1}$ (resp. $(\vec{D}-\vec{\Gamma}\vec{B}^\T)\vec{p}^{\mathsf{ul}} = \vec{\Gamma}\vec{\Sigma}\vec{1}$) parametrized by $\vec{\Gamma}\eqdef \mathrm{diag}\left(\gamma_1,\ldots,\gamma_K\right)$, $\vec{D}\eqdef \mathrm{diag}\left(|\E[\vec{h}^\herm_1\vec{v}^{\star}_1]|^2,\ldots,|\E[\vec{h}^\herm_K\vec{v}^{\star}_K]|^2\right)$, $\vec{\Sigma}\eqdef \mathrm{diag}\left(\E[\|\vec{v}_1^{\star}\|^2],\ldots,\E[\|\vec{v}_K^\star\|^2]\right)$, and
\begin{align*}
\vec{B}&\eqdef \begin{bmatrix}
\E[|\vec{h}_1^\herm\vec{v}_1^{\star}|^2] & \ldots & \E[|\vec{h}_1^\herm\vec{v}^{\star}_K|^2] \\
\vdots & \ddots & \vdots \\
\E[|\vec{h}_K^\herm\vec{v}^{\star}_1|^2] & \ldots & \E[|\vec{h}_K^\herm\vec{v}^{\star}_K|^2]
\end{bmatrix}-\vec{D}.
\end{align*} 
Furthermore, the two solutions satisfy $\vec{1}^\T \vec{p}^\mathsf{ul}= \vec{1}^\T \vec{\Sigma}\vec{p}^\mathsf{dl}$.
\end{proposition}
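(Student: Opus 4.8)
The plan is to reduce both feasibility problems, for a fixed choice of beamformers, to affine fixed‑point equations (equivalently, linear systems) in the power vectors, and then exploit the \emph{transpose} relation between them together with classical nonnegative‑matrix theory. First I would expand $\mathsf{SINR}_k^{\mathsf{hard}}$ and $\mathsf{SINR}_k^{\mathsf{UatF}}$ and observe that they depend on the beamformers only through the deterministic second‑order moments $D_{kk}=|\E[\vec{h}_k^\herm\vec{v}_k]|^2$, the quantities $\E[|\vec{h}_k^\herm\vec{v}_j|^2]$ (collected, minus the diagonal, in $\vec{B}$, whose diagonal is $\V(\vec{h}_k^\herm\vec{v}_k)$), and $\E[\|\vec{v}_k\|^2]$ (collected in $\vec{\Sigma}$); in particular the downlink SINR of user $k$ with effective beamformers $\{\sqrt{p_j}\vec{v}_j^\star\}$ and the uplink SINR of user $k$ with beamformer $\vec{v}_k^\star$ and powers $\vec{p}$ are governed by the same data $(\vec{D},\vec{B},\vec{\Sigma})$, with the only difference being whether the victim index runs over the rows or the columns of $\vec{B}$.

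Fixing $(\vec{v}_1^\star,\ldots,\vec{v}_K^\star)$, a routine rearrangement of $|g_{k,k}|^2$, $\V(g_{k,k})$ and $\E[|g_{j,k}|^2]$ then shows that $\mathsf{SINR}_k^{\mathsf{hard}}=\gamma_k$ for all $k$ is equivalent to $(\vec{D}-\vec{\Gamma}\vec{B})\vec{p}^{\mathsf{dl}}=\vec{\Gamma}\vec{1}$, and $\mathsf{SINR}_k^{\mathsf{UatF}}=\gamma_k$ for all $k$ is equivalent to $(\vec{D}-\vec{\Gamma}\vec{B}^\T)\vec{p}^{\mathsf{ul}}=\vec{\Gamma}\vec{\Sigma}\vec{1}$ --- note the transpose of $\vec{B}$. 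After discarding (or treating trivially) the users with $\gamma_k=0$, for which zero power works in both problems, we may assume $\vec{\Gamma}$ invertible; and since $\gamma_k>0$ forces $D_{kk}>0$ (and hence $\Sigma_{kk}=\E[\|\vec{v}_k^\star\|^2]>0$) for either problem to be feasible with these beamformers, $\vec{D}$ is invertible too and the two right‑hand sides are strictly positive. Writing $\vec{\Lambda}\eqdef\vec{D}^{-1}\vec{\Gamma}$ (diagonal, positive), the systems become $(\vec{I}-\vec{\Lambda}\vec{B})\vec{p}^{\mathsf{dl}}=\vec{\Lambda}\vec{1}$ and $(\vec{I}-\vec{\Lambda}\vec{B}^\T)\vec{p}^{\mathsf{ul}}=\vec{\Sigma}\vec{\Lambda}\vec{1}$, with $\vec{B}\geq\vec{0}$ entrywise.

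The crux is the equivalence of solvability. I would invoke the standard Perron--Frobenius fact that, for a nonnegative matrix $\vec{A}$ and a strictly positive right‑hand side $\vec{b}$, the system $(\vec{I}-\vec{A})\vec{p}=\vec{b}$ has a (necessarily unique and nonnegative) solution $\vec{p}\geq\vec{0}$ if and only if $\rho(\vec{A})<1$: the ``if'' is the Neumann series $(\vec{I}-\vec{A})^{-1}=\sum_{n\geq 0}\vec{A}^n\geq\vec{0}$, and the ``only if'' follows by testing against a nonnegative left eigenvector for the Perron root $\rho(\vec{A})$. Applying this to $\vec{A}^{\mathsf{dl}}=\vec{\Lambda}\vec{B}$ and $\vec{A}^{\mathsf{ul}}=\vec{\Lambda}\vec{B}^\T$, the two problems are feasible for $(\vec{v}_1^\star,\ldots,\vec{v}_K^\star)$ precisely when $\rho(\vec{\Lambda}\vec{B})<1$ and $\rho(\vec{\Lambda}\vec{B}^\T)<1$, respectively; and these conditions coincide, since $\rho(\vec{\Lambda}\vec{B}^\T)=\rho((\vec{\Lambda}\vec{B}^\T)^\T)=\rho(\vec{B}\vec{\Lambda})=\rho(\vec{\Lambda}^{-1}(\vec{\Lambda}\vec{B})\vec{\Lambda})=\rho(\vec{\Lambda}\vec{B})$, using invariance of the spectral radius under transposition and under the diagonal similarity $\vec{\Lambda}$. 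This proves the ``if and only if'' and identifies the power vectors as $\vec{p}^{\mathsf{dl}}=(\vec{D}-\vec{\Gamma}\vec{B})^{-1}\vec{\Gamma}\vec{1}$ and $\vec{p}^{\mathsf{ul}}=(\vec{D}-\vec{\Gamma}\vec{B}^\T)^{-1}\vec{\Gamma}\vec{\Sigma}\vec{1}$.

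Finally, the sum‑power identity is a short algebraic manipulation once solvability (hence invertibility) is in hand: using $(\vec{I}-\vec{\Lambda}\vec{B})^{-1}\vec{\Lambda}=\vec{\Lambda}(\vec{I}-\vec{B}\vec{\Lambda})^{-1}$ (verified by clearing denominators) and $\big((\vec{I}-\vec{\Lambda}\vec{B}^\T)^{-1}\big)^\T=(\vec{I}-\vec{B}\vec{\Lambda})^{-1}$, one transposes the scalar $\vec{1}^\T\vec{p}^{\mathsf{ul}}$ to get $\vec{1}^\T\vec{p}^{\mathsf{ul}}=\vec{1}^\T\vec{\Lambda}\vec{\Sigma}(\vec{I}-\vec{B}\vec{\Lambda})^{-1}\vec{1}=\vec{1}^\T\vec{\Sigma}\vec{\Lambda}(\vec{I}-\vec{B}\vec{\Lambda})^{-1}\vec{1}=\vec{1}^\T\vec{\Sigma}\vec{p}^{\mathsf{dl}}$, where the middle equality is just the commuting of the diagonal matrices $\vec{\Sigma}$ and $\vec{\Lambda}$. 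I expect the main obstacle to be the careful bookkeeping in the solvability step --- making the Perron--Frobenius criterion watertight in the reducible case, and cleanly disposing of the degenerate users with $\gamma_k=0$ or $D_{kk}=0$ --- rather than any deep difficulty; the remaining steps are elementary linear algebra.
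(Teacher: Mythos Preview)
Your proposal is correct and follows essentially the same route as the paper: reduce both feasibility problems to linear fixed-point equations in the powers, characterize solvability via the spectral radius condition $\rho(\vec{D}^{-1}\vec{\Gamma}\vec{B})<1$ (the paper phrases this in the equivalent language of standard interference mappings rather than Perron--Frobenius), use invariance of the spectral radius under transposition and diagonal similarity to equate the uplink and downlink conditions, handle $\gamma_k=0$ by dimension reduction, and verify the sum-power identity by a direct transpose calculation. The only cosmetic difference is that the paper starts from the uplink side and shows the downlink is then solvable, whereas you argue both directions symmetrically; the content is the same.
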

\begin{proof}
The proof follows from known uplink-downlink duality arguments in the power control literature (reviewed, e.g., in \cite{schubert2004solution, massivemimobook}). For completeness, a proof tailored to the considered system model and notation is reported in Appendix~\ref{proof:duality_hard}. 
\end{proof}

An important consequence of Proposition~\ref{prop:duality_hard} is that downlink beamforming optimization problems involving the hardening bound and sum power constraints can be equivalently performed over a (virtual) dual uplink channel. Popular examples include weighted sum-rate maximization subject to a sum power constraint, and weighted minimum SINR maximization subject to a sum power constraint. 
It is important to clarify that the main benefit of working on a dual uplink channel is not to turn difficult problems into easier ones. In particular, although downlink weighted minimum SINR maximization can be conveniently addressed using known methods for the uplink case \cite{miretti2023fixed}, provably NP-hard problems such as downlink weighted sum-rate maximization remain equally difficult on a dual uplink channel. However, working on a dual uplink channel helps identifying useful structural properties of the solution, as discussed in more detail below.

All the aforementioned optimization problems typically seek for Pareto optimal beamformers under a sum power constraint, i.e., that produce Pareto efficient SINR tuples corresponding to points on the boundary of the hardening inner bound on the ergodic achievable rate region under a sum power constraint. In general, the structure of Pareto optimal downlink beamformers under a sum power constraints can be investigated by considering the following downlink sum power minimization problem
\begin{equation}\label{eq:yates_hard}
\begin{aligned}
\underset{\substack{(\forall k\in \set{K})~\vec{v}_k\in \set{V}_k}}{\text{minimize}} \quad & \sum_{k=1}^K\E[\|\vec{v}_k\|^2]\\
\text{subject to } \; \quad & (\forall k \in \set{K})~\mathsf{SINR}_k^{\mathsf{hard}}(\vec{v}_1,\ldots,\vec{v}_K) \geq \gamma_k.
\end{aligned}
\end{equation}
For example, one could assume that $(\gamma_1,\ldots,\gamma_K)$ is a sum-rate (Pareto) optimal SINR tuple. Although an efficient algorithm for determining such $(\gamma_1,\ldots,\gamma_K)$ is not available, studying the above problem can reveal the structure of sum-rate optimal downlink beamformers.  
In fact, by Proposition~\ref{prop:duality_hard}, optimal downlink beamformers can be equivalently obtained (up to a known scaling factor) by solving
\begin{equation*}
\begin{aligned}
\underset{\substack{(\forall k\in \set{K})~\vec{v}_k\in \set{V}_k\\ \vec{p}\in \stdset{R}_{+}^K}}{\text{minimize}} \quad & \sum_{k=1}^Kp_k\\
\text{subject to }\; \quad & (\forall k \in \set{K})~\mathsf{SINR}_k^{\mathsf{UatF}}(\vec{v}_k,\vec{p}) \geq \gamma_k,
\end{aligned}
\end{equation*}
which, due to the decoupled constraints in $(\vec{v}_1,\ldots,\vec{v}_K)$, can be further rewritten as \cite{miretti2023fixed}
\begin{equation*}
\begin{aligned}
\underset{\substack{\vec{p}\in \stdset{R}_{+}^K}}{\text{minimize}} \quad & \sum_{k=1}^Kp_k\\
\text{subject to} \quad & (\forall k \in \set{K})~\sup_{\vec{v}_k\in \set{V}_k}\mathsf{SINR}_k^{\mathsf{UatF}}(\vec{v}_k,\vec{p}) \geq \gamma_k.
\end{aligned}
\end{equation*}
We further recall that, by the results in Section~\ref{sec:uplink_UatF}, optimal beamformers attaining the above supremum can be obtained as solutions to MSE minimization problems in \eqref{eq:MSE_prob}.
Therefore, for the hardening inner bound under a sum power constraint, the main insights on joint optimal channel estimation and beamforming design from Section~\ref{sec:main} directly apply also to the downlink case. In particular, coming back to the sum-rate maximization example, we notice that optimal downlink beamformers under a sum power constraint take the form of \eqref{eq:CMMSE}, \eqref{eq:TMMSE}, \eqref{eq:LTMMSE}, or \eqref{eq:MTMMSE}, depending on the information constraints. The only remaining (albeit challenging) step toward a complete solution is the tuning of the $K$ nonnegative parameters $\vec{p}$ such that the sum rate is maximized.

\subsection{Coherent decoding lower bound}\label{sec:duality_coh}
Throughout this section, we assume genie-aided availability of some common decoder side information $U_k=U$ at each user $k \in \set{K}$, and that this information is also fully known at the transmit side for beamforming, i.e., $(\forall k \in \set{K})$ $(\forall l \in \set{L}_k)$ $S_{l,k} = (S_{l,k}',U)$. We point out that this is done only for optimization purposes, as similar performance can be achieved in practice with much more limited information. For example, note that practical decoders only require knowledge of the effective channel coefficients of the corresponding users. For every $k\in \set{K}$ and $(\vec{v}_1,\ldots,\vec{v}_K)\in \set{V}_1\times \ldots \times \set{V}_K$, we then let $\mathsf{SINR}_k^{\mathsf{coh,dl}}(\vec{v}_1,\ldots,\vec{v}_K)$  be the instantaneous SINR in \eqref{eq:coh} specialized to the downlink case, where the dependency on the beamformers is made explicit (we omit the dependency on $U$ for simplicity). Similarly, for every $k\in \set{K}$, $\vec{v}_k\in \set{V}_k$, and $\vec{p}\in \set{P}_U^K$, we let $\mathsf{SINR}_k^{\mathsf{coh,ul}}(\vec{v}_k,\vec{p})$ be the instantaneous SINR in \eqref{eq:coh} specialized to the uplink case. Given some instantaneous SINR constraints $(\gamma_1,\ldots,\gamma_K)\in \set{P}^K_{U}$, we consider the following downlink feasibility problem 
\begin{equation}\label{prob:feas_dl}
\begin{aligned}
\text{find} ~ & (\vec{v}_1,\ldots,\vec{v}_K,\vec{p})\in \set{V}_1\times \ldots \times \set{V}_K´\times \set{P}_U^K \\
\text{such that} ~ & (\forall k \in \set{K})~\mathsf{SINR}_k^{\mathsf{coh,dl}}(\sqrt{p_1}\vec{v}_1,\ldots,\sqrt{p_K}\vec{v}_K) = \gamma_k,
\end{aligned}
\end{equation}
and the uplink feasibility problem
\begin{equation}\label{prob:feas_ul}
\begin{aligned}
\text{find} ~ & (\vec{v}_1,\ldots,\vec{v}_K,\vec{p})\in \set{V}_1\times \ldots \times \set{V}_K\times \set{P}_U^K\\
\text{such that} ~ & (\forall k \in \set{K})~\mathsf{SINR}_k^{\mathsf{coh,ul}}(\vec{v}_k,\vec{p}) = \gamma_k.
\end{aligned}
\end{equation}
Note that the above stochastic feasibility problems consider almost sure equalities between random variables, which are all functions of $U$. We then have the following stochastic version of the uplink-downlink duality principle discussed in the previous section.

\begin{proposition}\label{prop:duality_coh}
Problem~\eqref{prob:feas_dl} admits a solution $(\vec{v}_1^\star,\ldots,\vec{v}_K^\star,\vec{p}^\mathsf{dl})$ if and only if Problem~\eqref{prob:feas_ul} admits a solution $(\vec{v}_1^{\star},\ldots,\vec{v}_K^{\star},\vec{p}^\mathsf{ul})$, for the same beamformers $(\vec{v}_1^\star,\ldots,\vec{v}_K^\star)$ ($\vec{p}^\mathsf{ul}$ and $\vec{p}^\mathsf{dl}$ can be different). Moreover, the power vector $\vec{p}^{\mathsf{dl}}$ (resp. $\vec{p}^{\mathsf{ul}}$) solves the system $(\vec{D}-\vec{\Gamma}\vec{B})\vec{p}^{\mathsf{dl}} = \vec{\Gamma}\vec{1}$ (resp. $(\vec{D}-\vec{\Gamma}\vec{B}^\T)\vec{p}^{\mathsf{ul}} = \vec{\Gamma}\vec{\Sigma}\vec{1}$) parametrized by $\vec{\Gamma} \eqdef \mathrm{diag}\left(\gamma_1,\ldots,\gamma_K\right)$, $\vec{D}\eqdef \mathrm{diag}\left(|\E[\vec{h}^\herm_1\vec{v}^{\star}_1|U]|^2,\ldots,|\E[\vec{h}^\herm_K\vec{v}^{\star}_K|U]|^2\right)$, $\vec{\Sigma}\eqdef \mathrm{diag}\left(\E[\|\vec{v}^{\star}_1\|^2|U],\ldots,\E[\|\vec{v}^{\star}_K\|^2|U]\right)$, and
\begin{align*}
\vec{B}&\eqdef \begin{bmatrix}
\E[|\vec{h}_1^\herm\vec{v}_1^{\star}|^2|U] & \ldots & \E[|\vec{h}_1^\herm\vec{v}^{\star}_K|U|^2] \\
\vdots & \ddots & \vdots \\
\E[|\vec{h}_K^\herm\vec{v}^{\star}_1|^2|U] & \ldots & \E[|\vec{h}_K^\herm\vec{v}^{\star}_K|U|^2]
\end{bmatrix}-\vec{D}.
\end{align*} 
Furthermore, the two solutions satisfy $\vec{1}^\T \vec{p}^\mathsf{ul}= \vec{1}^\T \vec{\Sigma}\vec{p}^\mathsf{dl}$.
\end{proposition}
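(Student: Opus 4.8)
The plan is to obtain the statement as the conditional (stochastic) counterpart of the deterministic uplink--downlink duality already invoked in Proposition~\ref{prop:duality_hard}, exploiting the standing assumptions that the common information $U$ is available both for decoding ($U_k=U$) and for beamforming ($S_{l,k}=(S_{l,k}',U)$), and that $\vec{p}\in\set{P}_U^K$ and $(\gamma_1,\ldots,\gamma_K)\in\set{P}_U^K$. Under these assumptions every quantity entering $\mathsf{SINR}_k^{\mathsf{coh,dl}}$ and $\mathsf{SINR}_k^{\mathsf{coh,ul}}$ --- the conditional means $\E[\vec{h}_i^\herm\vec{v}_j^\star|U]$, the conditional cross second moments $\E[|\vec{h}_i^\herm\vec{v}_j^\star|^2|U]$, the conditional norms $\E[\|\vec{v}_k^\star\|^2|U]$, as well as $p_k$ and $\gamma_k$ --- is $\sigma(U)$-measurable, so the stochastic feasibility problems \eqref{prob:feas_dl} and \eqref{prob:feas_ul} decompose into a family of deterministic feasibility problems indexed by the realization of $U$, to which the classical power-control duality can be applied $\mathbb{P}$-almost surely.

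Concretely, I would first put the per-user constraint in \eqref{prob:feas_dl} in scalar form: using $g_{j,k}=\sqrt{p_j}\,\vec{h}_k^\herm\vec{v}_j$ for the downlink, the identity $\V(g_{k,k}|U)+|\E[g_{k,k}|U]|^2=\E[|g_{k,k}|^2|U]$, and the fact that the powers $p_j$ can be pulled out of the conditional moments (being $\sigma(U)$-measurable), the $K$ constraints collapse to the linear system $(\vec{D}-\vec{\Gamma}\vec{B})\vec{p}^{\mathsf{dl}}=\vec{\Gamma}\vec{1}$ with $\vec{D},\vec{\Gamma},\vec{B}$ exactly as in the statement, where in particular the diagonal of $\vec{B}$ collects the conditional variances $\V(\vec{h}_k^\herm\vec{v}_k^\star|U)$ and its off-diagonal entries the conditional cross second moments. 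Repeating the same computation for \eqref{prob:feas_ul}, now with $g_{j,k}=\sqrt{p_j}\,\vec{h}_j^\herm\vec{v}_k$ and $\sigma_k^2=\|\vec{v}_k\|^2$, yields $(\vec{D}-\vec{\Gamma}\vec{B}^\T)\vec{p}^{\mathsf{ul}}=\vec{\Gamma}\vec{\Sigma}\vec{1}$. After setting aside the degenerate users (those with $\gamma_k=0$, or with $\E[\vec{h}_k^\herm\vec{v}_k^\star|U]=0$, equivalently $\vec{v}_k^\star=\vec{0}$, which can bear no positive SINR and are assigned zero power), I would divide through by $\vec{\Gamma}$ and observe that the two systems read $\vec{F}\vec{p}^{\mathsf{dl}}=\vec{1}$ and $\vec{F}^\T\vec{p}^{\mathsf{ul}}=\vec{\Sigma}\vec{1}$ for the common matrix $\vec{F}\eqdef\vec{\Gamma}^{-1}\vec{D}-\vec{B}$.

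From here the argument is the standard one from nonnegative-matrix theory, applied pointwise in $U$. Since $\vec{B}$ is entrywise nonnegative and $\vec{\Gamma}^{-1}\vec{D}$ is a positive diagonal, $\vec{F}\vec{p}=\vec{1}$ admits a nonnegative solution if and only if $\rho(\vec{D}^{-1}\vec{\Gamma}\vec{B})<1$, and similarly $\vec{F}^\T\vec{p}=\vec{\Sigma}\vec{1}$ admits a nonnegative solution if and only if $\rho(\vec{D}^{-1}\vec{\Gamma}\vec{B}^\T)<1$; in both cases the solution is unique and strictly positive. These two spectral radii coincide because $\rho(\vec{X}\vec{Y})=\rho(\vec{Y}\vec{X})$ and transposition preserves the spectral radius (take $\vec{X}=\vec{D}^{-1}\vec{\Gamma}$, $\vec{Y}=\vec{B}$). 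Hence, for $\mathbb{P}$-almost every $\omega$, solvability of the downlink and uplink systems is equivalent and is witnessed by the \emph{same} $(\vec{v}_1^\star,\ldots,\vec{v}_K^\star)$, which is precisely the claimed stochastic equivalence. Measurability is then immediate: $\vec{p}^{\mathsf{dl}}=\vec{F}^{-1}\vec{1}$ and $\vec{p}^{\mathsf{ul}}=(\vec{F}^\T)^{-1}\vec{\Sigma}\vec{1}$ are rational functions of the $\sigma(U)$-measurable entries of $\vec{D},\vec{B},\vec{\Gamma},\vec{\Sigma}$, so $\vec{p}^{\mathsf{dl}},\vec{p}^{\mathsf{ul}}\in\set{P}_U^K$. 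Finally, the sum-power identity follows from $\vec{1}^\T\vec{p}^{\mathsf{ul}}=(\vec{p}^{\mathsf{dl}})^\T\vec{F}^\T\vec{p}^{\mathsf{ul}}=(\vec{p}^{\mathsf{dl}})^\T\vec{\Sigma}\vec{1}=\vec{1}^\T\vec{\Sigma}\vec{p}^{\mathsf{dl}}$, using $\vec{1}^\T=(\vec{p}^{\mathsf{dl}})^\T\vec{F}^\T$ and the symmetry of $\vec{\Sigma}$.

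Everything else being a conditional-expectation rerun of the proof of Proposition~\ref{prop:duality_hard} (Appendix~\ref{proof:duality_hard}), the only genuinely delicate point I anticipate is the measure-theoretic bookkeeping: verifying that the deterministic duality can be invoked simultaneously for $\mathbb{P}$-almost all $\omega\in\Omega$, that the resulting conditional power allocations indeed assemble into bona fide elements of $\set{P}_U^K$, and that the degenerate cases (zero SINR targets, zero beamformers, and the convention $0/0=0$) are handled so that all the stated relations hold as almost-sure equalities within the relevant equivalence classes of random variables.
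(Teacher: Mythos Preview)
Your proposal is correct and follows essentially the same approach as the paper, whose proof is literally the one-liner ``extend the arguments of Proposition~\ref{prop:duality_hard} to a stochastic setting''; you have carefully spelled out what that extension entails, including the pointwise-in-$U$ application of the spectral-radius argument and the measurability of the resulting power vectors. Your sum-power computation via $\vec{1}^\T\vec{p}^{\mathsf{ul}}=(\vec{p}^{\mathsf{dl}})^\T\vec{F}^\T\vec{p}^{\mathsf{ul}}$ is a slightly cleaner variant of the direct calculation in Appendix~\ref{proof:duality_hard}, and your use of $\rho(\vec{X}\vec{Y})=\rho(\vec{Y}\vec{X})$ in place of the paper's standard-interference-mapping language is an equivalent route to the same conclusion.
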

\begin{proof}
The proof follows by extending the same arguments as in the proof of Proposition~\ref{prop:duality_hard} to a stochastic setting. 
\end{proof}

Similar to what discussed in Section~\ref{sec:duality_hard}, the above proposition can be used to study the structure of Pareto optimal downlink beamformers with respect to the coherent decoding inner bound on the ergodic achievable rate region under a sum power constraint. In particular, given minimum ergodic rate constraints $(r_1,\ldots,r_K)\in \stdset{R}^K_+$, the optimization problem
\begin{equation}\label{prob:QoS_coh}
\begin{aligned}
\underset{\substack{(\forall k\in \set{K})~\vec{v}_k\in \set{V}_k}}{\text{minimize}} \quad & \sum_{k=1}^K\E[\|\vec{v}_k\|^2]\\
\text{subject to } \; \quad & (\forall k \in \set{K})~R_k^{\mathsf{coh,dl}}(\vec{v}_1,\ldots,\vec{v}_K) \geq r_k,
\end{aligned}
\end{equation}
$R_k^{\mathsf{coh,dl}}(\vec{v}_1,\ldots,\vec{v}_K)\eqdef \E[\log(1+\mathsf{SINR}_k^{\mathsf{coh,dl}}(\vec{v}_1,\ldots,\vec{v}_K) )]$, can be converted using Proposition~\ref{prop:duality_coh} to
\begin{equation*}
\begin{aligned}
\underset{\substack{(\forall k\in \set{K})~\vec{v}_k\in \set{V}_k\\ \vec{p}\in \set{P}_{U}^K}}{\text{minimize}} \quad & \sum_{k=1}^K\E[p_k]\\
\text{subject to } \; \quad & (\forall k \in \set{K})~R_k^{\mathsf{coh,ul}}(\vec{v}_k,\vec{p}) \geq r_k,
\end{aligned}
\end{equation*}
$R_k^{\mathsf{coh,ul}}(\vec{v}_k,\vec{p})\eqdef \E[\log(1+\mathsf{SINR}_k^{\mathsf{coh,ul}}(\vec{v}_k,\vec{p}) )]$. For the above conversion to hold, we remark the crucial assumption on the availability of the side information $U$ for beamforming, which ensures that applying a power scaling factor $p_k\in \set{P}_U^K$ to $\vec{v}_k \in \set{V}_k$ still produces a beamformer $\sqrt{p_k}\vec{v}_k\in \set{V}_k$ for each $k\in \set{K}$.
Due to the decoupled constraints in $(\vec{v}_1,\ldots,\vec{v}_K)$, we then obtain that Pareto optimal downlink beamformers are given (up to a known scaling factor) by uplink beamformers attaining $\sup_{\vec{v}_k \in \set{V}_k} R_k^{\mathsf{coh},\mathsf{ul}}(\vec{v}_k,\vec{p})$ for some $\vec{p}\in \set{P}_U^K$. Furthermore, by Corollary~\ref{cor:coh} and by recalling that $U$ is common information, optimal beamformers can be obtained as solutions to MSE minimization problems in \eqref{eq:MSE_prob}. Therefore, for the coherent decoding inner bound under a sum power constraint, and under the given assumption on the availability of some common information $U$ for both decoding and beamforming, the main insights on joint optimal channel estimation and beamforming from Section~\ref{sec:main} directly apply also to the downlink case.

We conclude this section by pointing out that the above discussion also applies to the case of instantaneous sum power minimization, i.e., to the stochastic optimization problem
\begin{equation}\label{prob:QoS_coh_inst}
\begin{aligned}
\underset{\substack{(\forall k\in \set{K})~\vec{v}_k\in \set{V}_k}}{\text{minimize}} \quad & \sum_{k=1}^K\E[\|\vec{v}_k\|^2|U]\\
\text{subject to } \; \quad & (\forall k \in \set{K})~\mathsf{SINR}_k^{\mathsf{coh,dl}}(\vec{v}_1,\ldots,\vec{v}_K) \geq \gamma_k,
\end{aligned}
\end{equation}
where $(\gamma_1,\ldots,\gamma_K)\in \set{P}_U^K$ is a given tuple of instantaneous SINR constraints, and where the term $\textit{minimize}$ should be interpreted as finding the essential infimum of the set of feasible random variables $\E[\|\vec{v}_k\|^2|U]$ satisfying the instantaneous SINR constraints almost surely. Note that, if $(\gamma_1,\ldots,\gamma_K)$ is given by the instantaneous SINRs attained by a solution $(\vec{v}_1^\star,\ldots,\vec{v}_K^\star)$ to Problem~\eqref{prob:QoS_coh}, then Problem~\eqref{prob:QoS_coh_inst} is also solved by $(\vec{v}_1^\star,\ldots,\vec{v}_K^\star)$. However, Problem~\eqref{prob:QoS_coh_inst} and the uplink-downlink duality result in Proposition~\ref{prop:duality_coh} can be used to study the optimal solution structure for different SINR tuples $(\gamma_1,\ldots,\gamma_K)\in \set{P}_U^K$, e.g., corresponding to points on the (Pareto) boundary of the instantaneous rate region under an instantaneous sum power constraint.

\subsection{Extension to per-transmitter power constraints}\label{sec:duality_perTX}
Considering a sum power constraint is typically enough for studying multi-cell massive MIMO system models. Furthermore, due to its tractability, it is also an excellent proxy for studying cell-free massive MIMO system models. This is because scaling optimal downlink beamformers derived under a sum power constraint, in order to meet more realistic per-transmitter power constraints, typically incurs minor SNR losses in the interference limited regime. Nevertheless, in this section we show that the above insights on joint optimal channel estimation and beamforming under a sum power constraint can also be rigorously applied to the case of per-transmitter power constraints. 

Following the approach in \cite{miretti2024duality}, we focus on the following modified version of Problem~\eqref{eq:yates_hard}
\begin{equation}\label{eq:hard_pertx}
\begin{aligned}
\underset{\substack{(\forall k\in \set{K})~\vec{v}_k\in \set{V}_k}}{\text{minimize}} \quad & \sum_{k=1}^K\E[\|\vec{v}_k\|^2]\\
\text{subject to } \; \quad & (\forall k \in \set{K})~\mathsf{SINR}_k^{\mathsf{hard}}(\vec{v}_1,\ldots,\vec{v}_K) \geq \gamma_k\\
& (\forall l \in \set{L})~\sum_{k=1}^K\E[\|\vec{v}_{l,k}\|^2]\leq P_l,
\end{aligned}
\end{equation}
where $(P_1,\ldots,P_L)\in \stdset{R}_{++}^L$ is a given tuple of per-transmitter power constraints.
We then consider the following augmented Lagrangian formulation
\begin{equation}\label{eq:hard_pertx_aug}
\begin{aligned}
\underset{\substack{(\forall k\in \set{K})~\vec{v}_k\in \set{V}_k}}{\text{minimize}} \quad & \sum_{k=1}\E[\|\vec{v}_k\|_{\vec{1}+\vec{\lambda}}^2]\\
\text{subject to } \; \quad & (\forall k \in \set{K})~\mathsf{SINR}_k^{\mathsf{hard}}(\vec{v}_1,\ldots,\vec{v}_K) \geq \gamma_k,
\end{aligned}
\end{equation}
where $\vec{\lambda}=(\lambda_1,\ldots,\lambda_L)\in \stdset{R}_{+}^L$ are Lagrangian multipliers associated with the power constraints, and where we define $(\forall k \in \set{K})(\forall \vec{v}_k\in \set{V}_k)$ $\|\vec{v}_k\|_{\vec{1}+\vec{\lambda}}^2 \eqdef \sum_{l=1}^L(1+\lambda_l)\|\vec{v}_{l,k}\|^2$. The following proposition connects the above two problems.
\begin{proposition}\label{prop:hard_pertx_aug}
If Problem~\eqref{eq:hard_pertx} is feasible, then there exists $\vec{\lambda}\in \stdset{R}_{+}^L$ such that a solution to Problem~\eqref{eq:hard_pertx} is given by a solution to Problem~\eqref{eq:hard_pertx_aug}. 
\end{proposition}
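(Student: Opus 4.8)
The plan is to prove the claim by Lagrangian relaxation of the $L$ per-transmitter power constraints in Problem~\eqref{eq:hard_pertx}, in the spirit of \cite{miretti2024duality}. Writing the Lagrangian as $\mathcal{L}(\vec{v}_1,\ldots,\vec{v}_K,\vec{\lambda}) \eqdef \sum_{k}\E[\|\vec{v}_k\|^2] + \sum_{l}\lambda_l\big(\sum_k\E[\|\vec{v}_{l,k}\|^2]-P_l\big) = \sum_k\E[\|\vec{v}_k\|_{\vec{1}+\vec{\lambda}}^2] - \sum_l\lambda_l P_l$, the minimization of $\mathcal{L}(\cdot,\vec{\lambda})$ over the beamformers satisfying only the SINR constraints of~\eqref{eq:hard_pertx} is, up to the additive constant $-\sum_l\lambda_l P_l$, exactly Problem~\eqref{eq:hard_pertx_aug}; I denote its optimal value by $q(\vec{\lambda})$ and the optimal value of Problem~\eqref{eq:hard_pertx} by $p^\star$. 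First I would record weak duality: for every $\vec{\lambda}\in\stdset{R}_+^L$ and every $(\vec{v}_1,\ldots,\vec{v}_K)$ feasible for~\eqref{eq:hard_pertx}, dropping the nonpositive term $\sum_l\lambda_l(\sum_k\E[\|\vec{v}_{l,k}\|^2]-P_l)$ gives $q(\vec{\lambda})-\sum_l\lambda_l P_l\le \sum_k\E[\|\vec{v}_k\|^2]$, hence $q(\vec{\lambda})-\sum_l\lambda_l P_l\le p^\star$ for all $\vec{\lambda}\in\stdset{R}_+^L$.

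The core of the argument is then to establish \emph{strong} duality with attainment, i.e., existence of $\vec{\lambda}^\star\in\stdset{R}_+^L$ with $q(\vec{\lambda}^\star)-\sum_l\lambda_l^\star P_l = p^\star$, as well as of a minimizer $(\vec{v}_1^\star,\ldots,\vec{v}_K^\star)$ of~\eqref{eq:hard_pertx}. The enabling observation is that, although the constraint $\mathsf{SINR}_k^{\mathsf{hard}}(\vec{v}_1,\ldots,\vec{v}_K)\ge\gamma_k$ is not convex as written, it becomes a second-order cone constraint after a loss-free phase normalization: since $\mathsf{SINR}_k^{\mathsf{hard}}$, $\E[\|\vec{v}_k\|^2]$, and each $\E[\|\vec{v}_{l,k}\|^2]$ are invariant under multiplying $\vec{v}_k$ by a deterministic unit-modulus scalar (which preserves membership in $\set{V}_k$, hence neither $p^\star$ nor $q(\vec\lambda)$ change), one may restrict to beamformers with $\E[\vec{h}_k^\herm\vec{v}_k]\in\stdset{R}_+$, under which, using $\V(\vec{h}_k^\herm\vec{v}_k) = \E[|\vec{h}_k^\herm\vec{v}_k|^2]-|\E[\vec{h}_k^\herm\vec{v}_k]|^2$, the constraint reads $\sqrt{1+\gamma_k}\,\E[\vec{h}_k^\herm\vec{v}_k]\ge\sqrt{\gamma_k}\,\big(\textstyle\sum_{j\in\set{K}}\E[|\vec{h}_k^\herm\vec{v}_j|^2]+1\big)^{1/2}$, i.e., ``affine $\ge$ norm of affine'' in $(\vec{v}_1,\ldots,\vec{v}_K)$. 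Since each $\set{V}_k$ is linear, the phase-normalized Problem~\eqref{eq:hard_pertx} is then a convex program with convex quadratic objective, and strong duality with a bounded nonempty dual optimal set follows from a suitable constraint qualification (alternatively, from a semidefinite lifting in which all constraints are linear). I expect this to be the main obstacle: the problem is infinite-dimensional, so existence of $(\vec{v}_1^\star,\ldots,\vec{v}_K^\star)$ must come from weak compactness of the feasible set --- which is norm-bounded by $\sum_l P_l$ through the power constraints --- together with weak lower semicontinuity of the objective and of the convex functionals defining the SINR constraints (for which the standing condition $\E[\|\vec{H}\vec{H}^\herm\|_\mathrm{F}^2]<\infty$ suffices); moreover, when $(\gamma_1,\ldots,\gamma_K)$ lies on the boundary of the per-transmitter achievable region a Slater point need not exist, which would be circumvented by perturbing $P_l\to P_l+\epsilon$, invoking strong duality for $\epsilon>0$, and passing $\epsilon\downarrow0$ while keeping the multipliers bounded (finiteness of $p^\star$ being guaranteed by feasibility of~\eqref{eq:hard_pertx}).

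Granting strong duality with an optimal $\vec{\lambda}^\star$ and a primal minimizer $(\vec{v}_1^\star,\ldots,\vec{v}_K^\star)$ of~\eqref{eq:hard_pertx}, I would close with a short complementary-slackness argument. On one hand, $q(\vec{\lambda}^\star) = p^\star + \sum_l\lambda_l^\star P_l = \sum_k\E[\|\vec{v}_k^\star\|^2] + \sum_l\lambda_l^\star P_l$. On the other hand, $(\vec{v}_1^\star,\ldots,\vec{v}_K^\star)$ is feasible for~\eqref{eq:hard_pertx_aug}, so $q(\vec{\lambda}^\star)\le\sum_k\E[\|\vec{v}_k^\star\|_{\vec{1}+\vec{\lambda}^\star}^2] = \sum_k\E[\|\vec{v}_k^\star\|^2] + \sum_l\lambda_l^\star\big(\sum_k\E[\|\vec{v}_{l,k}^\star\|^2]\big)\le \sum_k\E[\|\vec{v}_k^\star\|^2]+\sum_l\lambda_l^\star P_l$, the last step using the per-transmitter power constraints and $\vec{\lambda}^\star\ge\vec{0}$. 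Comparing the two displays forces both inequalities to be equalities; this simultaneously gives complementary slackness $\sum_l\lambda_l^\star(\sum_k\E[\|\vec{v}_{l,k}^\star\|^2]-P_l)=0$ and, crucially, $\sum_k\E[\|\vec{v}_k^\star\|_{\vec{1}+\vec{\lambda}^\star}^2]=q(\vec{\lambda}^\star)$, i.e., the minimizer $(\vec{v}_1^\star,\ldots,\vec{v}_K^\star)$ of~\eqref{eq:hard_pertx} is also optimal for~\eqref{eq:hard_pertx_aug} with this $\vec{\lambda}^\star$. Equivalently, this $(\vec{v}_1^\star,\ldots,\vec{v}_K^\star)$ exhibits a solution of~\eqref{eq:hard_pertx_aug} that is also a solution of~\eqref{eq:hard_pertx}, which is the claim.
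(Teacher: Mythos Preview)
Your proposal is correct and follows essentially the same route as the paper: Lagrangian relaxation of the per-transmitter constraints, convexification of the SINR constraints (your phase normalization is equivalent to the paper's replacement of $|\E[\vec{h}_k^\herm\vec{v}_k]|$ by $\Re(\E[\vec{h}_k^\herm\vec{v}_k])$, which it then absorbs into the objective via an indicator of a closed convex set), strong duality with dual attainment, and primal recovery --- your explicit complementary-slackness closure replacing the paper's deferral to \cite[Proposition~4]{miretti2024duality}, and your weak-compactness argument replacing the paper's Hilbert projection theorem for primal existence. The one technical divergence worth flagging is the constraint qualification: your perturbation $P_l\to P_l+\epsilon$ still needs the multipliers $\vec{\lambda}^{(\epsilon)}$ to remain bounded as $\epsilon\downarrow 0$, which does not follow from finiteness of $p^\star$ alone; the paper sidesteps this by keeping only the power constraints explicit after the indicator absorption, so that Slater's condition is met simply because $P_l>0$ gives the power-constraint set nonempty interior.
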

\begin{proof}
The proof is given in \cite{miretti2024duality} for the special case of strict feasibility, and by further restricting $\set{V}_k$ to elements with finite second-order moment without loss of generality. In Appendix~\ref{app:proof_hard_pertx_aug}, we show how to modify the proof in \cite{miretti2024duality} to cover the more general feasibility assumption.
\end{proof}

We notice the similarity between Problem~\eqref{eq:hard_pertx_aug} and Problem~\eqref{eq:yates_hard}. Hence, we expect that Problem~\eqref{eq:hard_pertx_aug} can be similarly converted into a dual uplink problem. In fact, as observed in \cite{miretti2024duality}, for all $\vec{\lambda}\in \stdset{R}_+^L$, a solution to Problem~\eqref{eq:hard_pertx_aug} can be equivalently obtained (up to a known scaling factor) by solving 
\begin{equation*}\label{eq:Uatf_pertx}
\begin{aligned}
\underset{\substack{(\forall k\in \set{K})~\vec{v}_k\in \set{V}_k\\ \vec{p}\in \stdset{R}_{+}^K}}{\text{minimize}} \quad & \sum_{k=1}^Kp_k\\
\text{subject to }\; \quad & (\forall k \in \set{K})~\widetilde{\mathsf{SINR}}_k^{\mathsf{UatF}}(\vec{v}_k,\vec{p},\vec{\lambda}) \geq \gamma_k,
\end{aligned}
\end{equation*}
where $(\forall k \in \set{K})$ $(\forall \vec{v}_k \in\set{V}_k)$ $(\forall \vec{p}\in \stdset{R}_+^K)$ $(\forall \vec{\lambda}\in \stdset{R}_+^L)$ $ \widetilde{\mathsf{SINR}}_k^{\mathsf{UatF}}(\vec{v}_k,\vec{p},\vec{\lambda})\eqdef $
\begin{equation*}
\dfrac{p_k|\E[\vec{h}_k^\herm\vec{v}_k]|^2}{p_k\V(\vec{h}_k^\herm\vec{v}_k)+\sum_{j\in \set{K}\backslash \{k\}}p_j\E[|\vec{h}_j^\herm\vec{v}_k|^2]+\E[\|\vec{v}_k\|_{\vec{1}+\vec{\lambda}}^2]}
\end{equation*}
is a modified version of $\mathsf{SINR}_k^{\mathsf{UatF}}(\vec{v}_k,\vec{p})$ obtained by replacing the noise covariance $\vec{I}_M$ in \eqref{eq:ul} with the augmented noise covariance $\vec{I}_M+\vec{\Lambda} \eqdef \vec{I}_M + \mathrm{diag}(\lambda_1\vec{I}_N,\ldots,\lambda_L\vec{I}_N)$. This equivalence readily follows  by a minor modification of the uplink-downlink duality principle in Proposition~\ref{prop:duality_hard} obtained by replacing $\mathsf{SINR}_k^{\mathsf{UatF}}$ with $\widetilde{\mathsf{SINR}}_k^{\mathsf{UatF}}$ in Problem~\eqref{prob:feas_UatF}, and $\vec{\Sigma}=\mathrm{diag}\left(\E[\|\vec{v}_1^{\star}\|^2],\ldots,\E[\|\vec{v}_K^\star\|^2]\right)$ with $\vec{\Sigma}_{\vec{\lambda}}\eqdef \mathrm{diag}\left(\E[\|\vec{v}_1^{\star}\|_{\vec{1}+\vec{\lambda}}^2],\ldots,\E[\|\vec{v}_K^\star\|_{\vec{1}+\vec{\lambda}}^2]\right)$. Therefore, similar to the sum power constraint case, a solution to Problem~\eqref{eq:hard_pertx} can be equivalently obtained as scaled versions of uplink beamformers attaining $\sup_{\vec{v}_k\in \set{V}_k}\widetilde{\mathsf{SINR}}_k^{\mathsf{UatF}}(\vec{v}_k,\vec{p},\vec{\lambda})$ for some $(\vec{p},\vec{\lambda})\in \stdset{R}_+^K\times \stdset{R}_+^L$. 

Moreover, by using the same arguments on the relation between uplink SINR maximization and MSE minimization in Section~\ref{sec:uplink_UatF}, it can be readily shown that, for all  $(\vec{p},\vec{\lambda})\in \stdset{R}_+^K\times \stdset{R}_+^L$,  uplink beamformers attaining $\sup_{\vec{v}_k\in \set{V}_k}\widetilde{\mathsf{SINR}}_k^{\mathsf{UatF}}(\vec{v}_k,\vec{p},\vec{\lambda})$  can be obtained as solutions to modified MSE minimization problems
 $(\forall k \in \set{K})$
 
\begin{equation*}
\begin{aligned}
\underset{\vec{v}_k\in \set{V}_k}{\text{minimize}}~\widetilde{\mathsf{MSE}}_k(\vec{v}_k,\vec{p},\vec{
\lambda}),
\end{aligned}
\end{equation*} 
where $
\widetilde{\mathsf{MSE}}_k(\vec{v}_k,\vec{p},\vec{\lambda})= \E[\vec{v}_k^\herm(\vec{H}\vec{P}\vec{H}^\herm + \vec{I}_M+\vec{\Lambda})\vec{v}_k -2\sqrt{p_k}\Re(\vec{h}_k^\herm\vec{v}_k)+1]$. Finally, we observe that all the results in Section~\ref{sec:main} can be readily adapted to the above modified MSE expression. In particular, Proposition~\ref{prop:centr} on optimal centralized beamforming can be adapted by replacing $\vec{I}_M$ with $\vec{I}_M+ \vec{\Lambda}$ in the matrix inversion step in \eqref{eq:CMMSE}. Similarly, Proposition~\ref{prop:distr} on optimal distributed beamforming (as well as the given examples in Section~\ref{sec:examples}) can be adapted by replacing $\vec{I}_N$ with $\vec{I}_N+ \lambda_l \vec{I}_N$ in the matrix inversion step of each local MMSE beamformer in \eqref{eq:TMMSE}. Therefore, the main conclusions of Section~\ref{sec:duality_hard} can be extended to the case of per-transmitter power constraints, i.e., to Problem~\eqref{eq:hard_pertx}.

We conclude this section by pointing out that a similar result to Proposition~\ref{prop:hard_pertx_aug} can be established also for the coherent decoding lower bound, and in particular to a modified version of Problem~\eqref{prob:QoS_coh_inst} given by 
\begin{equation*}
\begin{aligned}
\underset{\substack{(\forall k\in \set{K})~\vec{v}_k\in \set{V}_k}}{\text{minimize}} \quad & \sum_{k=1}^K\E[\|\vec{v}_k\|^2|U]\\
\text{subject to } \; \quad & (\forall k \in \set{K})~\mathsf{SINR}_k^{\mathsf{coh,dl}}(\vec{v}_1,\ldots,\vec{v}_K) \geq \gamma_k \\
& (\forall l \in \set{L})~\sum_{k=1}^K\E[\|\vec{v}_{l,k}\|^2|U]\leq P_l,
\end{aligned}
\end{equation*}
where $(\gamma_1,\ldots,\gamma_K)\in \set{P}_U^K$ and $(P_1,\ldots,P_L)\in \set{P}_U^L$ are given tuples of instantaneous SINR and per-transmitter power constraints. Hence, following similar arguments as above, we obtain that the main conclusions of Section~\ref{sec:main} and Section~\ref{sec:duality_coh} can be extended to the case of per-transmitter power constraints. We omit the details due to space limitation.

\section{Applications}\label{sec:applications}
In this section we illustrate the application of our main results to common multi-cell and cell-free massive MIMO system models. In particular, after reviewing the canonical channel and pilot signaling model from the related literature, we show how to evaluate the solution to the MMSE joint channel estimation and beamforming problem \eqref{eq:MSE} in Section~\ref{sec:main} under different practical information constraints. We recall that, as discussed in Section~\ref{sec:uplink} and Section~\ref{sec:duality}, this problem is relevant for both uplink and downlink models. Furthermore, to be concrete, we demonstrate how this solution can be used to address uplink and downlink resource allocation problems that can be interpreted as joint beamforming, channel estimation, and power control problems. 

\subsection{Channel and pilot signaling model}\label{ssec:channel_pilot_model}
We consider a multi-user MIMO network composed by $L$ access points, each of them equipped with $N$ antennas. We let the channel between each access point $l\in \set{L}$ and user $k\in \set{K}$ be distributed as $\vec{h}_{l,k}\sim\CN(\vec{0},\vec{R}_{l,k})$ for some spatial covariance matrix $\vec{R}_{l,k}\in \stdset{C}^{N\times N}$, according to the spatially correlated Rayleigh fading model \cite[Chapter~2]{massivemimobook} \cite[Chapter~2]{demir2021}. Motivated by their geographical separation, we further assume that the channels of all user and access point pairs are mutually independent \cite[Chapter~2]{demir2021}. 
\begin{remark}
All the insights in this section can be readily extended to non-zero mean channels, e.g., to the spatially correlated Rician fading model for line-of-sight propagation \cite{ozdogan2019rician,noor2025wcnc}. The details are omitted, as they do not provide significantly new insights and result in cumbersome expressions.
\end{remark}

Following the usual uplink pilot signaling model in \cite[Chapter~3]{massivemimobook} \cite[Chapter~4]{demir2021}, which is essentially the uplink channel model in \eqref{eq:ul} applied to pilot transmission, we then assume that each access point $l\in \set{L}$ obtains $\tau_p$ noisy channel measurements of the type 
\begin{equation}\label{eq:ul_pilot}
\vec{Y}_{l}^{\mathsf{pilot}} = \sum_{k\in \set{K}} \sqrt{p_k^{\mathsf{pilot}}}\vec{h}_{l,k}\vec{\phi}_k^\herm + \vec{Z}^{\mathsf{pilot}}_{l},
\end{equation}
where $\vec{Z}^{\mathsf{pilot}}_l$ is a $N\times \tau_p$ additive white Gaussian matrix with $\CN(0,1)$ i.i.d. entries, and where $\vec{\phi}_k \in \stdset{C}^{\tau_p\times 1}$ is a known pilot sequence sent by user $k\in \set{K}$ with norm $\|\vec{\phi}_k\|^2=\tau_p$ and associated power control coefficient $p^{\mathsf{pilot}}_k \in \stdset{R}_+$. As customary, we assume that $\vec{\phi}_k$ is selected from the columns of a codebook of $\tau_p$ mutually orthogonal pilots $\vec{\Phi}\in \stdset{C}^{\tau_p\times \tau_p}$, where $\tau_p$ is typically much smaller than the coherence block length $\tau_c$ (i.e., the number of consecutive uplink channel uses in time and frequency over which the fading state can be assumed constant) and of the total number of users $K$, leading to the so-called pilot contamination effect. Given a user $k\in \set{K}$, we denote by $\set{K}_k\subseteq \set{K}$ the subset of users sharing the same pilot as user~$k$. According to the original massive MIMO paradigm, the above measurements are used for both uplink and downlink beamforming, leveraging the channel reciprocity property of time-division duplex systems.

To formulate the joint channel estimation and beamforming optimization problem \eqref{eq:MSE_prob}, we need to specify the underlying information constraints, i.e., the considered CSI $S_k=(S_{1,k},\ldots,S_{L,k})$ and the clusters $\set{L}_k$, as well as the power vector $\vec{p}$. In general, we assume that the CSI of each access point $l\in \set{L}$ is composed by at least the locally received uplink pilot signals $\vec{Y}_{l}^{\mathsf{pilot}}$ in \eqref{eq:ul_pilot}, and potentially by additional measurements obtained by sharing these signals over the fronthaul according to some specified rule. The cardinality of $\set{L}_k$ is then used to specify whether we consider a multi-cell or a cell-free system model. Several relevant examples are detailed in the following. 

\subsection{Multi-cell massive MIMO}\label{ssec:mMIMO}
In this section we focus on multi-cell massive MIMO networks, following closely the definition in \cite{massivemimobook} and other variations. For all $k\in \set{K}$, we let $\set{L}_k = \{l_k\}$ for some $l_k\in \set{L}$, i.e., we assume that each $k$th user is served by a single access point $l_k$. We further assume either $S_{l_k,k} = \vec{Y}_{l_k}^{\mathsf{pilot}}$ (no CSI sharing), i.e., that the serving access point only knows the locally measured pilots as in current cellular networks and in \cite{massivemimobook}, or $S_{l_k,k} = (\vec{Y}_{1}^{\mathsf{pilot}},\ldots,\vec{Y}_{L}^{\mathsf{pilot}})$ (full CSI sharing), i.e., that the access points share their measurements as in more advanced interference coordination schemes \cite{gesbert2010multicell}. The other variables $S_{l,k}$ for $l\neq l_k$ can be set arbitrarily, since the corresponding entries of $\vec{v}_k$ are anyway forced to zero. For the no CSI sharing case, we further assume $\vec{p}\in \stdset{R}^K_+$, i.e., that the power coefficients are deterministic as in \cite{massivemimobook}. For the full CSI sharing case, we also consider the more general case where $\vec{p}$ may be random and obtained as a function of $(\vec{Y}_{1}^{\mathsf{pilot}},\ldots,\vec{Y}_{L}^{\mathsf{pilot}})$, as in (rather impractical) fast adaptive power control schemes based on instantaneous CSI \cite{schubert2004solution}. We will give more details on this last condition in Section~\ref{ssec:maxmin_coh}.

Under the above assumptions, the solution to Problem~\eqref{eq:MSE_prob} for $k\in\set{K}$ is given by the local MMSE beamformer 
\begin{equation*}
(\forall l \in \set{L})~
    \vec{v}_{l,k} = \begin{cases}
        \vec{V}_{l,k}\vec{e}_k & \text{if } l = l_k, \\
        \vec{0} & \text{otherwise,}
    \end{cases}
\end{equation*}
where $\vec{V}_{l,k} = \Big(\hat{\vec{H}}_l^{(k)}\vec{P}\hat{\vec{H}}_l^{(k)\herm }+ \vec{\Psi}_l^{(k)}+\vec{I}_N  \Big)^{-1}\hat{\vec{H}}_l^{(k)}\vec{P}^{\frac{1}{2}}$ as in \eqref{eq:TMMSE}. This solution can be readily obtained from \eqref{eq:CMMSE} since, due to the multi-cell assumption $|\set{L}_k|=1$ which allows to arbitrarily set the other $S_{l,k}$ with $l\neq l_k$, both CSI sharing patterns can be mapped to a centralized CSI assumption as in Proposition~\ref{prop:centr} (team decision theory is not needed). To compute $\vec{V}_{l,k}$, we need to evaluate the MMSE estimate $\hat{\vec{H}}_l^{(k)} = \E[\vec{H}_l|\vec{Y}_{l}^{\mathsf{pilot}}]$ and the corresponding aggregate error covariance $\vec{\Psi}_l^{(k)}$. Each $j$th column of $\hat{\vec{H}}_l^{(k)}$, i.e., each $j$th user local channel estimate, can be rewritten as\footnote{Recall that we consider almost sure equalities between random variables.} $(\forall l \in \set{L})(\forall j \in \set{K})$
\begin{align*}
\E[\vec{h}_{l,j}|\vec{Y}_{l}^{\mathsf{pilot}}] = \E[\vec{h}_{l,j}|\vec{Y}_{l}^{\mathsf{pilot}}\vec{\Phi}] = \E[\vec{h}_{l,j}|\vec{y}_{l,j}^{\mathsf{pilot}}] =: \hat{\vec{h}}_{l,j},
\end{align*}
where the first equality follows since $\vec{\Phi}\in \stdset{C}^{\tau_p \times \tau_p}$ is a bijective map, and where the second equality follows by defining the decorrelated received uplink pilot signal $(\forall l \in \set{L})(\forall j \in \set{K})$
\begin{align*}\label{eq:ul_pilot_dec}
\vec{y}_{l,j}^{\mathsf{pilot}} &\eqdef \tfrac{1}{\sqrt{\tau_p}}\vec{Y}_{l}^{\mathsf{pilot}}\vec{\phi}_j\\
&= \sum_{i\in \set{K}_j} \sqrt{\tau_pp_i^{\mathsf{pilot}}}\vec{h}_{l,i} + \vec{z}^{\mathsf{pilot}}_{l,j}, \quad \vec{z}^{\mathsf{pilot}}_{l,j} \sim \CN(\vec{0}, \vec{I}_N),
\end{align*} 
and by noticing that $\vec{y}_{l,j}^{\mathsf{pilot}}$ is a sufficient statistics for $\vec{h}_{l,j}$ \cite[Chapter~4]{demir2021}. More precisely, we remark that $\vec{y}_{l,j}^{\mathsf{pilot}}$ is a column of $\vec{Y}_{l}^{\mathsf{pilot}}\vec{\Phi}$ and that, due to the independence of the channels of different users and the orthogonality of the pilots, the other columns of $\vec{Y}_{l}^{\mathsf{pilot}}\vec{\Phi}$ are independent of $\vec{h}_{l,j}$ and $\vec{y}_{l,j}^{\mathsf{pilot}}$. Then,  $\hat{\vec{H}}_l^{(k)}=[\hat{\vec{h}}_{l,1},\ldots,\ldots,\hat{\vec{h}}_{l,K}]$ is readily given by the well-known linear MMSE estimates $(\forall l \in \set{L})(\forall j\in \set{K})$ 
\begin{equation*}
\hat{\vec{h}}_{l,j} = \sqrt{\tau_pp_j^{\mathsf{pilot}}}\vec{R}_{l,j}\Bigg(\sum_{i\in \set{K}_j} \tau_pp_i^{\mathsf{pilot}}\vec{R}_{l,i}+\vec{I}_N \Bigg)^{-1}\vec{y}_{l,j}^{\mathsf{pilot}}.
\end{equation*}
Similarly, the error covariances in the definition of $\vec{\Psi}_l^{(k)}=\sum_{j\in \set{K}}p_j\E[\tilde{\vec{h}}_{l,j}^{(k)}\tilde{\vec{h}}_{l,j}^{(k)\herm}]$ are given by $(\forall l \in \set{L})(\forall j\in \set{K})$ 
\begin{align*}
&\E[\tilde{\vec{h}}_{l,j}^{(k)}\tilde{\vec{h}}_{l,j}^{(k)\herm}] = \E[(\hat{\vec{h}}_{l,j}-\vec{h}_{l,j})(\hat{\vec{h}}_{l,j}-\vec{h}_{l,j})^\herm] \\
&= \vec{R}_{l,j}-\tau_pp_j^{\mathsf{pilot}}\vec{R}_{l,j}\Bigg(\sum_{i\in \set{K}_j} \tau_pp_i^{\mathsf{pilot}}\vec{R}_{l,i}+\vec{I}_N \Bigg)^{-1}\vec{R}_{l,j}.
\end{align*}
Note that the index $k$ can be dropped in $\hat{\vec{H}}^{(k)}_l$,  $\vec{\Psi}^{(k)}_l$, and hence in $\vec{V}_{l,k}$. Therefore, we simply write $\hat{\vec{H}}_l$,  $\vec{\Psi}_l$, and $\vec{V}_l=\Big(\hat{\vec{H}}_l\vec{P}\hat{\vec{H}}_l^{\herm }+ \vec{\Psi}_l+\vec{I}_N  \Big)^{-1}\hat{\vec{H}}_l\vec{P}^{\frac{1}{2}}$ everywhere.  

\begin{remark}\label{rem:mMIMO_sharing}
If $\vec{p}$ is deterministic, then the solutions under no CSI sharing and full CSI sharing coincide, i.e., sharing the measured pilots is not beneficial. Furthermore, for random $\vec{p}$ and full CSI sharing, the solution can be computed by only sharing $\vec{p}$ instead of $(\vec{Y}_{1}^{\mathsf{pilot}},\ldots,\vec{Y}_{L}^{\mathsf{pilot}})$.
\end{remark}

\subsection{Cell-free massive MIMO}\label{ssec:cfMIMO}
We now focus on cell-free massive MIMO networks with arbitrary cooperation clusters and two levels of CSI sharing.
\subsubsection{No CSI sharing} For all $k\in \set{K}$, we let $|\set{L}_k|> 1$ and $(\forall l \in \set{L})$ $S_{l,k} = \vec{Y}_{l}^{\mathsf{pilot}}$. This corresponds to augmenting the multi-cell model with no CSI sharing covered in the previous section by allowing multiple access points to cooperate by sharing the data bearing signals, as in the distributed cell-free model defined in \cite{demir2021}. We also assume $\vec{p}\in \stdset{R}_+^K$, i.e., deterministic power coefficients. Under these assumptions, the solution to Problem~\eqref{eq:MSE_prob} for $k\in \set{K}$ is given by the local team MMSE beamformers \eqref{eq:LTMMSE} in Proposition~\ref{prop:local}, where the local MMSE beamforming stages $\vec{V}_{l,k}$ are computed using the same expressions for $\hat{\vec{H}}_l^{(k)}$ and  $\vec{\Psi}_l^{(k)}$ as in Section~\ref{ssec:mMIMO}. We recall that we can drop the index $k$ in $\vec{V}_{l,k}$, $\hat{\vec{H}}_l^{(k)}$, and  $\vec{\Psi}_l^{(k)}$, and hence simply write $\vec{V}_{l}$, $\hat{\vec{H}}_l$, and  $\vec{\Psi}_l$. The same observation applies to the parameters $\vec{\Pi}_{l,k} = \E[\vec{P}^{\frac{1}{2}}\hat{\vec{H}}_l^\herm\vec{V}_{l}]$ of the linear system of equations that needs to be solved (e.g., numerically) for computing the statistical beamforming stages $\vec{c}_{l,k}$. 

\subsubsection{Full CSI sharing}
For all $k\in \set{K}$, we let $|\set{L}_k|> 1$ and $(\forall l \in \set{L})$ $S_{l,k} = (\vec{Y}_{1}^{\mathsf{pilot}},\ldots,\vec{Y}_{L}^{\mathsf{pilot}})$. This corresponds to augmenting the multi-cell model by allowing multiple access points to cooperate by sharing both CSI and data bearing signals, as in the centralized cell-free model defined in \cite{demir2021}. Similar to the previous section, for the full CSI sharing case we may also allow $\vec{p}$ to be random and obtained as a function of $(\vec{Y}_{1}^{\mathsf{pilot}},\ldots,\vec{Y}_{L}^{\mathsf{pilot}})$. Under these assumption, the solution to Problem~\eqref{eq:MSE_prob} for $k\in \set{K}$ is given by the centralized MMSE beamformers \eqref{eq:CMMSE} in Proposition~\ref{prop:local}, where the  channel estimates and error covariances are given by
\begin{equation*}
\hat{\vec{H}}^{(k)} = \vec{C}_k\hat{\vec{H}}, \quad 
\vec{\Psi}^{(k)} = \vec{C}_k\vec{\Psi}\vec{C}_k,
\end{equation*}
\begin{equation*}
\hat{\vec{H}} \eqdef \begin{bmatrix}
\hat{\vec{H}}_1 \\ \vdots \\ \hat{\vec{H}}_L    
\end{bmatrix}, \quad 
\vec{\Psi} \eqdef \mathrm{diag}(\vec{\Psi}_1,\ldots,\vec{\Psi}_L),
\end{equation*}
where we use the same expressions for $\hat{\vec{H}}_l$ and  $\vec{\Psi}_l$ as in Section~\ref{ssec:mMIMO} and for the no CSI sharing case.
\begin{remark}\label{rem:cfMIMO_sharing}
If $\vec{p}$ is deterministic, then the solution under full CSI sharing can be computed by only sharing the received pilots (or the channel estimates) indexed by $l\in \set{L}_k$, as in the  centralized cell-free model in \cite{demir2021}. Furthermore, for random $\vec{p}$, the solution can be computed by further sharing $\vec{p}$ instead of the full tuple $(\vec{Y}_{1}^{\mathsf{pilot}},\ldots,\vec{Y}_{L}^{\mathsf{pilot}})$.
\end{remark}

\subsection{Long-term max-min fair resource allocation}\label{ssec:maxmin}
To illustrate the connection between the above MMSE-based solutions and the UatF/hardening bound in \eqref{eq:uatf}, we now consider examples of max-min fair resource allocation problems, which can be interpreted as joint beamforming, channel estimation, and power control problems. We first consider the uplink resource allocation problem
\begin{equation}\label{prob:maxmin_uatf}
\begin{aligned}
\underset{\substack{(\forall k\in \set{K})~\vec{v}_k\in \set{V}_k\\ \vec{p}\in \stdset{R}_{+}^K}}{\text{maximize}} \quad & \min_{k\in \set{K}} \; \mathsf{SINR}_k^{\mathsf{UatF}}(\vec{v}_k,\vec{p}) \\
\text{subject to }\; \quad & \|\vec{p}\|_{\infty} \leq P,
\end{aligned}
\end{equation}
where the information constraints $\set{V}_k$ are given by any of the three practical cases in Section~\ref{ssec:mMIMO} and Section~\ref{ssec:cfMIMO}, and $P\in \stdset{R}_{++}$ is a per-user uplink power constraint. This type of formulation has been termed a \textit{two-timescale} formulation in \cite{miretti2023fixed,miretti2024spawc}, since the optimization involves both  deterministic variables (the powers) that depend on large-scale fading only, and random variables (the beamformers) that adapt to small-scale fading. A solution to \eqref{prob:maxmin_uatf} can be obtained using the following normalized fixed-point algorithm from \cite{miretti2022joint,miretti2023fixed}:
\begin{algorithmic}[1]
    \Require $\vec{p} \in \stdset{R}_{++}^K$
    \Repeat
    \For{$k\in\set{K}$}
    \State $\vec{v}_k \gets  \arg\min_{\vec{v}_k\in\set{V}_k}\mathsf{MSE}_k(\vec{v}_k,\vec{p})$ 
    \EndFor
    \State $\vec{t} \gets \begin{bmatrix}
        \frac{p_1}{\mathsf{SINR}_1^{\mathsf{UatF}}(\vec{v}_1,\vec{p})} & \ldots & \frac{p_K}{\mathsf{SINR}_K^{\mathsf{UatF}}(\vec{v}_K,\vec{p})}
    \end{bmatrix}$
    \State $\vec{p}\gets \frac{P}{\|\vec{t}\|_{\infty}}\vec{t}$ 
    \Until{no significant progress is observed.}
\end{algorithmic}
The main step of the above algorithm is the MSE minimization step, which replaces a SINR maximization step following the MMSE-SINR relation in Proposition~\ref{prop:MMSE-SINR}, and which can be performed as discussed in the previous sections depending on the choice of the information constraint $\set{V}_k$. The SINRs in step~5 can be evaluated using the aforementioned relation, or the equivalent expression $(\forall k \in \set{K})$ $\mathsf{SINR}_k^{\mathsf{UatF}}(\vec{v}_k,\vec{p}) =$
\begin{equation*}
     \dfrac{p_k|\E[\hat{\vec{h}}_{k}^\herm\vec{v}_k]|^2 }{p_k\V(\hat{\vec{h}}_{k}^\herm\vec{v}_k)+\underset{j\in \set{K}\backslash \{k\}}{\sum} p_j\E[|\hat{\vec{h}}_{j}^\herm\vec{v}_k|^2]+\E[\vec{v}_k^\herm(\vec{\Psi} +\vec{I})\vec{v}_k]},
\end{equation*}
where $(\forall j \in \set{K})$ $\hat{\vec{h}}_j^\herm \eqdef [\hat{\vec{h}}_{1,j}^\herm,\ldots,\hat{\vec{h}}_{L,j}^\herm]$, which can be derived by manipulating the expectations in \eqref{eq:uatf} using the law of total expectation given $(\vec{Y}_{1}^{\mathsf{pilot}},\ldots,\vec{Y}_{L}^{\mathsf{pilot}})$, and the fact that $\vec{v}_k$ is a function of $(\vec{Y}_{1}^{\mathsf{pilot}},\ldots,\vec{Y}_{L}^{\mathsf{pilot}})$. 
Note that a very similar algorithm is given in \cite{miretti2024spawc} for the case of weighted sum-rate maximization, although with local optimality guarantees only. 

We then consider the similar downlink problem
\begin{equation}\label{prob:maxmin_hard}
\begin{aligned}
\underset{\substack{(\forall k\in \set{K})~\vec{v}_k\in \set{V}_k}}{\text{maximize}} \quad & \min_{k\in \set{K}} \; \mathsf{SINR}_k^{\mathsf{hard}}(\vec{v}_1,\ldots,\vec{v}_K) \\
\text{subject to }\; \quad & \sum_{k=1}^K\E[\|\vec{v}_k\|^2] \leq P,
\end{aligned}
\end{equation}
where $P\in \stdset{R}_{++}$ is a network-wide sum power constraint. Recalling the uplink-downlink duality principle in Proposition~\ref{prop:duality_hard},  Problem~\eqref{prob:maxmin_hard} can be transformed into a completely equivalent dual uplink problem of the form in \eqref{prob:maxmin_uatf}, with the $l_\infty$ norm replaced by the $l_1$ norm in the power constraint. The solution to this dual uplink problem can be then obtained using the same algorithm as above, by replacing the $l_{\infty}$ norm with the $l_1$ norm in step 6. Note that the same argument can be applied to obtain (locally optimal) solutions to downlink weighted sum-rate maximization problems under a sum power constraint using a variation of the algorithm in \cite{miretti2024spawc}.

A more realistic yet involved downlink version of \eqref{prob:maxmin_hard} with per-transmitter power constraints can also be addressed using known algorithms and the results in this work. In particular, we can apply a conventional bisection loop to solve max-min problems via a sequence of feasibility problems of the type in \eqref{eq:hard_pertx}. Recalling Proposition~\ref{prop:hard_pertx_aug} and the associated discussion, the corresponding augmented Lagrangian formulation~\eqref{eq:hard_pertx_aug} can be then transformed into a dual uplink feasibility problem, and solved using the fixed-point algorithm \cite{miretti2024duality,miretti2023fixed}
\begin{algorithmic}[1]
    \Require $\vec{p} \in \stdset{R}_{++}^K$, $\vec{\lambda}\in \stdset{R}_{+}^L$
    \Repeat
    \For{$k\in\set{K}$}
    \State $\vec{v}_k \gets  \arg\min_{\vec{v}_k\in\set{V}_k}\widetilde{\mathsf{MSE}}_k(\vec{v}_k,\vec{p},\vec{
\lambda})$ 
    \EndFor
    \State $\vec{p} \gets \begin{bmatrix}
        \frac{p_1}{\widetilde{\mathsf{SINR}}_1^{\mathsf{UatF}}(\vec{v}_1,\vec{p},\vec{\lambda})} & \ldots & \frac{p_K}{\widetilde{\mathsf{SINR}}_K^{\mathsf{UatF}}(\vec{v}_K,\vec{p},\vec{\lambda})}
    \end{bmatrix}$ 
    \Until{no significant progress is observed.}
\end{algorithmic}
The Lagrangian multipliers $\vec{\lambda}$ can be tuned using an outer projected gradient loop as in \cite[Proposition~5]{miretti2024duality}. We refer to \cite{miretti2024duality} for details on the overall algorithm for solving \eqref{eq:hard_pertx}. The relevant novelty in the context of joint channel estimation and beamforming lies in the MSE minimization step 3, which can be performed using a minor variation of the schemes in Section~\ref{ssec:mMIMO} and Section~\ref{ssec:cfMIMO} obtained by redefining $\vec{V}_l\eqdef\Big(\hat{\vec{H}}_l\vec{P}\hat{\vec{H}}_l^{\herm }+ \vec{\Psi}_l+\lambda_l\vec{I}_N  \Big)^{-1}\hat{\vec{H}}_l\vec{P}^{\frac{1}{2}}$, i.e., by considering a scaled noise covariance matrix for each access point.

\subsection{Short-term max-min fair resource allocation}\label{ssec:maxmin_coh}
In this section we illustrate the connection between the considered MMSE-based solutions and the coherent decoding lower bound in \eqref{eq:coh}. This also allow us to further elucidate the role of common information for beamforming, and of a random $\vec{p}$.
We consider the stochastic optimization problem
\begin{equation}\label{prob:maxmin_coh}
\begin{aligned}
\underset{\substack{(\forall k\in \set{K})~\vec{v}_k\in \set{V}_k\\ \vec{p}\in \set{P}_{U}^K}}{\text{maximize}} \quad & \min_{k\in \set{K}} \; \mathsf{SINR}_k^{\mathsf{coh,ul}}(\vec{v}_k,\vec{p}|U) \\
\text{subject to }\; \quad & \|\vec{p}\|_{\infty} \leq P,
\end{aligned}
\end{equation}
where $U\eqdef (\vec{Y}_{1}^{\mathsf{pilot}},\ldots,\vec{Y}_{L}^{\mathsf{pilot}})$, $P\in \stdset{R}_{++}$, and where the conditioning of the SINR on $U$ is made explicit. Notice that, in contrast to Problem~\eqref{prob:maxmin_hard}, $\vec{p}$ can be any function of $U$ and hence it can be random. The term \textit{maximize} should be interpreted as computing the essential supremum of the random variable $\min_{k\in \set{K}} \; \mathsf{SINR}_k^{\mathsf{coh,ul}}(\vec{v}_k,\vec{p}|U)$ while satisfying the power constraint almost surely. In practice, this means solving an instance of an instantaneous max-min SINR problem for each realization of $U=(\vec{Y}_{1}^{\mathsf{pilot}},\ldots,\vec{Y}_{L}^{\mathsf{pilot}})$. Taking aside the joint channel estimation aspect, this is essentially the approach followed by all \textit{short-term} methods in the classical MU-MIMO literature that solve an instantaneous joint beamforming and power control problem for each channel realization \cite{schubert2004solution}. For the case of full CSI sharing (in both multi-cell massive MIMO and cell-free massive MIMO models), Problem~\eqref{prob:maxmin_coh} can be solved using the following instantaneous version of the fixed-point algorithm for solving Problem~\eqref{prob:maxmin_uatf}, to be run disjointly for each realization $u\in \set{U}$ of $U$:
\begin{algorithmic}[1]
    \Require $u\in \set{U}$, $\vec{p} \in \stdset{R}_{++}^K$
    \Repeat
    \For{$k\in\set{K}$}
    \State $\vec{v}_k \gets  \arg\min_{\vec{v}_k\in\set{V}_k}\mathsf{MSE}_k(\vec{v}_k,\vec{p}|U=u)$ 
    \EndFor
    \State $\vec{t} \gets \begin{bmatrix}
        \frac{p_1}{\mathsf{SINR}_1^{\mathsf{coh,ul}}(\vec{v}_1,\vec{p}|U=u)} & \ldots & \frac{p_K}{\mathsf{SINR}_K^{\mathsf{coh,ul}}(\vec{v}_K,\vec{p}|U=u)}
    \end{bmatrix}$
    \State $\vec{p}\gets \frac{P}{\|\vec{t}\|_{\infty}}\vec{t}$ 
    \Until{no significant progress is observed.}
\end{algorithmic}
Specifically, by letting $(\vec{v}_1^{(u)},\ldots,\vec{v}_K^{(u)},\vec{p}^{(u)})$ be the limit of the sequence produced by the above algorithm for a given realization $u\in \set{U}$, then $(\vec{v}_1^{(U)},\ldots,\vec{v}_K^{(U)},\vec{p}^{(U)})$ solves Problem~\eqref{prob:maxmin_coh}. Note that, by construction, the above algorithm may produce a different $\vec{v}_k^{(u)}$ for each realization $u$ of $U$, which means that all entries of $\vec{v}_k^{(U)}$ may be a function of $U$. This is perfectly in line with the full CSI sharing assumption, since $\vec{v}_k^{(U)}\in \set{V}_k$, i.e.,  the information constraints are satisfied. In contrast, we notice that the above algorithm cannot be used to solve Problem~\eqref{prob:maxmin_coh} under no CSI sharing: in this case, while $\vec{v}_k^{(u)}\in \set{V}_k$, we have that $\vec{v}_k^{(U)}\notin \set{V}_k$ in general, i.e.,  $\vec{v}_k^{(U)}$ may violate the information constraints. A similar discussion applies to $\vec{p}^{(U)}$, which is a function of $U$ and hence potentially random. 

Furthermore, we remark that the full CSI sharing assumption not only ensures that the output of the above algorithm is feasible, but also that conditional SINR maximization in $\vec{v}_k$ can be converted to conditional and unconditional MSE minimization in $\vec{v}_k$, see Proposition~\ref{prop:MMSE-SINR-coh} and Proposition~\ref{prop:ess}. In particular, step 3 of the above algorithm can be performed by applying the same expressions in Section~\ref{ssec:mMIMO} and Section~\ref{ssec:cfMIMO} for the case of full CSI sharing and random $\vec{p}$. In addition, we notice that, under the full CSI sharing assumption, the SINR can be rewritten in the more familiar form $(\forall k \in \set{K})$
\begin{equation*}
\mathsf{SINR}_k^{\mathsf{coh,ul}}(\vec{v}_k,\vec{p}|U) = \dfrac{p_k|\hat{\vec{h}}_{k}^\herm\vec{v}_k|^2 }{\underset{j\in \set{K}\backslash \{k\}}{\sum}p_j|\hat{\vec{h}}_{j}^\herm\vec{v}_k|^2+\vec{v}_k^\herm(\vec{\Psi} +\vec{I})\vec{v}_k}.
\end{equation*}
As a side remark, we point out that equivalent optimal solutions can be obtained by replacing step 3 in the above algorithm with the direct maximization of the above SINR expression using the generalized Rayleigh quotient technique.  

We conclude this section by remarking that downlink versions of Problem~\eqref{prob:maxmin_coh} can be solved using the uplink-downlink duality principle in Section~\ref{sec:duality_coh} and similar arguments as in Section~\ref{ssec:maxmin}. Further details are omitted since beyond the scope of this work.

\section{Numerical examples}
\label{sec:simulations}
To further illustrate the practical relevance of our main results, this section presents a numerical evaluation of various sub-6 GHz wideband system deployments, including small-cell and cell-free architectures with different levels of cooperation, as described in Section~\ref{sec:applications}. A key feature of these numerical results is that, in many cases, we can guarantee that no alternative beamforming or channel estimation scheme, including joint schemes, can achieve better performance. We will restate the specific assumptions under which this property holds alongside each example. We emphasize that the focus of this section is primarily methodological, rather than to provide an exhaustive performance comparison of practical systems.

\subsection{Simulation setup}
We simulate an instance of the canonical network model described in Section~\ref{ssec:channel_pilot_model}, parametrized as follows. We assume $K=32$ users uniformly distributed within a squared service area of size $500\times 500~\text{m}^2$, and $L=16$ regularly spaced access points with $N=8$ antennas each. By neglecting for simplicity spatial correlation within the access points, we assume diagonal spatial covariances $\vec{R}_{l,k}= \beta_{l,k}\vec{I}_N$, where $\beta_{l,k}>0$ denotes the channel gain between access point $l$ and user $k$. We follow the same 3GPP-like path-loss model adopted in \cite{demir2021} for a $2$ GHz carrier frequency:
	\begin{equation*}
		\beta_{l,k} = -36.7 \log_{10}\left(D_{l,k}/1 \; \mathrm{m}\right) -30.5 + \Xi_{l,k} -\sigma^2 \quad \text{[dB]},
	\end{equation*}
where $D_{l,k}$ is the distance between access point $l$ and user $k$ including a difference in height of $10$ m, and $\Xi_{l,k}\sim \mathcal{N}(0,\rho^2)$ [dB] are shadow fading terms with deviation $\rho = 4$. The shadow fading is correlated as $\E[\Xi_{l,k}\Xi_{j,i}]=\rho^22^{-\frac{\delta_{k,i}}{9 \text{ [m]}}}$ for all $l=j$ and zero otherwise, where $\delta_{k,i}$ is the distance between user $k$ and user $i$. The noise power is $\sigma^2 = -174 + 10 \log_{10}(B) + F$ [dBm], where $B = 20$ MHz is the bandwidth, and $F = 7$ dB is the noise figure. 

Regarding uplink pilot signaling, we parametrize the model in \eqref{eq:ul_pilot} by assuming pilot sequences of length $\tau_p = 10$  and power $p^{\mathsf{pilot}}_k = 20$ dBm for all users $k\in \set{K}$. The length $\tau_p$ is chosen, in particular, to guarantee sufficient uplink channel estimation SNR for all users in the considered scenario. Pilots are assigned to users on a cell basis, following a simple round robin scheme in each cell. More specifically, we first cluster the users into $L$ cells, where each user $k$ is assigned to a cell $l$ corresponding to an access point with maximum gain $\beta_{l,k}$. Then, for each cell, the pilots are assigned sequentially to each user in the cell by cycling over the pilot indexes $\{1,\ldots,\tau_p\}$, starting from an arbitrary index. Note that, in the considered scenario, each cell is formed by $2 \ll \tau_p$ users on average, hence pilot contamination is not a major issue.

\begin{figure}[htp]
    \centering
    \begin{subfigure}{0.9\columnwidth}
        \centering        \includegraphics[width=\textwidth]{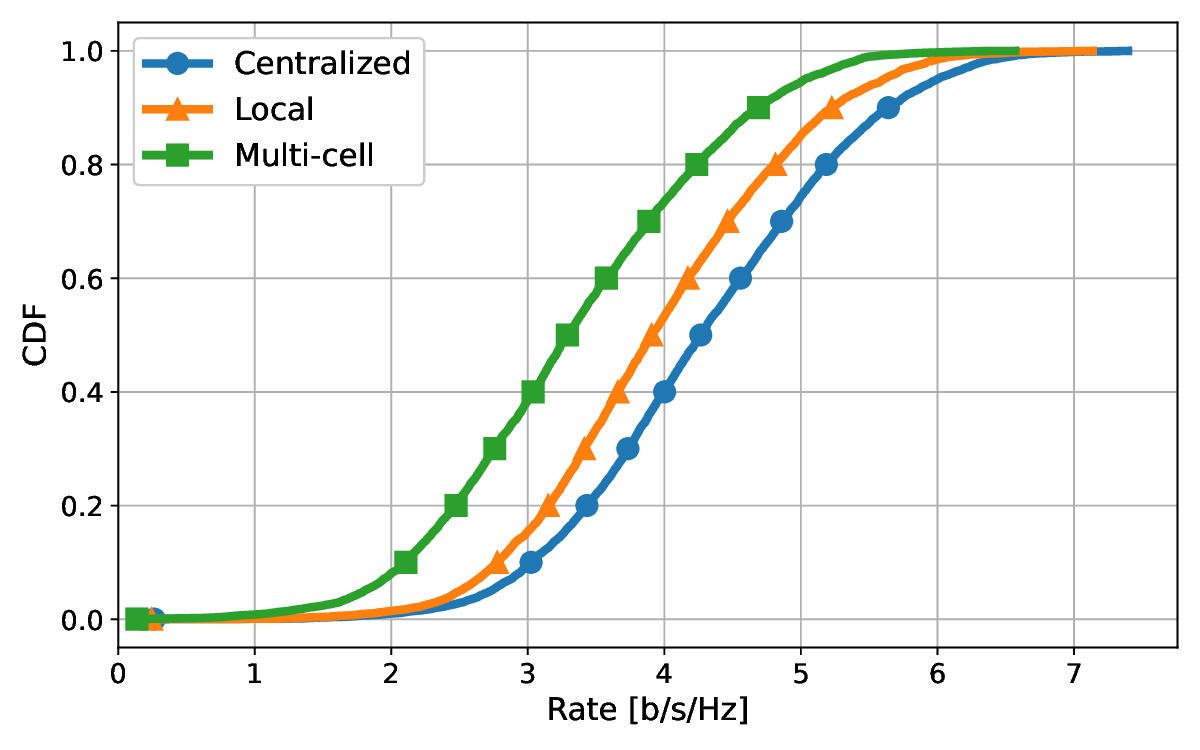}
        \caption{}
        \label{fig:fixed_uatf}
    \end{subfigure}
    \begin{subfigure}{0.9\columnwidth}
        \centering
    \includegraphics[width=\textwidth]{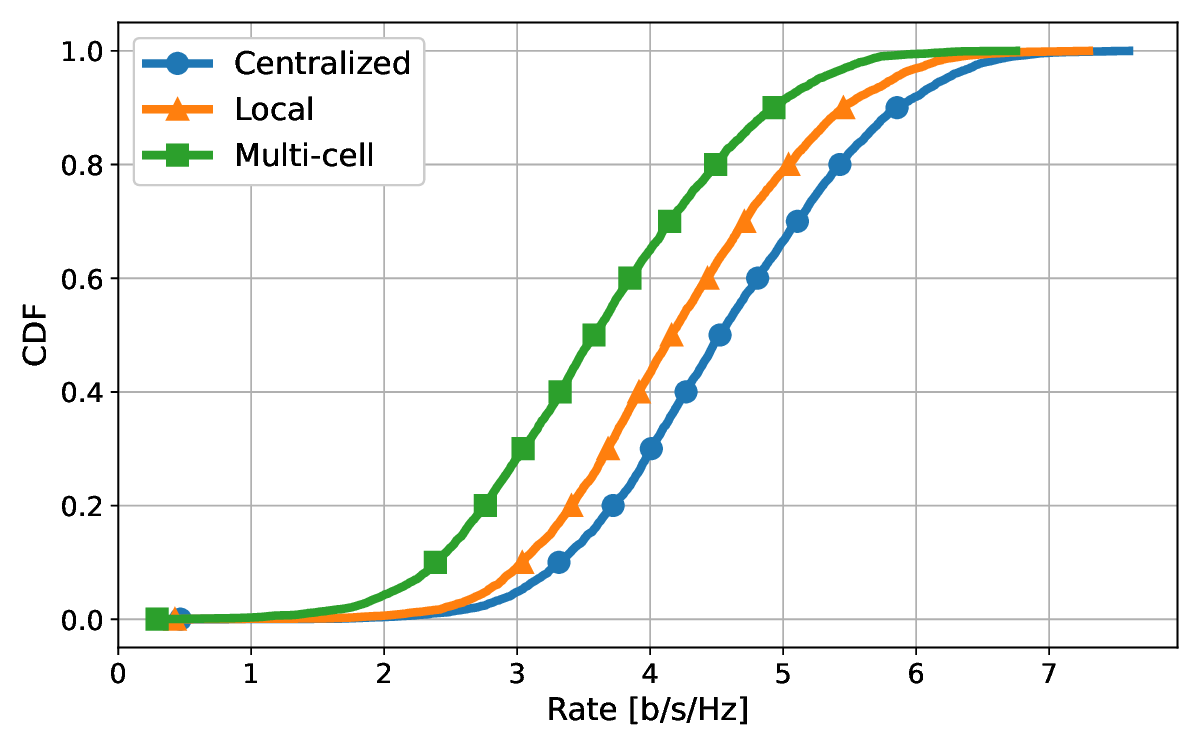}
        \caption{}
        \label{fig:fixed_coh}
    \end{subfigure}
    \begin{subfigure}{0.9\columnwidth}
        \centering
      \includegraphics[width=\textwidth]{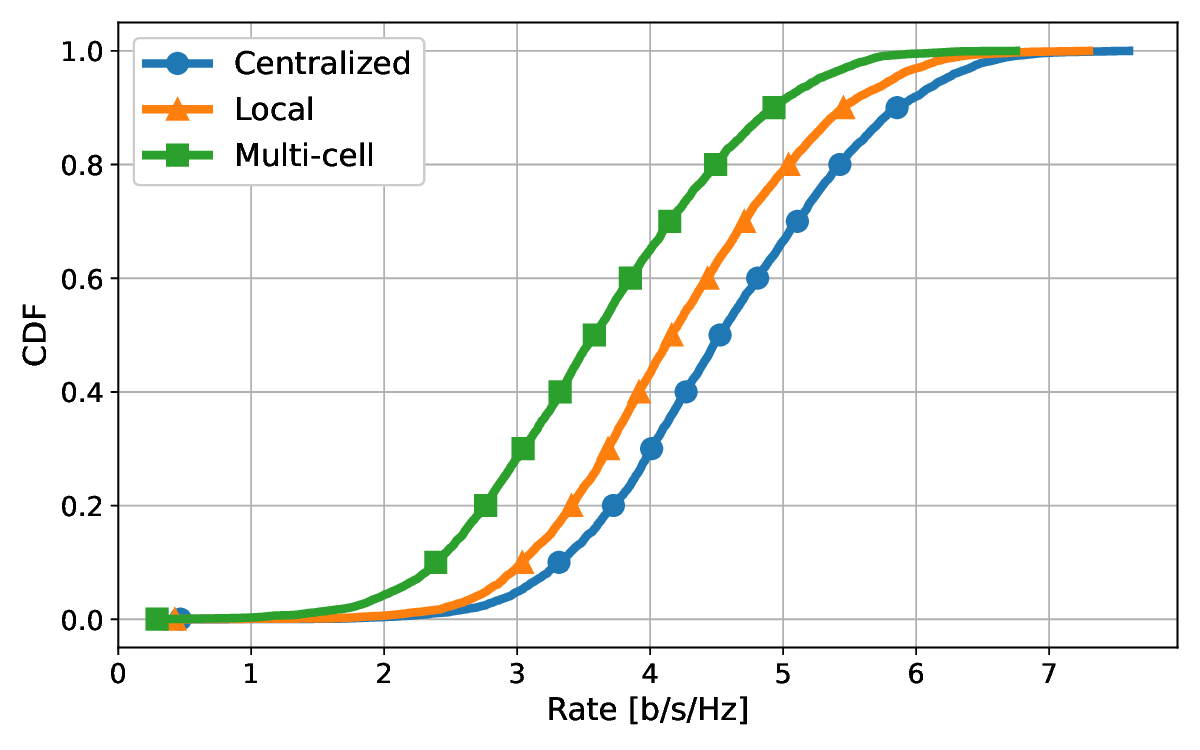}
        \caption{}
        \label{fig:fixed_oer}
    \end{subfigure}
    \caption{Empirical CDF of uplink ergodic rates in \eqref{eq:mutual_info} for different MSE-optimal joint beamforming and channel estimation schemes and a fractional power control policy. The uplink ergodic rates are approximated by using (a) the UatF lower bound in \eqref{eq:uatf}, (b) the coherent decoding lower bound \eqref{eq:coh}, and (c) the optimistic upper bound in \eqref{eq:oer}.} 
    \label{fig:fixed}
\end{figure}

\subsection{Performance of MSE-based uplink processing}
Figure~\ref{fig:fixed} reports the performance of the optimal solution to the MSE-based uplink joint channel estimation and beamforming problem in \eqref{eq:MSE_prob}, considering three different information constraints labeled as (i) \textit{multi-cell}, (ii) \textit{local}, and (iii) \textit{centralized}. For illustration purposes, we adopt a simple statistical fractional power control policy \cite[Eq.~(7.34)]{demir2021} $(\forall k \in \set{K})$ $p_k = \frac{(\sum_{l\in \set{L}_k}\beta_{l,k})^{-1}}{\max_{i\in \set{K}}(\sum_{l\in \set{L}_i}\beta_{l,i})^{-1}}P$ with  maximum per-user power $P = 20$ dBm, which targets approximate max-min fairness. 

Case (i) corresponds to the multi-cell massive MIMO model in Section~\ref{ssec:mMIMO}, where cells are formed based on the strongest channel gain, similarly to the pilot allocation scheme. In this case, since $\vec{p}$ is deterministic, the solutions under no CSI sharing and full CSI sharing coincide (see Remark~\ref{rem:mMIMO_sharing}). Cases (ii) and (iii) correspond to the two cell-free massive MIMO models described in Section~\ref{ssec:cfMIMO}, with no CSI sharing and full CSI sharing, respectively. In both cases, each user is served by its four strongest access points in terms of channel gain. Similarly to case (i), we note that in case (iii), restricting CSI sharing within the cluster of serving access points yields the same solution (see Remark~\ref{rem:cfMIMO_sharing}).

The performance of the above schemes is evaluated using the UatF lower bound \eqref{eq:uatf}, the coherent decoding lower bound \eqref{eq:coh} with decoder CSI $(\forall k \in \set{K})$ $U_k=(\vec{Y}_{1}^{\mathsf{pilot}},\ldots,\vec{Y}_{L}^{\mathsf{pilot}})$, and the upper bound \eqref{eq:oer} on the achievable ergodic rates \eqref{eq:mutual_info}. The figures show the empirical cumulative distribution function (CDF) of each performance metric over 200 i.i.d. user drops.

The numerical results can be interpreted as follows. First, under the considered power control policy, it follows from Section~\ref{sec:uplink_UatF} that Figure~\ref{fig:fixed_uatf} shows optimal rate profiles in terms of the UatF lower bound for all information constraints. Specifically, the UatF rates for each user cannot be further improved by using alternative channel estimation and beamforming schemes, including joint schemes. More precisely, the reported rate profiles are weakly Pareto optimal, i.e., they lie on the boundary of the achievable UatF rate region under the given information constraints and per-user power constraint $P$. Higher rates for some users can only be achieved by changing the power control policy, which typically results in rate degradation for others.

Furthermore, based on the results in Section~\ref{sec:duality_coh}, the same conclusion applies to the rate profiles in Figure~\ref{fig:fixed_coh} for the multi-cell and centralized cell-free massive MIMO models. This follows from the key assumption that, in these two models, the decoder CSI $U_k$ also serves as the common beamforming CSI, as per Definition~\ref{def:common_CSI}. (However, we recall that the same rates can be achieved with reduced CSI sharing, as previously discussed.) In contrast, this assumption does not hold in the case of the cell-free massive MIMO model with no CSI sharing. Therefore, the corresponding rate profile in Figure~\ref{fig:fixed_coh} may be suboptimal.

Lastly, we note that none of the considered MSE-based schemes are provably optimal in terms of optimistic ergodic rates (Figure~\ref{fig:fixed_oer}), as this bound assumes perfect decoder CSI, while the beamformers are constrained to be functions of noisy channel estimates. Nevertheless, we observe that Figure~\ref{fig:fixed_coh} and Figure~\ref{fig:fixed_oer} are essentially shifted versions of Figure~\ref{fig:fixed_uatf}, which suggests that the MSE criterion may serve as an excellent proxy for optimizing less tractable rate bounds. We report that this seems to be the case when the available CSI is of sufficiently high quality to effectively serve $K$ spatially multiplexed users, which is indeed the desired operating regime for well-designed massive MIMO systems. Our interpretation is that high interference cancellation capabilities can also be used to mitigate the self-interference artifacts of the UatF bound.

\begin{figure}[htp]
    \centering
    \begin{subfigure}{0.9\columnwidth}
        \centering
    \includegraphics[width=\textwidth]{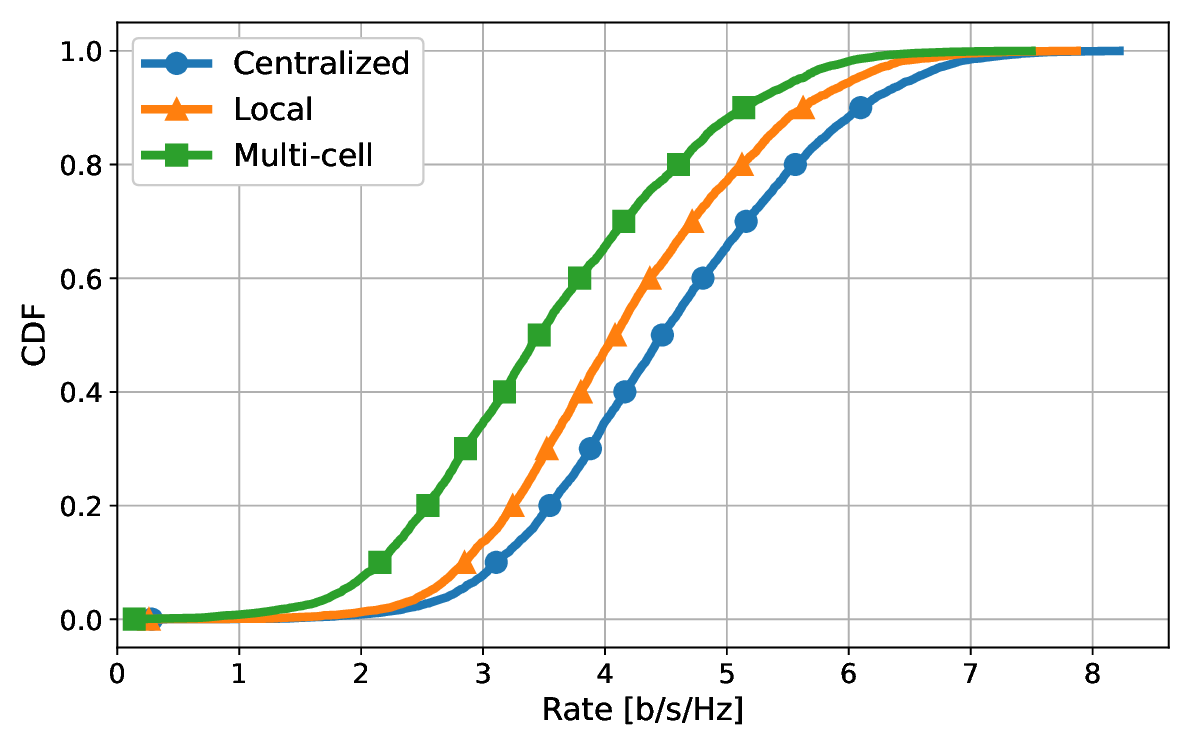}
        \caption{}
        \label{fig:fixed_sum_hard}
    \end{subfigure}
    \begin{subfigure}{0.9\columnwidth}
        \centering
    \includegraphics[width=\textwidth]{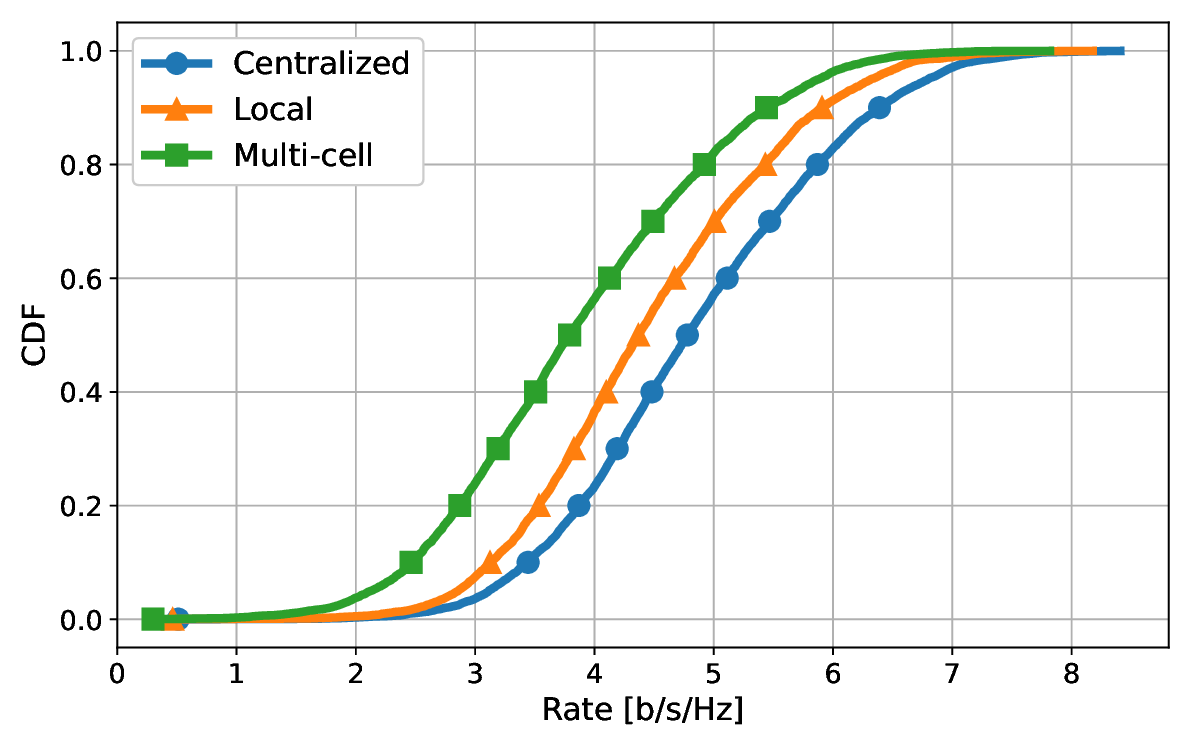}
        \caption{}
        \label{fig:fixed_sum_coh}
    \end{subfigure}
    \begin{subfigure}{0.9\columnwidth}
        \centering
    \includegraphics[width=\textwidth]{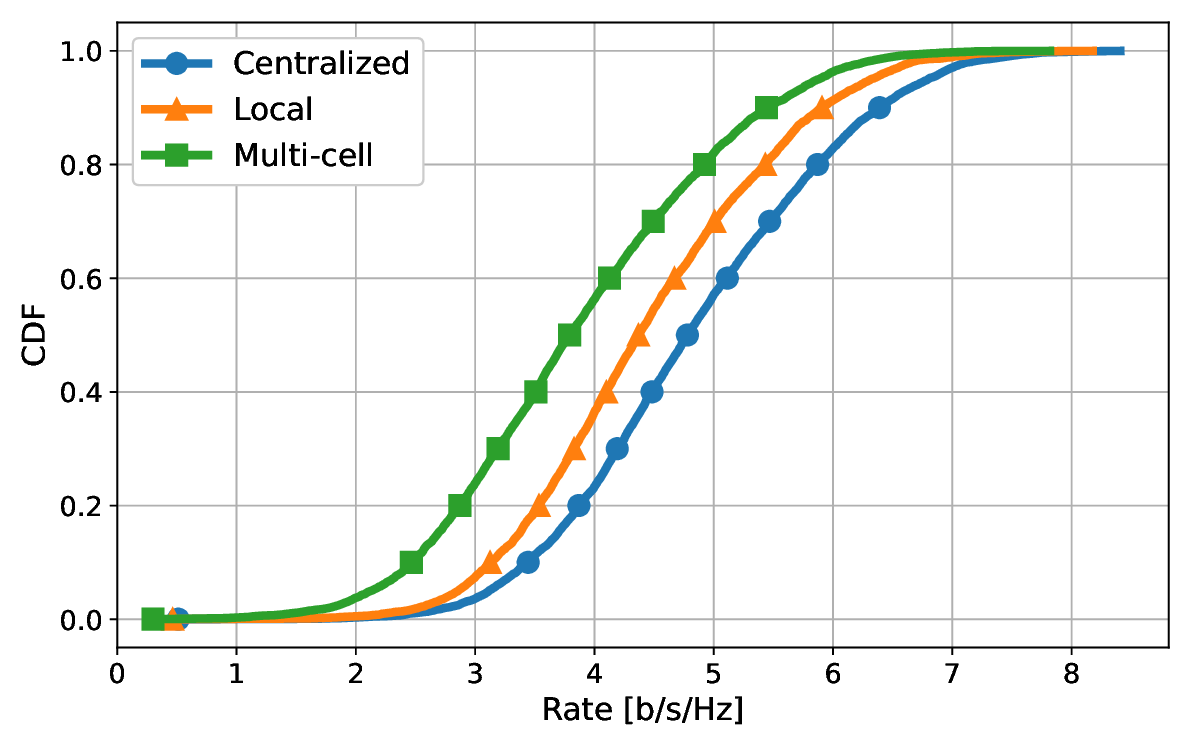}
        \caption{}
        \label{fig:fixed_sum_oer}
    \end{subfigure}
    \caption{Empirical CDF of downlink ergodic rates in \eqref{eq:mutual_info} for different MSE-optimal joint beamforming and channel estimation schemes and a fractional power control policy. The downlink ergodic rates are approximated by using the  dual of (a) the hardening lower bound in \eqref{eq:uatf}, (b) the coherent decoding lower bound \eqref{eq:coh}, and (c) the optimistic upper bound in \eqref{eq:oer}.}
    \label{fig:fixed_sum}
    \vspace{-0.5cm}
\end{figure}

\subsection{Performance of MSE-based downlink processing}
In this section, we focus on some applications of the uplink-duality principles presented in Section~\ref{sec:duality}, and illustrate how similar numerical experiments to those in the previous section can also be conducted to study the downlink case. In particular, Figure~\ref{fig:fixed_sum} reports the performance of the same MSE-based joint channel estimation and beamforming problem \eqref{eq:MSE_prob}, under the same information constraints, and using a slightly modified power control policy $(\forall k \in \set{K})$ $p_k = \frac{(\sum_{l\in \set{L}k}\beta{l,k})^{-1}}{\sum_{i\in \set{K}}(\sum_{l\in \set{L}i}\beta{l,i})^{-1}}KP$. The performance is measured using the same uplink ergodic rate bounds as in the previous section, which are now interpreted as virtual uplink rates over a dual uplink channel. The power control policy can be viewed as an approximate max-min fair policy for this dual uplink channel, subject to a sum power constraint of $KP$ across all users. We now discuss whether, and under what conditions, these dual uplink rates (which are purely mathematical constructs and may be entirely disconnected from the actual uplink rates) can also be interpreted as downlink rates.

As discussed in the previous section, the dual uplink rates in Figure~\ref{fig:fixed_sum_hard} are (Pareto) optimal. Furthermore, by Proposition~\ref{prop:duality_hard}, for every virtual uplink rate tuple $(R_1^{\mathsf{UatF}},\ldots,R_K^{\mathsf{UatF}})$, there exists a downlink rate tuple $(R_1^{\mathsf{hard}},\ldots,R_K^{\mathsf{hard}}) = (R_1^{\mathsf{UatF}},\ldots,R_K^{\mathsf{UatF}})$ that is achievable using the same beamformers (up to deterministic power scaling factors $\vec{p}^{\mathsf{dl}} \in \stdset{R}_+^K$) and the same total power. Therefore, the rate profiles in Figure~\ref{fig:fixed_sum_hard} are also (Pareto) optimal downlink rate profiles in terms of the hardening lower bound, under a sum power constraint $KP$, for all information constraints. Moreover, these rate profiles can serve as upper bounds for scenarios with more realistic per-access point power constraints. Lower bounds on the actual performance under such per-access point constraints can be obtained by applying heuristic power scaling factors to the sum-power optimal solutions and/or by leveraging the uplink-downlink duality principle under per-access point power constraints, as discussed in Section~\ref{sec:duality_perTX}. Additional details are omitted, as they are beyond the scope of this study.

The case of the coherent decoding lower bound is more complex, but still insightful. By Proposition~\ref{prop:duality_coh}, we know that the rate profiles in Figure~\ref{fig:fixed_sum_coh} can also be interpreted as downlink achievable rates in terms of the coherent decoding lower bound, under certain conditions discussed below. Specifically, for every virtual uplink rate tuple $(R_1^{\mathsf{coh,ul}},\ldots,R_K^{\mathsf{coh,ul}})$, there exists a downlink rate tuple $(R_1^{\mathsf{coh,dl}},\ldots,R_K^{\mathsf{coh,dl}}) = (R_1^{\mathsf{coh,ul}},\ldots,R_K^{\mathsf{coh,ul}})$ that is achievable assuming the same decoder CSI $(\forall k \in \set{K})$ $U_k = (\vec{Y}_{1}^{\mathsf{pilot}},\ldots,\vec{Y}_{L}^{\mathsf{pilot}}) \reqdef U$, and using the same beamformers up to a possibly random scaling factor $\vec{p}^{\mathsf{dl}} \in \set{P}_U^K$, and the same instantaneous sum power $KP$.

We remark that assuming availability of the received uplink pilots at the decoders is operationally meaningless, in the downlink. However, this assumption still provides an effective approximation of the optimistic ergodic rates in \eqref{eq:oer}, which themselves serve as accurate approximations of performance under practical downlink pilot signaling schemes. Another important aspect is that the condition $\vec{p}^{\mathsf{dl}} \in \set{P}_U^K$ implies that the downlink precoders achieving the desired rate tuple may also be functions of $U$. This further implies that the rates shown in Figure~\ref{fig:fixed_sum_coh} may not be achievable under the original information constraint (e.g., in the cell-free model with no CSI sharing). However, they are rigorously achievable if the access points are allowed to coordinate by sharing $K$ instantaneous power scaling coefficients, in addition to the originally assumed CSI sharing pattern for beamforming. Given these conditions, the rate profiles in Figure~\ref{fig:fixed_sum_coh} correspond to the ergodic rates induced by (almost surely) optimal points on the (Pareto) boundary of the instantaneous rate region, under an instantaneous sum power constraint of $KP$. Finally, we note that similar reasoning adapted to the case of ideal decoder CSI can be used to interpret the rates in Figure~\ref{fig:fixed_sum_oer} as downlink optimistic ergodic rates.

\begin{figure}[htp]
    \centering
    \begin{subfigure}{0.9\columnwidth}
        \centering
    \includegraphics[width=\textwidth]{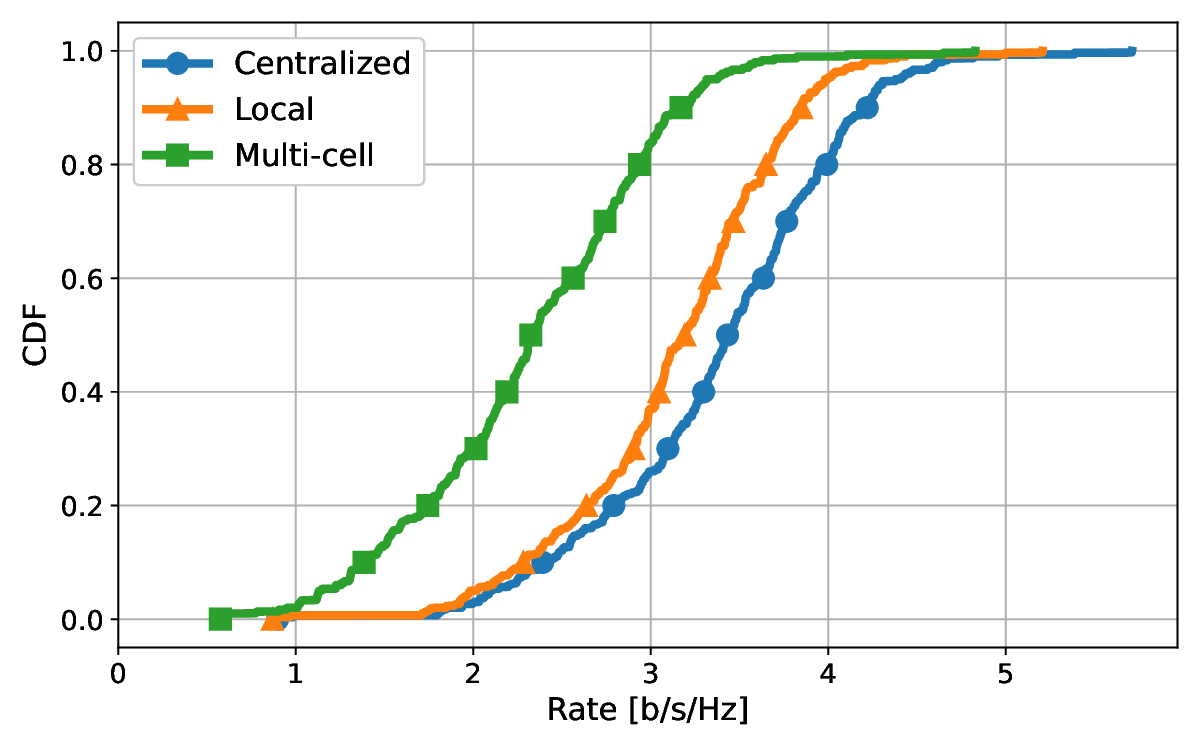}
        \caption{}
        \label{fig:uatf_uatf}
    \end{subfigure}
    \begin{subfigure}{0.9\columnwidth}
        \centering
    \includegraphics[width=\textwidth]{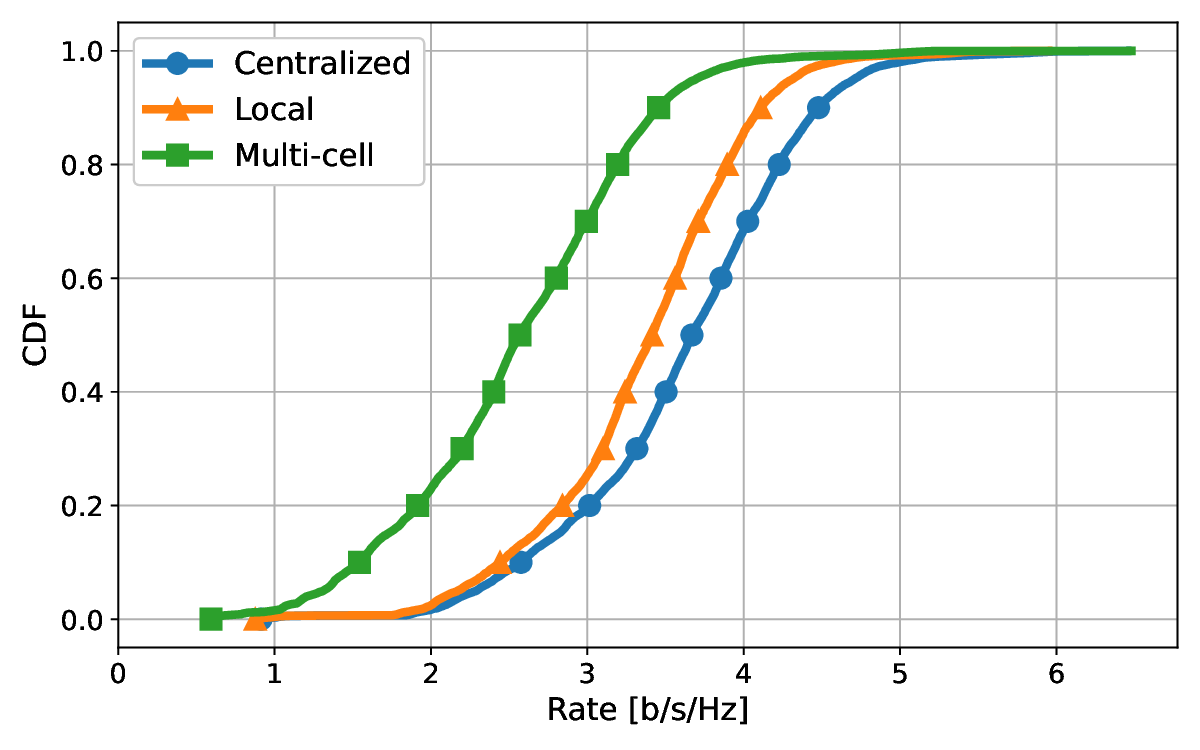}
        \caption{}
        \label{fig:uatf_oer}
    \end{subfigure}
    \caption{Empirical CDF of uplink ergodic rates achieved by solutions to \eqref{prob:maxmin_uatf} for different beamforming architectures. The  ergodic rates are approximated by using (a) the UatF lower bound in \eqref{eq:uatf}, and (b) the optimistic upper bound in \eqref{eq:oer}.}
    \label{fig:uatf}
\end{figure}

\subsection{Jointly optimal power control}
Figure~\ref{fig:uatf} reports the performance of an optimal solution to the two-timescale max-min uplink joint beamforming, channel estimation, and power control problem \eqref{prob:maxmin_uatf}, assuming a per-user power constraint of $P = 20$ dBm and considering the same information constraints as in the previous section. We first remark that, by construction, Figure~\ref{fig:uatf_uatf} provides optimal rate profiles in terms of worst-case UatF rates. Indeed, the lower tails show better worst-case performance than the heuristic max-min policy used for Figure~\ref{fig:fixed_uatf}, at the expense of the higher rates. Interestingly, the same trend can also be observed for the upper bound in Figure~\ref{fig:uatf_oer}, which suggests once more that the UatF bound (or equivalently, the MSE criterion) may serve as a good proxy for optimizing less tractable rate bounds.

While the coherent decoding lower bound is less amenable to long-term resource allocation, it may, by contrast, be well-suited for short-term resource allocation, i.e., for optimization problems that target Pareto-optimal points on the instantaneous achievable rate region for each channel realization. For this to hold, one important condition is that the decoder CSI is also available as common beamforming CSI. As an example, Figure~\ref{fig:coh} reports the performance of the short-term uplink max-min joint beamforming, channel estimation, and power control problem \eqref{prob:maxmin_coh}. The local case is not considered, as it does not satisfy the common beamforming CSI condition. A key feature of the solutions to \eqref{prob:maxmin_coh} is that the worst-case instantaneous rate is maximized for every channel realization. However, note that this differs from maximizing the worst-case rate on average, i.e., the worst-case ergodic rate. We can observe that the lower tails of \eqref{prob:maxmin_coh} are similar (in a max-min sense) to the lower tails of Figure~\ref{fig:uatf_oer}, despite the higher complexity associated with short-term optimization. We refer to \cite{miretti2023fixed,miretti2024spawc} for additional discussions on this aspect.

We note that similar considerations apply to downlink versions of Figure~\ref{fig:uatf} and Figure~\ref{fig:coh}, which are therefore omitted.

\begin{figure}[htp]
    \centering    \includegraphics[width=0.9\columnwidth]{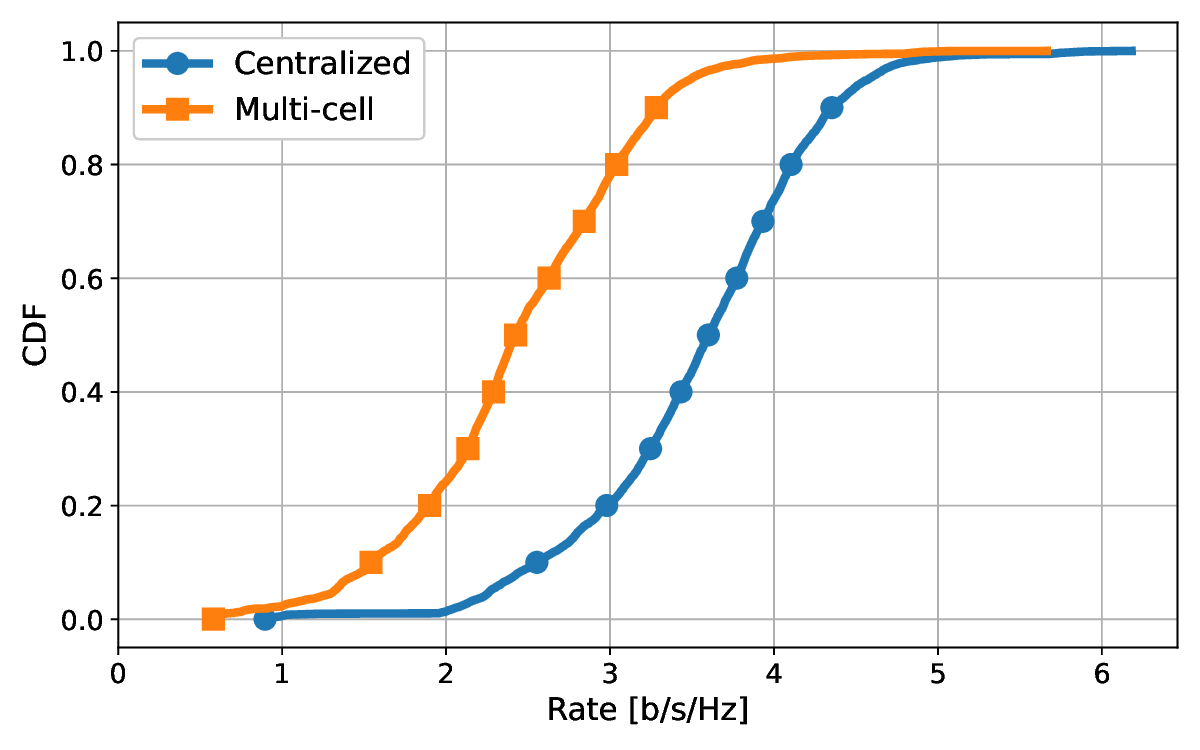}
    \caption{Empirical CDF of uplink optimistic ergodic rates \eqref{eq:oer} achieved by solutions to \eqref{prob:maxmin_coh} for different beamforming architectures.}
    \label{fig:coh}
\end{figure}

\section{Conclusion}
We have established that, under broad and practically relevant conditions, joint channel estimation and beamforming in massive MIMO systems can be optimally separated into MMSE estimation followed by MMSE beamforming. These conditions encompass both centralized and distributed architectures and align with standard system models and ergodic rate bounds in the literature. The results are particularly strong for the UatF/hardening bound and extend partially to the coherent decoding lower bound under additional assumptions regarding the availability of common beamforming CSI. This separation principle enables complete solutions to specific problems (e.g., max-min resource allocation) and reveals strong structural properties for more challenging problems (e.g., sum-rate maximization), which are reduced to the tuning of power control coefficients.

Our findings support the continued use of modular channel estimation and beamforming designs in systems where the considered models accurately reflects reality. Although illustrated under a block-fading measurement model, we remark that the results extend naturally to more explicit measurement models for wideband channel estimation and prediction (e.g., based on 5G sounding reference signals), provided that the measurements remain linear and Gaussian. In all such cases, joint design offers limited additional performance gains, and artificial intelligence techniques may be better directed at reducing the complexity of intractable resource allocation tasks, such as finding sum-rate optimal power control coefficients.

However, in scenarios involving non-linearities, non-Gaussian effects, unknown channel statistics, or alternative performance metrics (e.g., outage capacity or block error rate), the separation principle may no longer apply. In these cases, joint designs, possibly driven by artificial intelligence methods, may still offer significant performance gains.

\appendix
\subsection{Proof of Proposition~\ref{prop:coh}}\label{app:coh}
The following steps follow from standard information inequalities \cite{el2011network}:
\begin{align*}
R_k & = I(x_k,y_k|U_k) \\
	& = h(x_k)-h(x_k|y_k,U_k) \\
	& = \sup_{\alpha \in \set{F}_{U_k}}h(x_k)-h(x_k-\alpha y_k|y_k,U_k) \\
	& \geq \sup_{\alpha \in \set{F}_{U_k}} h(x_k)-h(x_k-\alpha y_k|U_k) \\
	&\geq \sup_{\alpha \in \set{F}_{U_k}} h(x_k)-\E\left[\log\left(\pi e \E\left[|x_k-\alpha y_k|^2|U_k\right]\right)\right] \\
	&= \sup_{\alpha \in \set{F}_{U_k}} \log(\pi e)-\E\left[\log\left(\pi e \E\left[|x_k-\alpha y_k|^2|U_k\right]\right)\right] \\
	&= -\inf_{\alpha \in \set{F}_{U_k}}\E\left[\log\left(\E\left[|x_k-\alpha y_k|^2|U_k\right]\right)\right].
\end{align*} 
The third equality follows from the translation invariance property of the differential entropy $h(x_k|y_k,U_k) = h(x_k-\alpha y_k|y_k,U_k)$, which holds for all functions $\alpha$ of $U_k$, and by taking the supremum over the set of functions $\set{F}_{U_k}$. The two inequalities follow since conditioning reduces entropy, and from the maximum differential entropy lemma, respectively. 

By the monotonicity of the logarithm, and by definition of \textit{essential} infimum (see notation section), we then have
\begin{align}\label{eq:essinf_alpha}
&\inf_{\alpha \in \set{F}_{U_k}}\E\left[\log\left(\E\left[|x_k-\alpha y_k|^2|U_k\right]\right)\right] \geq \E[\log(\varepsilon(U_k))],
\end{align}
where $\varepsilon(U_k) \eqdef \essinf{\alpha \in \set{F}_{U_k}}\E\left[|x_k-\alpha y_k|^2|U_k\right]$.
Since $\alpha$ is a function of $U_k$, the random variable $\varepsilon$ can be characterized by minimizing the quadratic form $\E\left[|x_k-\alpha(u_k) y_k|^2|U_k=u_k\right]= \E[|x_k|^2]
+\E[|y_k|^2|U_k=u_k]|\alpha(u_k)|^2 -2\Re(\E[x_k^* y_k|U_k=u_k]\alpha(u_k)) $ over $\alpha(u_k)\in \stdset{C}$ for each realization $u_k \in \mathcal{U}_k$ of $U_k$. More precisely, we notice that $(\forall u_k \in \set{U}_k)$ 
\begin{equation*}
\alpha^\star(u_k) \eqdef \begin{cases}\frac{\E[x_k y_k^*|U_k=u_k]}{\E[|y_k|^2|U_k=u_k]} & \text{if } \E[|y_k|^2|U_k=u_k]\neq 0,  \\
0 & \text{otherwise}, \end{cases}
\end{equation*}
attains the pointwise infimum
\begin{equation*}
\varepsilon(u_k) = \begin{cases}1-\frac{|\E[x_k^* y_k|U_k=u_k]|^2}{\E[|y_k|^2|U_k=u_k]} & \text{if } \E[|y_k|^2|U_k=u_k]\neq 0,  \\
1 & \text{otherwise.} \end{cases}
\end{equation*}
Since $\alpha^\star \in \set{F}_{U_k}$, it attains the essential infimum, and hence the inequality in \eqref{eq:essinf_alpha} is an equality.\footnote{Using an indicator function $\mathbbm{1}:\set{U}_k \to \{0,1\}$ for the subset $\set{U}_k^0 \eqdef \{u_k\in \set{U}_k~:~\E[|y_k|^2|U_k = u_k]=0\}$, we notice that $(\forall u_k\in \set{U}_k)$ $\alpha(u_k) = \frac{\E[x_k y_k^*|U_k=u_k]}{\E[|y_k|^2|U_k=u_k]+\mathbbm{1}(u_k)}$, since $\E[|y_k|^2|U_k=u_k]=0\implies \E[x_k^\star y_k|U_k=u_k]=0$.  The indicator function avoids divisions by zero, and it is measurable because $\set{U}_k^0$ is a measurable set. Then, $\alpha$ is measurable since it involves sum, product, and (well-defined) quotient of measurable functions.} 
The proof is completed by noting that $\E[x_k^\star y_k|U_k]=\E[g_{k,k}|U_k]$, $\E[|y_k|^2|U_k]=\sum_{j\in \set{K}}\E[|g_{j,k}|^2|U_k]+\E[\sigma_k^2|U_k]$, and by rearranging the terms in $R_k \geq -\E[\log(\varepsilon)]$.

\subsection{Proof of Proposition~\ref{prop:uatf}}\label{app:uatf}
By proof of Proposition~\ref{prop:coh} (Appendix~\ref{app:coh}), we recall that $R_k^{\mathsf{coh}}$ corresponds to the left hand size of $\eqref{eq:essinf_alpha}$. We then obtain
\begin{align*}
R_k^{\mathsf{coh}} &= -\inf_{\alpha \in \set{F}_{U_k}} \E\left[\log\left(\E\left[|x_k-\alpha y_k|^2|U_k\right]\right)\right]\\
&\geq -\inf_{\alpha \in \set{F}_{U_k}} \log\left(\E\left[|x_k-\alpha y_k|^2\right]\right)\\
&\geq -\inf_{\alpha \in \stdset{C}} \log\left(\E\left[|x_k-\alpha y_k|^2\right]\right)\\
&= -\log\left(\inf_{\alpha \in \stdset{C}}\E\left[|x_k-\alpha y_k|^2\right]\right) = R_k^{\mathsf{UatF/hard}},
\end{align*}
where the first inequality follows from Jensen's inequality and the law of total expectation, the second inequality follows since constant functions are a subset of $\set{F}_{U_k}$, and where the last two equalities follow from the monotonicity of the logarithm and  by minimizing the convex quadratic form $\E\left[|x_k-\alpha y_k|^2\right]$ over $\alpha\in \stdset{C}$, as in the proof of Proposition~\ref{prop:coh}.

\subsection{Proof of Proposition~\ref{prop:stationarity}}\label{proof:stationarity}
Following the same arguments as in the proof of \cite{miretti2021team}[Lemma~6], we know that if $\vec{v}^\star_k$ attains the infimum, then it also solves Problem~\eqref{eq:stationarity} (and attains a finite MSE, since $0\leq \mathsf{MSE}_k(\vec{v}_k^\star)\leq \mathsf{MSE}_k(\vec{0}) = 1$). As an informal explanation of these arguments, we notice that the equations in \eqref{eq:stationarity} can be obtained by fixing $\vec{v}_{j,k}$ for all $j\neq l$ and by minimizing the MSE pointwise over $\vec{v}_{l,k}$ for each realization of $S_{l,k}$. This readily gives a set of necessary optimality conditions, reminiscent of the game theoretical notion of \textit{Nash equilibrium}. Note that all the expectations in \eqref{eq:stationarity} are well-defined (i.e., finite) since $\E[\|\vec{H}\vec{H}^\herm\|_\mathrm{F}^2]< \infty$, $\vec{p}\in \set{P}^K_{S_{l,k}}$, and $\mathsf{MSE}_k(\vec{v}_k)<\infty \implies \E[\|\vec{v}_k\|^2]<\infty$. For the converse statement, we notice that if $\vec{v}_k^\star$ solves Problem~\eqref{eq:stationarity} and attains a finite MSE, then it satisfies the sufficient optimality conditions in \cite[Theorem~2.6.4]{yukselbook}, and hence it attains the infimum. Furthermore, \cite[Theorem~2.6.4]{yukselbook} also states that the infimum is attained by a unique $\vec{v}_k^\star$. 

\subsection{Proof of Proposition~\ref{prop:ess}}\label{proof:ess}
We start from the equivalent characterization of $\vec{v}_k^\star$ in Proposition~\ref{prop:stationarity}. Specifically, if $\vec{v}_k^\star \in \set{V}_k$ attains $\inf_{\vec{v}_k\in \set{V}_k}\mathsf{MSE}_k(\vec{v}_k)$, then it is the unique solution to Problem~\eqref{eq:stationarity}. Hence, we can express each subvector $\vec{v}_{l,k}^\star$ of $\vec{v}_k^\star$ as
\begin{multline*}
\vec{v}_{l,k}^\star =\Big(\E[\vec{H}_l\vec{P}\vec{H}_l^\herm|S_{l,k}',U_k]+\vec{I}_N\Big)^{-1}\\
\Big(\E[\vec{H}_l|S_{l,k}',U_k]\vec{P}^{\frac{1}{2}}\vec{e}_k-\sum_{j\in \set{L}_k\backslash \{l\}}\E[\vec{H}_l\vec{P}\vec{H}_j^\herm\vec{v}_{j,k}^\star|S_{l,k}',U_k] \Big).
\end{multline*}
We notice that $\vec{v}_k^\star$ can be rewritten as $\vec{v}_{k}^\star=\vec{v}_{k}^{(U_k)}$, where we define $(\forall u_k\in \set{U}_k)$ the beamformers $\vec{v}_k^{(u_k)}$ with subvectors
\begin{multline*}
\vec{v}_{l,k}^{(u_k)} \eqdef \Big(\E[\vec{H}_l\vec{P}\vec{H}_l^\herm|S_{l,k}',U_k=u_k]+\vec{I}_N\Big)^{-1}\\
\Big(\E[\vec{H}_l|S_{l,k}',U_k=u_k]\vec{P}^{\frac{1}{2}}\vec{e}_k\\
-\sum_{j\in \set{L}_k\backslash \{l\}}\E[\vec{H}_l\vec{P}\vec{H}_j^\herm\vec{v}_{j,k}^{(u_k)}|S_{l,k}',U_k=u_k] \Big).
\end{multline*}
We further notice that, for every $u_k \in \set{U}_k$ such that $\E[\|\vec{H}\vec{H}^\herm\|_\mathrm{F}^2|U_k=u_k]< \infty$, $\vec{v}_k^{(u_k)}$ solves a modified version of Problem~\eqref{eq:stationarity} obtained by replacing $\E[\cdot]$ with $\E_{u_k}[\cdot]\eqdef \E[\cdot|U_k=u_k]$ everywhere, and hence it satisfies $\inf_{\vec{v}_k\in \set{V}_k}\mathsf{MSE}_k(\vec{v}_k|U_k=u_k)=\mathsf{MSE}_k(\vec{v}_k^{(u_k)}|U_k=u_k)$. Since $\E[\|\vec{H}\vec{H}^\herm\|_\mathrm{F}^2]< \infty$ implies $\E[\|\vec{H}\vec{H}^\herm\|_\mathrm{F}^2|U_k]< \infty$, we then obtain $\essinf{\vec{v}_k\in \set{V}_k}\mathsf{MSE}_k(\vec{v}_k|U_k) \geq \mathsf{MSE}_k(\vec{v}_k^{(U_k)}|U_k) = \mathsf{MSE}_k(\vec{v}_k^\star|U_k)$. Therefore, $\vec{v}_k^\star$ attains the essential infimum.

\subsection{Proof of Proposition~\ref{prop:local}}
\label{app:local}
The considered channel and CSI model satisfies the assumptions of Proposition~\ref{prop:distr}. Substituting \eqref{eq:LTMMSE} into the  the optimality conditions \eqref{eq:TMMSE} for a given UE $k\in\set{K}$, we obtain $(\forall l \in \set{L}_k)$
\begin{align*}
\vec{v}_{l,k} &=\vec{V}_{l,k}\left(\vec{e}_k-\sum_{j \in \set{L}_k\backslash  \{l\}} \vec{P}^{\frac{1}{2}}\E\left[\hat{\vec{H}}_{j}^{(k)\herm}\vec{V}_{j,k}\vec{c}_{j,k}\Big|S_{l,k}\right] \right) \\
&= \vec{V}_{l,k}\left(\vec{e}_{k} - \sum_{j \in \set{L}_k\backslash  \{l\}} \vec{\Pi}_{j,k}\vec{c}_{j,k} \right) = \vec{V}_{l,k} \vec{c}_{l,k},
\end{align*}
where the second equality follows from the independence between $S_{l,k}$ and $(\hat{\vec{H}}_j^{(k)},\vec{V}_{j,k})$ for $j\neq l$, and where the last equality follows since $(\forall l \in \set{L}_k)~\vec{c}_{l,k} + \sum_{j \in \set{L}_k\backslash  \{l\}} \vec{\Pi}_j\vec{c}_{j,k} = \vec{e}_k$ holds by construction. This last system of equations always admits a unique solution, as proven next.

We focus for simplicity on the case $\set{L}_k= \set{L}$, since the extension to arbitrary $\set{L}_k$ is immediate. We rewrite the system as $(\vec{M}_k+\vec{U}\vec{\Pi}_k) \vec{c}_k= \vec{U}\vec{e}_k$, where $\vec{c}_k:= \begin{bmatrix}
\vec{c}_{1,k}^\T & \ldots &\vec{c}_{L,k}^\T 
\end{bmatrix}^\T$, $\vec{\Pi}_k:= \begin{bmatrix}
\vec{\Pi}_{1,k} & \ldots &\vec{\Pi}_{L,k} 
\end{bmatrix}$, $\vec{U}:= \begin{bmatrix}
\vec{I}_K & \ldots &\vec{I}_K 
\end{bmatrix}^\T $, $\vec{M}_k:=\mathrm{diag}(\vec{I}_K-\vec{\Pi}_{1,k},\ldots,\vec{I}_K-\vec{\Pi}_{L,k})$. We shall prove that $\vec{M}_k+\vec{U}\vec{\Pi}_k$ is invertible. By the matrix inversion lemma (see, e.g., \cite[Appendix~C]{boyd2004convex}), $\vec{M}_k+\vec{U}\vec{\Pi}_k$ is invertible if both $\vec{M}_k$ and $\vec{I}_K+\vec{\Pi}_k\vec{M}^{-1}_k\vec{U}$ are invertible. Furthermore, standard arguments show that $(\forall l \in \set{L})~\vec{0}\preceq \vec{P}^{\frac{1}{2}}\hat{\vec{H}}_l^{(k)\herm}\vec{V}_{l,k} \preceq \vec{P}^{\frac{1}{2}}\hat{\vec{H}}_l^{(k)\herm}(\hat{\vec{H}}_l^{(k)}\vec{P}\hat{\vec{H}}_l^{(k)\herm} + \vec{I}_N)^{-1}\hat{\vec{H}}_l^{(k)}\vec{P}^{\frac{1}{2}} \prec \vec{I}_K$ holds almost surely, where we use $\preceq$ ($\prec$) to denote the partial ordering (strict partial ordering) with respect to the cone of Hermitian positive semidefine matrices as in \cite{boyd2004convex} (Loewner order). Therefore, $(\forall l \in \set{L})~\vec{0}\preceq \vec{\Pi}_{l,k} \prec \vec{I}_K$ holds. This in turn implies that $\vec{M}_k$ and $\vec{I}_K+\vec{\Pi}_k\vec{M}^{-1}_k\vec{U} = \vec{I}_K + \sum_{l\in \set{L}}\vec{\Pi}_{l,k}(\vec{I}-\vec{\Pi}_{l,k})^{-1}$ are Hermitian positive definite, hence invertible.
(The eigenvalues of $\vec{\Pi}_{l,k}(\vec{I}-\vec{\Pi}_{l,k})^{-1}$ take the form $\xi/(1-\xi)\geq 0$, where $0\leq\xi<1$ is an eigenvalue of $\vec{\Pi}_{l,k}$.)

\subsection{Proof of Proposition~\ref{prop:duality_hard}}\label{proof:duality_hard}
We first focus on the case $(\gamma_1,\ldots,\gamma_K)\in \stdset{R}_{++}^K$. If $(\vec{v}_1^{\star},\ldots,\vec{v}_K^{\star},\vec{p}^\mathsf{ul})$ solves Problem~\eqref{prob:feas_UatF}, then $\vec{D} > \vec{0}$ and hence $\vec{\Sigma}>\vec{0}$ must hold. Moreover,
by rearranging the uplink SINR constraints, we have that $\vec{p}^{\mathsf{ul}} = \vec{D}^{-1}\vec{\Gamma}\vec{B}^\T\vec{p}^{\mathsf{ul}}+\vec{D}^{-1}\vec{\Gamma}\vec{\Sigma}\vec{1}$, i.e., $\vec{p}^{\mathsf{ul}}$ is a fixed-point of the positive affine mapping $T^{\mathsf{ul}}: \stdset{R}_{+}^K \to \stdset{R}_{++}^K : \vec{p} \mapsto  \vec{D}^{-1}\vec{\Gamma}\vec{B}^\T\vec{p}+\vec{D}^{-1}\vec{\Gamma}\vec{\Sigma}\vec{1}$. From the properties of positive affine mappings, which are standard interference mappings \cite{yates95} \cite[Lemma 1]{cavalcante2019connections}, we know that the set of fixed-points of $T^{\mathsf{ul}}$ is either a singleton or the empty set. Hence, $\vec{p}^{\mathsf{ul}} = (\vec{D}\vec{\Gamma}^{-1}-\vec{B}^\T)^{-1}\vec{\Sigma}\vec{1}$. We further know that a fixed-point exists if and only if the spectral radius $\rho(\vec{D}^{-1}\vec{\Gamma}\vec{B}^\T)$ of $\vec{D}^{-1}\vec{\Gamma}\vec{B}^\T$ satisfies $\rho(\vec{D}^{-1}\vec{\Gamma}\vec{B}^\T)<1$ \cite[Ch. 2]{stanczak2009fundamentals} \cite[Prop. 4]{cavalcante2019connections}. Using basic linear algebra, we then have that $\rho(\vec{D}^{-1}\vec{\Gamma}\vec{B}^\T)= \rho(\vec{D}^{-1}\vec{\Gamma}\vec{B})<1$. The latter inequality implies the existence and uniqueness of $\vec{p}^{\mathsf{dl}}\in \stdset{R}_{++}^K$ such that $\vec{p}^{\mathsf{dl}} = \vec{D}^{-1}\vec{\Gamma}\vec{B}\vec{p}^{\mathsf{dl}}+\vec{D}^{-1}\vec{\Gamma}\vec{1}$, and hence that $(\vec{v}_1^\star,\ldots,\vec{v}_K^\star,\vec{p}^{\mathsf{dl}})$ with $\vec{p}^{\mathsf{dl}} = (\vec{D}\vec{\Gamma}^{-1}-\vec{B})^{-1}\vec{1}$ solves Problem~\eqref{prob:feas_hard}. We then have the direct calculation
\begin{align*}
\vec{1}^{\T}\vec{\Sigma}\vec{p}^\mathsf{dl} &= \vec{1}^{\T}\vec{\Sigma}(\vec{D}\vec{\Gamma}^{-1}-\vec{B})^{-1}\vec{1} \\
&= \left(\vec{1}^{\T}\vec{\Sigma}(\vec{D}\vec{\Gamma}^{-1}-\vec{B})^{-1}\vec{1}\right)^\T \\
&= \vec{1}^{\T}(\vec{D}\vec{\Gamma}^{-1}-\vec{B}^\T)^{-1}\vec{\Sigma}\vec{1} = \vec{1}^{\T}\vec{p}^\mathsf{ul}.
\end{align*}
The more general case $(\gamma_1,\ldots,\gamma_K)\in \stdset{R}_{+}^K$ follows by applying the same arguments as above to reduced dimension feasibility problems obtained by removing all trivial constraints $\gamma_k=0$ and corresponding variables $(\vec{v}_k,p_k)$. The converse statement follows by reversing all the above arguments.

\subsection{Proof of Proposition~\ref{prop:hard_pertx_aug}}\label{app:proof_hard_pertx_aug}
We first give an alternative version of \cite[Proposition~2]{miretti2024duality} and \cite[Proposition~3]{miretti2024duality} that does not require strict feasibility of Problem~\eqref{eq:hard_pertx}. In what follows we tacitly assume that $\set{V}_k$ is restricted to elements with finite second-order moment, to exploit Hilbert space arguments. We remark that this incurs no loss of generality, since the objective of Problem~\eqref{eq:hard_pertx} is finite by assumption.

\begin{lemma}\label{lem:partial_dual}
Given the subset $\set{V}_{\vec{\gamma}} \eqdef \{(\vec{v}_1,\ldots,\vec{v}_K) \in \set{V}_1\times \ldots \times \set{V}_K ~|~ (\forall k \in \set{K})~\mathsf{SINR}_k^{\mathsf{hard}}(\vec{v}_1,\ldots,\vec{v}_K) \geq \gamma_k\}$ of beamformers satisfying the SINR constraints in Problem~\eqref{eq:hard_pertx}, define the partial dual function $(\forall \vec{\lambda}\in \stdset{R}_+^L)$
\begin{equation*}
 d(\vec{\lambda}) \eqdef \inf_{\substack{(\forall k \in \set{K})\\ \vec{v}_k\in \set{V}_k}} \sum_{k=1}^K \E[\|\vec{v}_k\|_{\vec{1}+\vec{\lambda}}^2] +\delta(\vec{v}_1,\ldots,\vec{v}_K) -\sum_{l=1}^L \lambda_l P_l,
\end{equation*}
where $\delta(\vec{v}_1,\ldots,\vec{v}_K) = 0$ if $(\vec{v}_1,\ldots,\vec{v}_K) \in \set{V}_{\vec{\gamma}}$, and $\delta(\vec{v}_1,\ldots,\vec{v}_K) = +\infty$ otherwise. Strong duality holds, i.e., $\sup_{\vec{\lambda}\in \stdset{R}_{+}^L}d(\vec{\lambda})=p^\star$, where $p^\star$ denotes the optimum of Problem~\eqref{eq:hard_pertx}. Furthermore, there exist Lagrangian multipliers $\vec{\lambda}^\star$ such that $d(\vec{\lambda}^\star)=p^\star$. Moreover, Problem~\eqref{eq:hard_pertx} admits a solution $(\vec{v}_1^\star,\ldots,\vec{v}_K^\star)$.
\end{lemma}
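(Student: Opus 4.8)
The plan is to reduce the statement to the strictly feasible case, which is exactly \cite[Proposition~2]{miretti2024duality} and \cite[Proposition~3]{miretti2024duality}, by relaxing the per-transmitter budgets and then passing to the limit. For $\epsilon>0$ let $\mathsf{P}_\epsilon$ be the problem obtained from \eqref{eq:hard_pertx} by replacing each $P_l$ with $P_l+\epsilon$. Since \eqref{eq:hard_pertx} is feasible, any of its feasible points satisfies the SINR constraints while leaving a margin of at least $\epsilon$ in every power constraint of $\mathsf{P}_\epsilon$, and since the power constraints are the only ones dualized, this is a Slater point for $\mathsf{P}_\epsilon$. Hence \cite[Proposition~2]{miretti2024duality} and \cite[Proposition~3]{miretti2024duality} apply directly to $\mathsf{P}_\epsilon$: writing $p^\star_\epsilon$ for its optimum and $d_\epsilon$ for its partial dual, one has $d_\epsilon(\vec{\lambda})=d(\vec{\lambda})-\epsilon\sum_l\lambda_l$ (the only $\epsilon$-dependence is this linear term), a dual maximizer $\vec{\lambda}_\epsilon\in\stdset{R}_+^L$ with $p^\star_\epsilon=d_\epsilon(\vec{\lambda}_\epsilon)=\sup_{\vec{\lambda}\in\stdset{R}_+^L}d_\epsilon(\vec{\lambda})$, and a primal minimizer of $\mathsf{P}_\epsilon$.

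The single analytic ingredient I would reuse is weak compactness in the Hilbert space of finite-second-moment beamformers (the restriction flagged before the statement). After fixing beamformer phases so that $\E[\vec{h}_k^\herm\vec{v}_k]\ge 0$ (without loss of generality, since the objective, the power constraints and $|\E[\vec{h}_k^\herm\vec{v}_k]|$ are phase invariant), each constraint $\mathsf{SINR}_k^{\mathsf{hard}}\ge\gamma_k$ becomes a convex (second-order-cone) inequality, so $\set{V}_{\vec{\gamma}}$ is convex; moreover $(\vec{v}_1,\ldots,\vec{v}_K)\mapsto\E[\vec{h}_k^\herm\vec{v}_k]$ is weakly continuous, the terms $\V(\vec{h}_k^\herm\vec{v}_k)+\sum_{j\neq k}\E[|\vec{h}_k^\herm\vec{v}_j|^2]$ and the per-transmitter powers $\sum_k\E[\|\vec{v}_{l,k}\|^2]$ are convex and weakly lower semicontinuous, and the objective is coercive and weakly lower semicontinuous. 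Consequently $\set{V}_{\vec{\gamma}}$ and the feasible set of \eqref{eq:hard_pertx} are weakly closed. These are precisely the properties already underlying \cite{miretti2024duality}, now used on the limit problem rather than a Slater-regular one.

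Next I would let $\epsilon_n\downarrow 0$. The primal minimizers of $\mathsf{P}_{\epsilon_n}$ have sum power $p^\star_{\epsilon_n}\le p^\star$, hence are bounded, so along a weakly convergent subsequence their limit is, by the weak closedness above, feasible for \eqref{eq:hard_pertx} with sum power at most $\liminf_n p^\star_{\epsilon_n}\le p^\star$; this makes it a primal minimizer of \eqref{eq:hard_pertx} (primal attainment) and forces $p^\star_{\epsilon_n}\to p^\star$. Combining with $p^\star_\epsilon=\sup_{\vec{\lambda}}\bigl(d(\vec{\lambda})-\epsilon\sum_l\lambda_l\bigr)\le\sup_{\vec{\lambda}}d(\vec{\lambda})\le p^\star$, where the last inequality is weak duality for \eqref{eq:hard_pertx}, and letting $\epsilon\to 0$, one gets $\sup_{\vec{\lambda}\in\stdset{R}_+^L}d(\vec{\lambda})=p^\star$, i.e., strong duality.

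The hard part will be dual attainment. Since $d$ is an infimum of functions affine in $\vec{\lambda}$, it is concave, upper semicontinuous, and finite on $\stdset{R}_+^L$, so it suffices to show that $\{\vec{\lambda}_\epsilon\}_{0<\epsilon\le 1}$ has a bounded subsequence: then $d(\vec{\lambda}_\epsilon)=p^\star_\epsilon+\epsilon\sum_l\lambda_{\epsilon,l}\ge p^\star_\epsilon\to p^\star$ and upper semicontinuity give $d(\vec{\lambda}^\star)\ge p^\star$ for the limit $\vec{\lambda}^\star$, hence equality by weak duality. The obstacle is that the textbook multiplier bound obtained from the Slater point of $\mathsf{P}_\epsilon$ only yields $\sum_l\lambda_{\epsilon,l}=O(1/\epsilon)$, because that Slater margin is itself $O(\epsilon)$; equivalently, when \eqref{eq:hard_pertx} is not strictly feasible, $\vec{0}$ is a boundary point of the effective domain of the value function $\vec{u}\mapsto v(\vec{u})$ associated with the budget perturbations $P_l\mapsto P_l+u_l$, and dual attainment is exactly subdifferentiability of $v$ at $\vec{u}=\vec{0}$, which does not come for free there. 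What is available is that $v$ is convex, nonincreasing, finite on a set containing $\vec{0}+\stdset{R}_+^L$, and — by the weak-compactness argument of the previous paragraph applied to minimizing sequences of the perturbed problems — lower semicontinuous at $\vec{0}$; it then remains only to exclude an infinite directional derivative $v'(\vec{0};\vec{e}_l)=-\infty$, i.e., an unbounded marginal value of transmit power at some access point, which I would rule out from the structure of the SINR functionals and the model's moment regularity. I expect this boundedness (equivalently, finiteness of the marginal value of power) to be the technical core; the remainder is the routine limiting argument above.
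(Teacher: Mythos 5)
Your strong-duality and primal-attainment steps are sound, and they follow a genuinely different route from the paper: you restore Slater regularity by enlarging the budgets $P_l\mapsto P_l+\epsilon$, invoke the strictly feasible results of \cite{miretti2024duality} for each perturbed problem, and pass to the limit $\epsilon\downarrow 0$ using weak compactness and weak lower semicontinuity (after the same phase-rotation device the paper uses, i.e., replacing $|\E[\vec{h}_k^\herm\vec{v}_k]|$ by $\Re(\E[\vec{h}_k^\herm\vec{v}_k])$ so that the SINR-feasible set has a closed convex representative). This correctly delivers $\sup_{\vec{\lambda}\in\stdset{R}_+^L}d(\vec{\lambda})=p^\star$ and the existence of a primal solution, with $p^\star_\epsilon\to p^\star$. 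The paper, in contrast, does not perturb at all: it convexifies the SINR constraints into $f_k\le 0$ with $f_k=\sqrt{\sum_j\E[|\vec{h}_k^\herm\vec{v}_j|^2]+1}-\nu_k\Re(\E[\vec{h}_k^\herm\vec{v}_k])$, forms the convex problem with the indicator of the closed convex subset $\set{V}_{\vec{\gamma}}'\subseteq\set{V}_{\vec{\gamma}}$, obtains strong duality and a dual maximizer for that problem from the constraint qualification of \cite{miretti2024duality} (only the power constraints being dualized), shows $d=d'$ and $p^\star=p'^\star$, and gets primal existence from the Hilbert projection theorem.

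The genuine gap is dual attainment, i.e., the existence of $\vec{\lambda}^\star$ with $d(\vec{\lambda}^\star)=p^\star$ --- which is exactly the part of the lemma that goes beyond the strictly feasible case already settled in \cite{miretti2024duality}. You correctly note that the multipliers $\vec{\lambda}_\epsilon$ inherited from the perturbed problems are only $O(1/\epsilon)$-bounded, and that attainment is equivalent to subdifferentiability of the budget-perturbed value function $v$ at $\vec{u}=\vec{0}$, i.e., to excluding $v'(\vec{0};\vec{e}_l)=-\infty$; but you then defer this to ``the structure of the SINR functionals and the model's moment regularity'' without any argument. This is not a routine detail: for convex programs that are feasible but not strictly feasible, the marginal value of a resource can indeed be infinite and dual maximizers need not exist, so a problem-specific argument is indispensable, and nothing in the stated moment assumptions obviously excludes an instance in which every SINR-feasible point saturates some per-transmitter budget and $v$ is infinitely steep at $\vec{0}$. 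The paper circumvents precisely this difficulty by establishing a dual maximizer for the convexified problem and transporting it back via $d=d'$ and $p^\star=p'^\star$; if you wish to keep the perturbation route, you must actually prove a uniform bound on $\{\vec{\lambda}_\epsilon\}$ (equivalently, exhibit a finite subgradient of $v$ at $\vec{0}$), which is the technical core your proposal leaves open.
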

\begin{proof}
For all $(\vec{v}_1,\ldots,\vec{v}_K) \in \set{V}_1\times \ldots \times \set{V}_K$ and $k \in \set{K}$, we define the functions
 $f_k(\vec{v}_1,\ldots,\vec{v}_K) \eqdef \sqrt{\textstyle \sum_{j=1}^K\E[|\vec{h}_k^\herm\vec{v}_j|^2]+1} - \nu_k \Re\left(\E[\vec{h}_k^\herm\vec{v}_k]\right)$, where $\nu_k\eqdef \sqrt{1+1/\gamma_k}$, obtained by rearranging the SINR constraints and by replacing $|\E[\vec{h}_k^\herm\vec{v}_k]|$ with $\Re\left(\E[\vec{h}_k^\herm\vec{v}_k]\right)$. Note that for all $(\vec{v}_1,\ldots,\vec{v}_K) \in \set{V}_1\times \ldots \times \set{V}_K$ and $k \in \set{K}$ $f_k(\vec{v}_1,\ldots,\vec{v}_K) \leq 0 \implies \mathsf{SINR}_k^{\mathsf{hard}}(\vec{v}_1,\ldots,\vec{v}_K) \geq \gamma_k$, due to the simple property $(\forall a \in \stdset{C})~\Re(a) \leq |a|$. The functions $f_1,\ldots,f_K$ are continuous convex functions (see, e.g., proof of \cite[Lemma~1]{miretti2024duality}). We use these functions to construct the closed convex subset $\set{V}_{\vec{\gamma}}' \eqdef \{(\vec{v}_1,\ldots,\vec{v}_K) \in \set{V}_1\times \ldots \times \set{V}_K ~|~ (\forall k \in \set{K})~f_k(\vec{v}_1,\ldots,\vec{v}_K) \leq 0\}$ of $\set{V}_{\vec{\gamma}}$. We then consider the following convex version of Problem~\eqref{eq:hard_pertx} 
\begin{equation}\label{prob:indicator}
\begin{aligned}
\underset{(\forall k\in \set{K})~\vec{v}_k\in \set{V}_k} {\text{minimize}} \quad &  \sum_{k=1}^K\E[\|\vec{v}_{k}\|^2] + \delta'(\vec{v}_1,\ldots,\vec{v}_K)\\
\text{subject to } \; \quad & (\forall l \in \set{L})~\sum_{k=1}^K\E[\|\vec{v}_{l,k}\|^2]\leq P_l,
\end{aligned}
\end{equation}
where $\delta'(\vec{v}_1,\ldots,\vec{v}_K) = 0$ if $(\vec{v}_1,\ldots,\vec{v}_K)\in \set{V}_{\vec{\gamma}}'$, and $\delta'(\vec{v}_1,\ldots,\vec{v}_K) = +\infty$ otherwise. We remark that $\delta'$ is a proper convex function, since it is the indicator function of a closed convex set \cite{bauschke2011convex}. We also consider the alternative dual function $(\forall \vec{\lambda}\in \stdset{R}_+^L)$
\begin{equation*}
d'(\vec{\lambda}) \eqdef \inf_{\substack{(\forall k \in \set{K})\\ \vec{v}_k\in \set{V}_k}} \sum_{k=1}^K \E[\|\vec{v}_k\|_{\vec{1}+\vec{\lambda}}^2] +\delta'(\vec{v}_1,\ldots,\vec{v}_K) -\sum_{l=1}^L \lambda_l P_l.
\end{equation*}
Denote by $p'^\star$ the optimum of Problem \eqref{prob:indicator}. Since the objective and all constraints of Problem \eqref{prob:indicator} are proper convex functions, and since Slater's condition is trivially satisfied due to the nonzero power constraints, strong duality holds \cite{bauschke2011convex}\cite[Proposition~1]{miretti2024duality}, i.e., $\sup_{\vec{\lambda}\in \stdset{R}_{+}^L}d'(\vec{\lambda})=p'^\star$. Furthermore, there exists $\vec{\lambda}^\star\in \stdset{R}_{+}^L$ such that $d'(\vec{\lambda}^\star)=\sup_{\vec{\lambda}\in \stdset{R}_{+}^L}d'(\vec{\lambda})$ \cite{bauschke2011convex}\cite[Proposition~1]{miretti2024duality}. 

We then readily have $(\forall \vec{\lambda} \in \stdset{R}_+^L)$ $d(\vec{\lambda}) \leq p^\star \leq p'^\star = d'(\vec{\lambda}^\star)$,
where the first inequality follows from weak duality \cite{bauschke2011convex}\cite[Proposition~1]{miretti2024duality} and by embedding the SINR constraints in the objective function via the indicator function $\delta$ of $\set{V}_{\vec{\gamma}}$ as in Problem~\eqref{prob:indicator}, and where the second inequality follows since $\set{V}_{\vec{\gamma}}'\subseteq \set{V}_{\vec{\gamma}}$. Following the same arguments as in \cite[Lemma~4]{miretti2024duality}, it can be finally shown that $(\forall \vec{\lambda}\in \stdset{R}_+^L) ~d'(\vec{\lambda}) = d(\vec{\lambda})$, and hence that $d(\vec{\lambda}^\star) = p^\star = p'^\star = d'(\vec{\lambda}^\star)$.

For the last part of the statement, we remark that feasibility of Problem~\eqref{eq:hard_pertx} implies feasibility of Problem~\eqref{prob:indicator}. Then, by interpreting Problem~\eqref{prob:indicator} as a projection onto a nonempty closed convex set, existence and uniqueness of a solution $(\vec{v}_1',\ldots,\vec{v}_K')$ to Problem~\eqref{prob:indicator}  follows by the Hilbert projection theorem, as in the proof of \cite[Lemma~5]{miretti2024duality}. Since $p^\star = p'^\star$, then $(\vec{v}_1',\ldots,\vec{v}_K')$ is also a solution to Problem~\eqref{eq:hard_pertx}.
\end{proof}

It remains to show that a primal solution can be recovered from a dual solution $\vec{\lambda}^\star$ as defined in Lemma~\ref{lem:partial_dual}. This can be done by following the same arguments as in the proof of \cite[Proposition~4]{miretti2024duality}, which do not require strict feasibility.


\bibliographystyle{IEEEbib}
\bibliography{IEEEabrv,refs}

\end{document}